\theoremstyle{definition}
\newtheorem{definition}{Definition}
\newtheorem{property}{Property}
\newtheorem{theorem}{Theorem}
\newtheorem{proposition}{Proposition}
\newtheorem{assum}{Assumption}
\def\BibTeX{{\rm B\kern-.05em{\sc i\kern-.025em b}\kern-.08em
    T\kern-.1667em\lower.7ex\hbox{E}\kern-.125emX}}
\title{\LARGE \bf
	An incremental input-to-state stability condition for a generic class of recurrent neural networks}
\author{William D'Amico, Alessio La Bella, Marcello Farina%
\thanks{The authors are with Dipartimento di Elettronica, Informazione e Bioingegneria,
	Politecnico di Milano, Via Ponzio 34/5, 20133, Milano, Italy (e-mail: \text{william.damico@polimi.it},  \text{alessio.labella@polimi.it}, and \text{marcello.farina@polimi.it}). {Corresponding author: William D'Amico}.}
}
\begin{document}

\maketitle
\thispagestyle{empty}
\pagestyle{empty}

\begin{abstract}
This paper proposes a novel sufficient condition for the incremental input-to-state stability of a generic class of recurrent neural networks (RNNs). The established condition is compared with others available in the literature, showing to be less conservative. 
Moreover, it can be applied for the design of incremental input-to-state stable RNN-based control systems, resulting in a linear matrix inequality constraint for some specific RNN architectures. The formulation of nonlinear observers for the considered system class, as well as the design of control schemes with explicit integral action, are also investigated. The theoretical results are validated through simulation on a referenced nonlinear system.
\end{abstract}

\begin{IEEEkeywords}
	Neural networks, linear matrix inequalities, nonlinear control systems, stability of nonlinear systems.
\end{IEEEkeywords}

Neural networks (NNs) have gained interest in many engineering fields, given the ever-growing availability of large amounts of data, e.g., collected measurements from plants, and their significant ability to reproduce nonlinear dynamics~\cite{10.1007/11840817_66,hornik1989multilayer}. In particular, NNs have shown to be particularly suited for control applications, as they can be used not only to identify unknown systems, but also to directly design feedback controllers from data~\cite{hunt1992neural,d2022recurrent}.
Among existing NN architectures, recurrent neural networks (RNNs) are typically adopted for controlling dynamical systems, since they are inherently characterized by the presence of state variables~\cite{bonassi2022recurrent}. 

Despite the increasing popularity of RNNs, their theoretical properties have been rarely analysed. As nonlinear dynamical systems, it is in fact fundamental to characterize conditions that guarantee the stability of their motions, especially when RNNs are part of control systems~\cite{barabanov2002stability,liu2021overview}. In this context \textit{incremental input-to-state stability} ($\delta$ISS)~\cite{angeli2002lyapunov,tran2016incremental} plays a crucial role. This property entails that, asymptotically, the state trajectories are solely determined by the applied inputs and not by their initial conditions~\cite{bayer2013discrete}. Thus, the dynamics of a $\delta$ISS RNN is independent of its initialization. The $\delta$ISS property also enables the design of trivial observers for the RNN states: the latter, indeed, can be asymptotically estimated just exploiting the knowledge of the applied inputs. Finally, note that the $\delta$ISS implies other common stability properties, e.g., global asymptotic stability (GAS) of the equilibria and input-to-state stability (ISS)~\cite{bayer2013discrete,angeli2002lyapunov}.

Motivated by this, the paper presents a novel $\delta$ISS sufficient condition for a generic class of RNN architectures. The proposed condition is applicable to control systems, and in particular for the analysis and design of RNNs-based feedback controllers and feedforward compensators. 

\subsection{State of the art and contribution}
Despite the large popularity and potentialities of RNNs in control applications, relatively few stability results have been discussed in the literature. Sufficient conditions ensuring stability-related properties for RNNs are presented in~\cite{miller2018stable} and in~\cite{stipanovic2021stability}, the latter focusing on a specific class of RNNs, i.e., gated recurrent units (GRU).
A stability condition for a generic class of RNNs is discussed in~\cite{hu2002global}, considering however the case with constant inputs.
Note that the above-mentioned contributions address stability properties weaker than the $\delta$ISS (e.g., the GAS property), which do not consider the effect of inputs \cite{angeli2002lyapunov}. This has motivated other research studies to focus on conditions guaranteeing $\delta$ISS. The latter, interestingly, can be directly enforced in the data-based RNN training phase, e.g., as discussed in~\cite{bonassi2021stability, bonassi2021stabilitygru}. However, these works focus on open-loop RNNs, and they do not address the design of stabilizing RNN-based feedback controllers.\\
Regarding control systems, in~\cite{yin2021stability} the stability is analysed in case of feedforward NN (FFNN) controllers and assuming a linear controlled system with uncertainties. Design conditions for FFNN controllers are also provided in~\cite{vance2008discrete}, considering specific classes of second-order nonlinear systems under control. 
Also model predictive control (MPC) has been investigated as a method for the design of efficient controllers applicable to systems described by specific classes of RNNs. For instance, the ISS of a MPC-controlled neural nonlinear autoregressive exogenous (NNARX) system is discussed in~\cite{seel2021neural}. Also, MPC regulation strategies for other RNN architectures are presented in~\cite{armenio2019model} and \cite{terzi2021learning}, ensuring closed-loop stability if the RNN-based model of the controlled system enjoys the $\delta$ISS property.

In this work we first derive a novel $\delta$ISS condition for a generic class of discrete-time nonlinear systems, which includes the one analysed in~\cite{sontag1992neural}, as well as different common RNN classes (e.g., echo state networks and NNARX). We prove that the proposed condition is less conservative than existing ones established in the past years for RNNs, e.g., \cite{bonassi2021stability,armenio2019model,hu2002global}. The established results turn out to be particularly suited for the design of feedback controllers and feedforward compensators. In particular, they allow us to enforce the $\delta$ISS property on control systems, also in case the controlled system does not enjoy the same property. Moreover, we show that, if specific RNN-based control architectures are considered, the design problem translates to a linear matrix inequality (LMI) problem, efficiently solvable by common solvers.

\subsection{Paper outline}
The paper is structured as follows. In Section \ref{sec:th}, the equations of the considered generic class of RNNs are introduced. The novel sufficient condition $\delta$ISS is stated in Section \ref{sec:th2} and it is compared with other existing conditions in the literature. In Section \ref{sec:inter}, the $\delta$ISS properties of feedforward and feedback interconnected RNNs are investigated, whereas Section \ref{sec:lmi} discusses in details the controller design with $\delta$ISS guarantees. Section \ref{sec:sim} shows the application of the theoretical results in this paper to a simulation example, whereas conclusions are drawn in Section~\ref{sec:conclusions}.

\subsection{Notation and basic definitions} Given a matrix $A$, its transpose is $A^T$, and the transpose of its inverse is $A^{-T}$. The entry in the $i$-th row and $j$-th column of a matrix $A$ is denoted as $a_{ij}$. $|A|$ denotes a matrix whose entries are $|a_{ij}|$, for all $i,j$.
The $i$-th entry of a vector $v$ is indicated as $v_i$. Given a symmetric matrix $P$, we use $P\succeq0$, $P\succ0$, $P\preceq0$, and $P\prec0$ to indicate that it is positive semidefinite, positive definite, negative semidefinite, and negative definite, respectively. $\lambda_{min}(P)$ and $\lambda_{max}(P)$ denote the minimum and maximum eigenvalues of a symmetric matrix $P$, respectively. $0_{n,m}$ denotes a zero matrix with $n$ rows and $m$ columns and $I_n$ is the identity matrix of dimension $n$. Given a sequence of square matrices $A_1,A_2,\dots,A_n$, $D=$ diag$(A_1,A_2,\dots,A_n)$ is a block diagonal matrix having $A_1,A_2,\dots,A_n$ as sub-matrices on the main-diagonal blocks. Moreover, $\|v\|=\sqrt{v^Tv}$ denotes the 2-norm of a column vector $v$ and $\|v\|_Q=\sqrt{v^TQv}$ denotes the weighted Euclidean norm of $v$, where $Q$ is a positive definite matrix. Given a sequence $\vec{u}=u(0),u(1),\dots$, we define its infinity norm as $\|\vec{u}\|_{\infty}=\sup_{k\in\mathbb{N}}\|u(k)\|$. Also, $id_{n}(\cdot)$ denotes a column vector of dimension $n$ with all elements equal to the identity function $id(\cdot)$. We introduce the following definition.

\begin{definition}[\!\!\cite{cao2004absolute}]
	A real function $g:\mathbb{R}\to \mathbb{R}$ is called globally Lipschitz continuous if there exist a constant $L_p\geq 0$ such that, for any $x,y\in\mathbb{R}$, it holds that
	\begin{equation*}
		|g(x)-g(y)|\leq L_p|x-y|\,.
	\end{equation*}
\end{definition}
The following property will be used later in the paper.
\begin{property}[\!\!\cite{armenio2019model}]
	\label{Prop1}
	Given two vectors $a, b\in\mathbb{R}^n$, it holds that $\|a+b\|^2\leq(1+\tau^2)\|a\|^2+\left(1+\frac{1}{\tau^2}\right)\|b\|^2$ for any scalar $\tau\neq0$.
\end{property}
We now consider a general discrete-time nonlinear system expressed as
\begin{equation}
	\label{DiscreteSys}
	x(k+1)=f^o(x(k),u(k))\,,
\end{equation}
where $f^o:\mathbb{X}\times\mathbb{U}\to \mathbb{X}$, $\mathbb{X}\subseteq\mathbb{R}^{n}$, $\mathbb{U}\subseteq\mathbb{R}^{m}$, $0_{n,1}\in\mathbb{X}$, and $0_{m,1}\in\mathbb{U}$. Moreover, $f^o(\cdot)$ is such that $f^o(0_{n,1},0_{m,1})=0_{n,1}$,
$k\in\mathbb{Z}_{\geq0}$ is the discrete-time index, $x(k)\in\mathbb{X}$ is the state of the system and $u(k)\in\mathbb{U}$ is the exogenous variable. The set of admissible input sequences $\vec{u}$ is denoted by $\mathcal{U}$. We indicate with $x(k,x_{0},\vec{u})$ the solution to the system \eqref{DiscreteSys} at time $k$ starting from the initial state $x_{0}\in\mathbb{X}$ with input sequence $\vec{u}\in\mathcal{U}$.\\
Now, we recall some useful notions for the following (see~\cite{bayer2013discrete}).

\begin{definition}[$\mathcal{K}$ function\cite{bayer2013discrete}]
	A continuous function \mbox{$\alpha: \mathbb{R}_{\geq0}\to\mathbb{R}_{\geq0}$} is a class $\mathcal{K}$ function if $\alpha(s)>0$ for all $s>0$, it is strictly increasing, and $\alpha(0)=0$.
\end{definition}

\begin{definition}[$\mathcal{K}_{\infty}$ function\cite{bayer2013discrete}]
	A continuous function \mbox{$\alpha: \mathbb{R}_{\geq0}\to\mathbb{R}_{\geq0}$} is a class $\mathcal{K}_{\infty}$ function if it is a class $\mathcal{K}$ function and $\alpha(s)\to\infty$ for $s\to\infty$.
\end{definition}

\begin{definition}[$\mathcal{KL}$ function\cite{bayer2013discrete}]
	A continuous function \mbox{$\beta: \mathbb{R}_{\geq0}\times\mathbb{Z}_{\geq0}\to\mathbb{R}_{\geq0}$} is a class $\mathcal{KL}$ function if $\beta(s,k)$ is a class $\mathcal{K}$ function with respect to $s$ for all $k$, it is strictly decreasing in $k$ for all $s>0$, and $\beta(s,k)\to0$ as $k\to\infty$ for all $s>0$.		
\end{definition}

\begin{definition}[$\delta$ISS\cite{bayer2013discrete}]
	\label{deltaISSdef}
	System \eqref{DiscreteSys} is called \textit{incrementally input-to-state stable} if there exists a function $\beta\in\mathcal{KL}$ and a function $\gamma\in \mathcal{K}_{\infty}$ such that for any $k\in\mathbb{Z}_{\geq0}$, any initial states $x_{01},x_{02}\in \mathbb{X}$, and any couple of input sequences $\vec{u}_{1},\vec{u}_{2}\in\mathcal{U}$, it holds that
	\begin{align*}
		\|x(k,x_{01},\vec{u}_{1})-x(k,x_{02},\vec{u}_{2})\|\leq\beta(\|x_{01}&-x_{02}\|,k)+\gamma(\|\vec{u}_{1}-\vec{u}_{2}\|_{\infty})\,,
	\end{align*}
	where $x(k, x_{0i}, \vec{u}_{i})$, $i=1,2$, is the state of system \eqref{DiscreteSys} at time step $k$, computed from the initial condition $x_{0i}\in\mathbb{X}$ and with the input trajectory $\vec{u}_{i}\in\mathcal{U}$.
\end{definition}

\begin{definition}[Dissipation-form $\delta$ISS Lyapunov function\cite{tran2016incremental}]
	A function \mbox{$V:\mathbb{X}\times\mathbb{X}\to\mathbb{R}_{\geq0}$} is called a dissipation-form $\delta$ISS-Lyapunov function for \eqref{DiscreteSys}, if there exist functions $\xi_{1},\xi_{2},\xi\in\mathcal{K}_{\infty}$ and $\sigma\in\mathcal{K}$ so that, for all $x_{1},x_{2}\in\mathbb{X}$ and $u_{1},u_{2}\in\mathbb{U}$, it holds that
	
	\begin{equation}
		\label{firstCond1}
		\xi_{1}(\|x_{1}-x_{2}\|)\leq V(x_1,x_2)\leq\xi_{2}(\|x_{1}-x_{2}\|)\,,
	\end{equation}
	\begin{equation}
		\label{secondCond1}
		\begin{split}
			V(f(x_{1},u_{1}),f(x_{2},u_{2}))-V(x_{1},x_{2}) \leq-\,\xi(\|x_{1}-x_{2}\|)+\,\sigma(\|u_{1}-u_{2}\|)\,.&
		\end{split}
	\end{equation}
\end{definition}

\begin{theorem}[\!\!\cite{tran2016incremental}]
	\label{deltaISS NC}
	If system \eqref{DiscreteSys} admits a dissipation-form $\delta$ISS Lyapunov function, then it is $\delta$ISS.
\end{theorem}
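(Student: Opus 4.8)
The plan is to track the Lyapunov function along two trajectories and reduce the problem to a scalar comparison argument. Fix initial states $x_{01},x_{02}\in\mathbb{X}$ and input sequences $\vec{u}_1,\vec{u}_2\in\mathcal{U}$, write $x_i(k)=x(k,x_{0i},\vec{u}_i)$ for $i=1,2$, and define the scalar sequence $v(k)=V(x_1(k),x_2(k))$. Applying the dissipation inequality \eqref{secondCond1} along the two trajectories and using the lower bound in \eqref{firstCond1} to replace $\|x_1(k)-x_2(k)\|$ by $\xi_2^{-1}(v(k))$ (legitimate since $\xi$ is increasing and $v(k)\le\xi_2(\|x_1(k)-x_2(k)\|)$ gives $\|x_1(k)-x_2(k)\|\ge\xi_2^{-1}(v(k))$), I obtain a recursion of the form $v(k+1)\le v(k)-\alpha(v(k))+w$, where $\alpha:=\xi\circ\xi_2^{-1}\in\mathcal{K}_\infty$ and $w:=\sigma(\|\vec{u}_1-\vec{u}_2\|_\infty)$ is a constant. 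Here I exploit that $\sigma$ is increasing and $\|u_1(k)-u_2(k)\|\le\|\vec{u}_1-\vec{u}_2\|_\infty$ for every $k$.

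Next I would analyze this scalar recursion. After shrinking $\alpha$ if necessary so that $\alpha(s)\le s$ (to keep the comparison map $s\mapsto s-\alpha(s)$ nonnegative and nondecreasing), I would split the analysis relative to the threshold $\ell(w):=\alpha^{-1}(2w)$. On $\{v(k)\ge \ell(w)\}$ the input term is dominated, $w\le\tfrac12\alpha(v(k))$, so $v(k+1)\le v(k)-\tfrac12\alpha(v(k))$, a pure contraction; this yields a bound $v(k)\le\beta_V(v(0),k)$ using the standard fact that such a strictly decreasing recursion admits a $\mathcal{KL}$ majorant built from $\tfrac12\alpha$. A short separate computation shows that the sublevel set $\{v\le \gamma_V(w)\}$, with $\gamma_V(w):=\alpha^{-1}(2w)+w\in\mathcal{K}_\infty$, is forward invariant, so once the trajectory enters the small-$v$ regime it remains bounded by $\gamma_V(w)$. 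Combining the two phases gives $v(k)\le\beta_V(v(0),k)+\gamma_V(w)$ for all $k$.

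Finally I would translate this estimate back to the state norm using \eqref{firstCond1}. From $\|x_1(k)-x_2(k)\|\le\xi_1^{-1}(v(k))$, the bound $v(0)\le\xi_2(\|x_{01}-x_{02}\|)$, the definition of $w$, and the weak triangle inequality $\xi_1^{-1}(a+b)\le\xi_1^{-1}(2a)+\xi_1^{-1}(2b)$, I would collect the initial-condition terms into a single $\mathcal{KL}$ function $\beta(\|x_{01}-x_{02}\|,k)=\xi_1^{-1}\!\big(2\beta_V(\xi_2(\|x_{01}-x_{02}\|),k)\big)$ and the input terms into a single $\mathcal{K}_\infty$ function $\gamma(\|\vec{u}_1-\vec{u}_2\|_\infty)=\xi_1^{-1}\!\big(2\gamma_V(\sigma(\|\vec{u}_1-\vec{u}_2\|_\infty))\big)$, obtaining exactly the estimate required by Definition~\ref{deltaISSdef}.

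The step I expect to be the main obstacle is the scalar comparison argument, specifically the construction of the $\mathcal{KL}$ majorant $\beta_V$ for the contraction $v(k+1)\le v(k)-\tfrac12\alpha(v(k))$ and the careful handling of the normalization $\alpha(s)\le s$; the remaining manipulations are routine compositions and triangle inequalities for $\mathcal{K}$-class functions.
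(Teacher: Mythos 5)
The paper itself offers no proof of this theorem: it is imported verbatim from \cite{tran2016incremental}, so there is no in-paper argument to compare against. Your proposal follows the standard route by which Lyapunov characterizations of (incremental) ISS are established for discrete-time systems: pass to the scalar sequence $v(k)=V(x_1(k),x_2(k))$, turn the dissipation inequality into $v(k+1)\le v(k)-\alpha(v(k))+w$ with $\alpha=\xi\circ\xi_2^{-1}$ and $w=\sigma(\|\vec u_1-\vec u_2\|_\infty)$, run a two-regime comparison argument, and pull the estimate back through $\xi_1^{-1}$. This is essentially the Jiang--Wang comparison argument transplanted to the incremental setting, which is also what the cited reference does, and the outline is correct; the use of the upper bound $V\le\xi_2$ to lower-bound the decrease, the forward invariance of $\{v\le\alpha^{-1}(2w)+w\}$, and the weak triangle inequality for $\xi_1^{-1}$ are all sound.

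One step you should tighten: the normalization $\alpha(s)\le s$ does \emph{not} by itself make the comparison map $s\mapsto s-\alpha(s)$ nondecreasing (a piecewise-linear $\alpha$ with local slope exceeding one while remaining below the identity is a counterexample), and monotonicity is exactly what you need for the iteration $v(k+1)\le g(v(k))$ to propagate to $v(k)\le g^{k}(v(0))$ and hence for the construction of $\beta_V$. The correct tool is the standard lemma that every $\alpha\in\mathcal{K}_\infty$ admits $\hat\alpha\in\mathcal{K}_\infty$ with $\hat\alpha\le\alpha$ and $\mathrm{id}-\hat\alpha\in\mathcal{K}$; substituting $\hat\alpha$ for $\alpha$ makes the rest of your argument go through unchanged. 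A final cosmetic point: since $\sigma$ is only of class $\mathcal{K}$, your $\gamma$ is a priori only $\mathcal{K}$ rather than the $\mathcal{K}_\infty$ required by Definition~\ref{deltaISSdef}; majorizing it by $\gamma(s)+s$ fixes this.
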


%
%
\section{Problem statement}
\label{sec:th}
We consider the following class of nonlinear discrete-time systems:
\begin{subequations}
	\label{nonlinclassall}
	\begin{align}
		\label{nonlinclass}
		x(k+1)&=f(Ax(k)+Bu(k))\,,\\
		\label{nonlinclassout}
		y(k)&=Cx(k)+Du(k)\,,
	\end{align}
\end{subequations}
where $u(k)\in\mathbb{R}^m$ is the exogenous variable, \mbox{$y(k)\in\mathbb{R}^l$} is the output vector, $x(k)\in\mathbb{R}^n$ is the state vector, $f(\cdot)=\begin{bmatrix}f_1(\cdot)&\dots&f_n(\cdot)\end{bmatrix}^T\in\mathbb{R}^n$ is a vector of scalar functions applied element-wise, $A\in\mathbb{R}^{n\times n}$, $B\in\mathbb{R}^{n\times m}$, \mbox{$C\in\mathbb{R}^{l\times n}$}, and $D\in\mathbb{R}^{l\times m}$. The exogenous variable $u(k)$ takes different roles in the various setups considered in this paper. Namely, $u(k)$ can be the manipulable input variable in case of open-loop systems, whereas it can be the output reference or the exogenous disturbance in case of closed-loop control systems. In this work, the function $f(\cdot)$ takes the particular form specified in the following assumption.
\begin{assum}
	\label{Ass1}
	The functions $f_i(\cdot)$, $i=1,\dots,n$, are nonlinear globally Lipschitz continuous functions with Lipschitz constant $L_{pi}$, or the identity function $id(\cdot)$.
\end{assum}
Given a system in class \eqref{nonlinclassall}, let us introduce the set $$\mathcal{L}\!=\!\{i \in\{1,\dots,n\}\,|\;f_i(\cdot)\!\neq\! id(\cdot)\}\,.$$ Note that, under Assumption \ref{Ass1}, system \eqref{nonlinclassall} is representative of several RNN architectures. For instance, \eqref{nonlinclassall} includes the general formulation of RNNs considered in~\cite{sontag1992neural}, where $D=0_{l,m}$, \mbox{$f_1(\cdot)=\dots=f_n(\cdot)=\sigma_f(\cdot)$}, and $\sigma_f:\mathbb{R}\to \mathbb{R}$ is a globally Lipschitz function (e.g., rectified linear unit (ReLU), sigmoid, or $\tanh$). Also, as better clarified below, many other RNNs considered in the literature can be written in form~\eqref{nonlinclassall}, possibly under some assumptions and/or minor reformulations. 
\subsection{Example 1: echo state networks (ESNs)}
\label{esn1}
ESNs \cite{jaeger2001echo} are particular types of RNNs composed of a dynamical reservoir (hidden layer) in which the connections between neurons are sparse and random. If we consider the formulation proposed in \cite{armenio2019model}, with $\nu$ neurons (i.e., states) in the reservoir, input $u(k)\in\mathbb{R}^{m}$, output $y(k)\in\mathbb{R}^l$, nonlinear Lipschitz continuous internal units output functions $\zeta(\cdot)$ applied element-wise, and linear output units output functions, the ESN equations are:
\begin{subequations}
	\label{esn}
	\begin{align}
		\chi(k+1) & = \zeta(W_x\chi(k)+W_u u(k)+W_yy(k))\,,
		\label{net_state}\\[1.5ex]
		y(k) & = W_{out_{1}}\chi(k)+W_{out_{2}}u(k-1)\,,
		\label{net_output}
	\end{align}
\end{subequations}
where $W_x\in\mathbb{R}^{\nu\times \nu}$, $W_{u}\in\mathbb{R}^{\nu\times m}$, $W_y\in\mathbb{R}^{\nu\times l}$, \mbox{$W_{out_{1}}\in\mathbb{R}^{l\times \nu}$}, and $W_{out_{2}}\in\mathbb{R}^{l\times m}$.\\
Note that model \eqref{esn} can be reformulated as \eqref{nonlinclassall} by defining  $x(k)=\begin{bmatrix}\chi(k)^T&z(k)^T\end{bmatrix}^T$, where $z(k)=u(k-1)$, and by setting
\begin{align}
	\label{Aesn}
	A=\left[\begin{matrix}
		W_x^*&W_yW_{out_{2}}\\
		0_{m,\nu}&0_{m,m}\end{matrix}\right],
\end{align}
$B=\begin{bmatrix}W_u^T&I_m\end{bmatrix}^T$, $C=\begin{bmatrix}W_{out_{1}}&W_{out_{2}}\end{bmatrix}$, $D=0_{l,m}$, and $f(\cdot)=\begin{bmatrix}\zeta(\cdot)^T&id_{m}(\cdot)^T\end{bmatrix}^T$, where $W_x^*=W_x+W_yW_{out_{1}}$.
%
%
\subsection{Example 2: shallow neural nonlinear autoregressive exogenous (NNARX) models}
NNARX is a class of nonlinear autoregressive exogenous models where a FFNN is used as nonlinear regression function. As shown in \cite{bonassi2021stability}, a shallow (i.e., 1-layer) NNARX, with input $\tilde{u}(k)\in\mathbb{R}^{\widetilde{m}}$, output $y(k)\in\mathbb{R}^l$, and $\nu$ neurons, is a dynamical system defined by the following equation
\begin{equation}
	\label{nnarx}
	y(k+1) = W_0\zeta(W_\phi \phi(k)+W_u\tilde{u}(k)+b)+b_0\,,
\end{equation}
where $W_0\in\mathbb{R}^{l\times \nu}$, $b_0\in\mathbb{R}^{l}$, the vector $\phi(k)\in\mathbb{R}^{(l+\widetilde{m})N}$ is defined as
\begin{align}
	\phi(k) = 
	\big[&\tilde{u}(k-N)^T,y(k-N+1)^T,...\,,\tilde{u}(k-2)^T,y(k-1)^T,\tilde{u}(k-1)^T,\,y(k)^T \,\big]^T\,,	\label{regressor}
\end{align}
$\zeta(\cdot)\in\mathbb{R}^\nu$ is a vector of Lipschitz continuous activation functions applied element-wise, $W_u\in\mathbb{R}^{\nu\times \widetilde{m}}$, $b\in\mathbb{R}^{\nu}$, $W_\phi=\begin{bmatrix}W_{\phi_1}&W_{\phi_2}&W_{\phi_3}\end{bmatrix}\in\mathbb{R}^{\nu\times (\widetilde{m}+l)N}$, $W_{\phi_1}\in\mathbb{R}^{\nu\times (\widetilde{m}+l)}$, $W_{\phi_2}\in\mathbb{R}^{\nu\times \tau}$, $\tau=(l+\widetilde{m})N-2l-\widetilde{m}$, and $W_{\phi_3}\in\mathbb{R}^{\nu\times l}$. 
\begin{proposition}
	\label{classnnarx}
	Model \eqref{nnarx} can be reformulated as \eqref{nonlinclassall} by defining 
	$u(k)=\begin{bmatrix}\tilde{u}(k)^T&1\end{bmatrix}^T$, \begin{align}\label{statennarx}
		x(k)=\big[\, \widetilde{\phi}(k)^T,\;\upsilon(k)^T \,\big]^T\,,
	\end{align} 
	where 
	\begin{align*}
		\widetilde{\phi}(k)&=\big[ \tilde{u}(k-N)^T,y(k-N+1)^T,...\,,\tilde{u}(k-2)^T,y(k-1)^T,\tilde{u}(k-1)^T\big]^T\,,\\
		\upsilon(k)&=\zeta(W_\phi \phi(k-1)+W_u\tilde{u}(k-1)+b)\,,
	\end{align*}
	and by setting
	\begin{align}
		\label{Annarx}
		A &=\left[\begin{matrix}
			0_{\tau,l+ \widetilde{m}}&I_{\tau}&0_{\tau,\nu}\\
			0_{l,l+ \widetilde{m}}&0_{l,\tau}&W_0\\
			0_{ \widetilde{m},l+ \widetilde{m}}&0_{ \widetilde{m},\tau}&0_{ \widetilde{m},\nu}\\
			W_{\phi_1}&W_{\phi_2}&W_{\phi_3}W_0
		\end{matrix}\right]\,,\\ B &=\left[\begin{matrix}
			0_{\tau, \widetilde{m}}&0_{\tau,1}\\
			0_{l, \widetilde{m}}&b_0\\	
			I_{ \widetilde{m}}&0_{ \widetilde{m},1}\\
			W_u&b+W_{\phi_3}b_0
		\end{matrix}\right]\,, \label{Bnnarx}
	\end{align} 
	$C=\begin{bmatrix}0_{l,n-\nu}&W_0\end{bmatrix}$, $D=\begin{bmatrix}0_{l, \widetilde{m}}&b_0\end{bmatrix}$, and \mbox{$f(\cdot)=\begin{bmatrix}id_{n-\nu}(\cdot)^T&\zeta(\cdot)^T\end{bmatrix}^T$}.
\end{proposition}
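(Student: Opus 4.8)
The plan is to verify the claim by direct substitution: I would show that, with the stated $x(k)$, $u(k)$, $A$, $B$, $C$, $D$ and $f(\cdot)$, the general-form equations \eqref{nonlinclassall} reproduce exactly the NNARX dynamics \eqref{nnarx}--\eqref{regressor}. This is fundamentally a block-wise bookkeeping verification rather than an argument requiring analysis, so the whole proof is organized around the partition of the matrices into block rows of dimensions $\tau$, $l$, $\widetilde{m}$, $\nu$ and block columns of dimensions $l+\widetilde{m}$, $\tau$, $\nu$. The single conceptual ingredient I would isolate first is an auxiliary identity linking the newly introduced state component $\upsilon(k)$ to the output: evaluating the NNARX relation \eqref{nnarx} at time $k-1$ together with the definition of $\upsilon(k)$ gives $y(k)=W_0\upsilon(k)+b_0$. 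The role of $\upsilon(k)$ is thus that of a ``pre-activation memory'' of the most recent output, and this identity is what is used repeatedly in both equations.

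With this identity in hand, the output equation is immediate: since $C=[\,0_{l,n-\nu}\ \ W_0\,]$, $D=[\,0_{l,\widetilde{m}}\ \ b_0\,]$ and $u(k)=[\tilde{u}(k)^T\ 1]^T$, one reads off $Cx(k)+Du(k)=W_0\upsilon(k)+b_0=y(k)$, matching \eqref{nnarx}. For the state update I would partition $x(k)=[\widetilde{\phi}(k)^T\ \upsilon(k)^T]^T$ compatibly with the three block-columns of $A$: write $p_1(k)=[\tilde{u}(k-N)^T\ y(k-N+1)^T]^T$ for the oldest regressor pair (width $l+\widetilde{m}$), $p_2(k)$ for the remaining entries of $\widetilde{\phi}(k)$ up to $\tilde{u}(k-1)$ (width $\tau$), and $\upsilon(k)$ for the last block. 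Computing $Ax(k)+Bu(k)$ row-block by row-block, the three \emph{identity} blocks produce, in order, $p_2(k)$, then $W_0\upsilon(k)+b_0=y(k)$, then $\tilde{u}(k)$; concatenating these and applying the identity part of $f(\cdot)$ yields precisely the shifted regressor $\widetilde{\phi}(k+1)$.

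It then remains to check the last ($\nu$-dimensional) block. There $Ax(k)+Bu(k)$ evaluates to $W_{\phi_1}p_1(k)+W_{\phi_2}p_2(k)+W_{\phi_3}W_0\upsilon(k)+W_u\tilde{u}(k)+b+W_{\phi_3}b_0$, and applying the activation $\zeta(\cdot)$ together with the substitution $W_0\upsilon(k)+b_0=y(k)$ turns this into $\zeta\big(W_{\phi_1}p_1(k)+W_{\phi_2}p_2(k)+W_{\phi_3}y(k)+W_u\tilde{u}(k)+b\big)$. Recognizing that $[p_1(k)^T\ p_2(k)^T\ y(k)^T]^T=\phi(k)$ and $W_\phi=[\,W_{\phi_1}\ W_{\phi_2}\ W_{\phi_3}\,]$, this is exactly $\zeta(W_\phi\phi(k)+W_u\tilde{u}(k)+b)=\upsilon(k+1)$, which closes the verification.

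The main obstacle is purely the dimensional bookkeeping: one must keep the block partitions of $\widetilde{\phi}$, $\phi$, and the weight matrices consistently aligned, and in particular observe that the extra terms $b_0$ and $W_{\phi_3}b_0$ absorbed into $B$ are precisely what is needed so that the substitution $y(k)=W_0\upsilon(k)+b_0$ collapses the computed pre-activation back into the canonical regressor form. No analytical difficulty arises; correctness hinges entirely on the consistency of the block dimensions $l+\widetilde{m}$, $\tau=(l+\widetilde{m})N-2l-\widetilde{m}$, and $\nu$, which I would confirm once at the outset (noting $\tau+l+\widetilde{m}=n-\nu$, so the identity part of $f(\cdot)$ acts on exactly the first $n-\nu$ components).
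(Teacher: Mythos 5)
Your proposal is correct and follows essentially the same route as the paper's proof: a direct block-wise verification that $Ax(k)+Bu(k)$ reproduces the shifted regressor $\widetilde{\phi}(k+1)$ in the first $n-\nu$ rows and the pre-activation of $\upsilon(k+1)$ in the last $\nu$ rows, hinging on the identity $y(k)=W_0\upsilon(k)+b_0$ to absorb the $W_{\phi_3}W_0$ and $W_{\phi_3}b_0$ terms. Your write-up is in fact slightly more complete than the paper's, since you also explicitly check the output equation $y(k)=Cx(k)+Du(k)$, which the paper leaves implicit.
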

\begin{proof}
	Let us consider the extended input $u(k)$ and the extended state vector $x(k)\in\mathbb{R}^{n}$, with $n=\nu+(l+ \widetilde{m})N-l$. Firstly, note that
	\begin{align*}
		\widetilde{\phi}(k\!+\!1)\!=\!\left[\begin{matrix}
			0_{\tau,l+ \widetilde{m}}\!&\!I_{\tau}\!&\!0_{\tau,\nu}\\
			0_{l,l+ \widetilde{m}}\!&\!0_{l,\tau}\!&\!W_0\\
			0_{ \widetilde{m},l+ \widetilde{m}}\!&\!0_{ \widetilde{m},\tau}\!&\!0_{ \widetilde{m},\nu}\end{matrix}\right]\!x(k)\!+\!\left[\begin{matrix}
			0_{\tau, \widetilde{m}}\!&\!0_{\tau,1}\\
			0_{l, \widetilde{m}}\!&\!b_0\\	
			I_{ \widetilde{m}}\!&\!0_{ \widetilde{m},1}
		\end{matrix}\right]\!u(k). 
	\end{align*}
	Secondly,
	\begin{align*}
		\upsilon(k+1)&=\zeta(W_\phi \phi(k)+W_u\tilde{u}(k)+b)=\\
		&=\zeta(\begin{bmatrix}W_{\phi_1}\!&\!W_{\phi_2}\end{bmatrix}\!\widetilde{\phi}(k)\!+\!W_{\phi_3}y(k)+W_u\tilde{u}(k)+b)=\\
		&=\zeta(\begin{bmatrix}W_{\phi_1}\!&\!W_{\phi_2}\end{bmatrix}\!\widetilde{\phi}(k)\!+\!W_{\phi_3}\!W_0\upsilon(k)\!+\!W_{\phi_3}\!b_0\!+\!W_u\tilde{u}(k)\!+\!b)
	\end{align*}
	This concludes the proof of the statement.
\end{proof}

\subsection{Example 3: class of RNN systems in \cite{hu2002global}}
\label{eschina}
In \cite{hu2002global} a slightly different RNN class is considered, i.e.,
\begin{align}\label{eq:system_china1}
	x(k+1)=Ex(k)+O\, f(\hat{A}x(k)+s)\,,
\end{align}
where $\hat{A}$ is a full matrix, $E=\text{diag}(e_1,\dots,e_n)$, \mbox{$O=\text{diag}(o_1,\dots,o_n)$}, $s=\begin{bmatrix}s_1&\dots&s_n\end{bmatrix}^T$ is a vector of constant inputs, with $e_i,\,o_i,\,s_i \in \mathbb{R}$, $\forall i = 1,\dots,n$. Moreover,  \mbox{$f(\cdot)=\begin{bmatrix}f_1(\cdot)&\dots&f_n(\cdot)\end{bmatrix}^T$}, where each $f_i(\cdot)$ is a globally Lipschitz continuous activation function with Lipschitz constant $L_{pi}$. Note that, in case $E=0_{n,n}$ and $o_i = 1$ for all $i \in \mathcal{L}$, system \eqref{eq:system_china1} is in the same form of \eqref{nonlinclass}.

\medskip
Given the generic class of systems \eqref{nonlinclassall} under Assumption \ref{Ass1}, a sufficient  condition ensuring the $\delta$ISS property is established and described in the following section.

\section{A novel sufficient condition for incremental input-to-state stability of RNNs}
\label{sec:th2}
Let us consider a generic system \eqref{nonlinclassall} fulfilling Assumption~\ref{Ass1}. The following theorem provides a sufficient condition which guarantees the $\delta$ISS for nonlinear systems lying in the class \eqref{nonlinclass}. We first define a diagonal matrix $W=\text{diag}(L_{p1},\dots,L_{pn})\in\mathbb{R}^{n\times n}$, where $L_{pi}=1$ for all $i \notin \mathcal{L}$. We introduce the matrices $\widetilde{A}=WA$ and $\widetilde{B}=WB$.
\begin{theorem}
	\label{th2}
	Let Assumption \ref{Ass1} hold. System \eqref{nonlinclass} is $\delta$ISS if $\exists P=P^T\succ0$ such that $p_{ij}=p_{ji}=0$ $\forall i\!\in\!\mathcal{L}$, $ \forall j \!\in\! \{1,\dots,n\}$ with $ j\neq i$, and
	\begin{align}
		\label{lyaplin}
		\widetilde{A}^TP\widetilde{A}-P\prec0\,.
	\end{align}
\end{theorem}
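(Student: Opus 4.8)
The natural plan is to construct a dissipation-form $\delta$ISS-Lyapunov function of the quadratic type $V(x_1,x_2)=(x_1-x_2)^TP(x_1-x_2)$ and then invoke Theorem~\ref{deltaISS NC}. Condition \eqref{firstCond1} is then immediate, since $P\succ0$ gives $\lambda_{min}(P)\|x_1-x_2\|^2\leq V(x_1,x_2)\leq\lambda_{max}(P)\|x_1-x_2\|^2$, so one takes $\xi_1(s)=\lambda_{min}(P)s^2$ and $\xi_2(s)=\lambda_{max}(P)s^2$, both in $\mathcal{K}_\infty$. The entire content of the theorem therefore lies in verifying the decrease condition \eqref{secondCond1} for the map $f(Ax+Bu)$.

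The key step is to control the increment of the state. Writing $\delta x = x_1-x_2$, $\delta u=u_1-u_2$, and $\Delta f = f(Ax_1+Bu_1)-f(Ax_2+Bu_2)$, I would exploit Assumption~\ref{Ass1} componentwise: for each $i$, either $f_i$ is globally Lipschitz with constant $L_{pi}$, or it is the identity (in which case $L_{pi}=1$ works trivially). Hence $|\Delta f_i|\leq L_{pi}\,|\,(A\delta x+B\delta u)_i\,|$ for every $i$, which in stacked form reads $|\Delta f|\leq W\,|A\delta x+B\delta u| = |\widetilde A\,\delta x+\widetilde B\,\delta u|$ entrywise (using that $W$ is diagonal with nonnegative entries). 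The role of the structural constraint $p_{ij}=p_{ji}=0$ for $i\in\mathcal{L}$ is exactly to let me pass from this entrywise absolute-value bound to a quadratic bound on $\Delta f^T P\,\Delta f$: because $P$ is block-diagonal separating the Lipschitz (nonlinear) coordinates from the identity ones, I can bound $\Delta f^TP\,\Delta f \leq (\widetilde A\,\delta x+\widetilde B\,\delta u)^TP(\widetilde A\,\delta x+\widetilde B\,\delta u)$ without cross terms spoiling the sign; on the identity coordinates equality holds and $W=I$ there.

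The remaining step is to split the input from the state. Applying Property~\ref{Prop1} with the decomposition $\widetilde A\,\delta x+\widetilde B\,\delta u$ gives, for any $\tau\neq0$,
\begin{equation*}
\Delta f^TP\Delta f\leq(1+\tau^2)\,\delta x^T\widetilde A^TP\widetilde A\,\delta x+\Big(1+\tfrac{1}{\tau^2}\Big)\,\delta u^T\widetilde B^TP\widetilde B\,\delta u\,,
\end{equation*}
after absorbing $P$ appropriately (a cleaner route is to apply Property~\ref{Prop1} in the $P$-weighted norm). Then
\begin{equation*}
V(f_1,f_2)-V(x_1,x_2)\leq\delta x^T\big((1+\tau^2)\widetilde A^TP\widetilde A-P\big)\delta x+\Big(1+\tfrac{1}{\tau^2}\Big)\|\widetilde B\,\delta u\|_P^2\,.
\end{equation*}
Since \eqref{lyaplin} makes $\widetilde A^TP\widetilde A-P\prec0$, by continuity there is a small enough $\tau\neq0$ for which $(1+\tau^2)\widetilde A^TP\widetilde A-P\prec0$ still holds; call its largest eigenvalue $-q<0$. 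This yields \eqref{secondCond1} with $\xi(s)=q\,s^2\in\mathcal{K}_\infty$ and $\sigma(s)=(1+\tfrac{1}{\tau^2})\lambda_{max}(P)\|\widetilde B\|^2 s^2\in\mathcal{K}$, completing the verification.

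The main obstacle I expect is the second step rather than the algebra: justifying rigorously that the entrywise Lipschitz bound lifts to the quadratic form $\Delta f^TP\,\Delta f$. This is precisely where the sparsity pattern $p_{ij}=0$ for $i\in\mathcal{L}$ is indispensable — without block-diagonality of $P$ across the nonlinear coordinates, cross terms of indefinite sign appear and the entrywise bound $|\Delta f|\leq|\widetilde A\,\delta x+\widetilde B\,\delta u|$ cannot be upgraded to $\Delta f^TP\,\Delta f\leq(\widetilde A\,\delta x+\widetilde B\,\delta u)^TP(\widetilde A\,\delta x+\widetilde B\,\delta u)$. Making this implication airtight (e.g. by noting that on the nonlinear coordinates $P$ acts diagonally, so $\Delta f^TP\Delta f=\sum_{i}p_{ii}(\Delta f_i)^2\leq\sum_i p_{ii}(\widetilde A\delta x+\widetilde B\delta u)_i^2$, while the identity coordinates contribute exactly) is the crux of the argument, and the choice of the free parameter $\tau$ to preserve negative definiteness is a routine perturbation step.
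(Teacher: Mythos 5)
Your proposal is correct and follows essentially the same route as the paper: the quadratic candidate $V=\|x_1-x_2\|_P^2$, the use of the sparsity pattern of $P$ to lift the componentwise Lipschitz bound to $\Delta f^TP\Delta f\leq(\widetilde A\delta x+\widetilde B\delta u)^TP(\widetilde A\delta x+\widetilde B\delta u)$, and Property~\ref{Prop1} with a sufficiently small $\tau$ to preserve negative definiteness. The paper reaches the same key inequality via the decomposition $\delta x^+=\widetilde A\delta x+\widetilde B\delta u+\Delta$ and the factorization of the residual terms as $q^TPr\leq0$, which is algebraically equivalent to your direct entrywise argument.
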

\begin{proof}
	In order to prove the $\delta$ISS of system \eqref{nonlinclass} we show the existence of a dissipation-form $\delta$ISS Lyapunov function. We consider, as candidate, $V(x_1(k),x_2(k))=\|x_1(k)-x_2(k)\|^2_P$.
	
	From now on, for notational simplicity, we drop the dependence on $k$. If we consider the $\mathcal{K}_\infty$ functions $\xi_1(\|x_1-x_2\|)=\lambda_{min}(P)\|x_1-x_2\|^2$ and $\xi_2(\|x_1-x_2\|)=\lambda_{max}(P)\|x_1-x_2\|^2$, condition \eqref{firstCond1} is easily verified. To prove that $V(x_1,x_2)$ satisfies also condition \eqref{secondCond1}, we introduce the following notation: $v_1=Ax_1+Bu_1$,  $x_1^+=f(v_1)$, $v_2=Ax_2+Bu_2$, $x_2^+=f(v_2)$, $\delta x=x_1-x_2$, $\delta x^+=x_1^+-x_2^+$, and $\delta u=u_1-u_2$. We can write
	\begin{align}\delta x^+&=f(v_1)-f(v_2)=\notag\\
		&= Wv_1-Wv_2 +  f(v_1)-Wv_1-f(v_2)+Wv_2=\notag\\
		&=\widetilde{A}\delta x + \widetilde{B}\delta u \,\,+ \Delta \label{eq:diff}\,,
	\end{align}
	where $\Delta=f(v_1)-Wv_1-f(v_2)+Wv_2$. From \eqref{eq:diff}, we obtain that
	\begin{align}V(x_1^+,x_2^+)-V(x_1,x_2)&=(\delta x^+)^TP\delta x^+-\delta x^TP\delta x=\notag\\
		&=(\widetilde{A}\delta x + \widetilde{B}\delta u)^TP(\widetilde{A}\delta x + \widetilde{B}\delta u)+2(\widetilde{A}\delta x + \widetilde{B}\delta u)^TP\Delta\,+\Delta^TP\Delta-\delta x^TP\delta x \,.\label{eq:lya}
	\end{align}
	Now, we can observe that
	\begin{align}2(\widetilde{A}\delta x + \widetilde{B}\delta  u)^TP\Delta+\Delta^TP\Delta&=2(W(v_1-v_2))^TP\Delta+\Delta^TP\Delta=\notag\\
		&=(2Wv_1-2Wv_2+\Delta)^TP\Delta=\notag\\
		&=q^TPr\,, \label{eq:ps}
	\end{align} 
	where $q, r \in\mathbb{R}^n$, with $ q=Wv_1+f(v_1)-Wv_2-f(v_2)$ and $r=f(v_1)-Wv_1-f(v_2)+Wv_2$. The elements of the vectors $q$ and $r$ are
	\begin{equation*}
		\centering
		q_i=\left\{\begin{split}
			& f_i(v_{1i})-f_i(v_{2i})+L_{pi}(v_{1i}-v_{2i}) \hspace{5mm} &\text{if}\;
			i \in \mathcal{L}\,,\\
			& 2v_{1i}-2v_{2i} &\text{if}\;
			i \notin \mathcal{L}\,,\\
		\end{split}\right.
	\end{equation*}
	\begin{equation*}
		\centering
		r_i=\left\{\begin{split}
			& f_i(v_{1i})-f_i(v_{2i})-L_{pi}(v_{1i}-v_{2i}) \hspace{5mm} &\text{if}\;
			i \in \mathcal{L}\,,\\
			& 0 &\text{if}\;
			i \notin \mathcal{L}\,.\\
		\end{split}\right.
	\end{equation*}
	Therefore, by setting $p_{ij}=p_{ji}=0$ $\forall i\!\in\!\mathcal{L}$ and $ \forall j \!\in\! \{1,\dots,n\}$ with $ j\neq i$, we can compute from \eqref{eq:ps} that
	\begin{align}q^TPr=\!\sum_{i\in\mathcal{L}}p_{ii}\left((f_i(v_{1i})-f_i(v_{2i}))^2-\!L_{pi}^2(v_{1i}-v_{2i})^2\right)\leq0\,. \label{eq:ps2}
	\end{align} 
	Inequality~\eqref{eq:ps2} holds since, in view of Assumption \ref{Ass1}, $f_i(\cdot)$ are globally Lipschitz continuous functions, for all $i\in\mathcal{L}$, and $p_{ii}>0$ $\forall i$ since \mbox{$P=P^T\succ0$}.
	
	As a result, by exploiting \eqref{eq:ps2} and in view of Property \ref{Prop1}, for any $\tau\neq0$, we can write that
	\begin{align*}V(x_1^+,x_2^+)-V(x_1,x_2)&\leq(\widetilde{A}\delta x + \widetilde{B}\delta u)^TP(\widetilde{A}\delta x + \widetilde{B}\delta u)-\delta x^TP\delta x\leq\\
		&\leq(1+\tau^2)(\widetilde{A}\delta x)^TP\widetilde{A}\delta x+\left(1+\frac{1}{\tau^2}\right)(\widetilde{B}\delta u)^TP\widetilde{B}\delta u-\delta x^TP\delta x\leq\\
		&\leq -\lambda_{min}(A^*)\|x_1-x_2\|^2+\lambda_{u}\|u_1-u_2\|^2\,,\end{align*} 
	for any $\lambda_u>\lambda_{max}(B^*)$, where $A^*=P-(1+\tau^2)\widetilde{A}^TP\widetilde{A}\succ0$ in view of \eqref{lyaplin} for a sufficiently small $\tau$, and $B^*=\left(1+\frac{1}{\tau^2}\right)\widetilde{B}^TP\widetilde{B}\succeq0$.
	
	Finally, note that $\sigma(\|u_1-u_2\|)=\lambda_u\|u_1-u_2 \|^2$ is a $\mathcal{K}$ function, whereas $\xi(\|x_{1}-x_{2}\|)=\lambda_{min}(A^*)\|x_1-x_2\|^2$ is a $\mathcal{K}_{\infty}$ function. This concludes the proof.
\end{proof}
In short, Theorem \ref{th2} ensures that system \eqref{nonlinclass} is $\delta$ISS if $\widetilde{A}$ is Schur stable and if there exists a matrix $P$ with a specific structure fulfilling the Lyapunov inequality \eqref{lyaplin}. In particular, $P$ must have zero elements along all the rows and columns (except for the diagonal element) corresponding to the rows of \eqref{nonlinclass} whose activation function is nonlinear. 

The $\delta$ISS condition in Theorem \ref{th2} is now compared with other existing conditions for the RNN systems introduced in Section \ref{sec:th} to show its generality. 

\subsection{Comparison with the stability condition in \cite{armenio2019model} for ESNs}
Note that, in case of ESNs, the assumptions of Theorem~\ref{th2} require the existence of a positive definite block diagonal matrix $P=\text{diag}(P_1,P_2)$ fulfilling \eqref{lyaplin}, where $P_1\in\mathbb{R}^{\nu\times \nu}$ is a diagonal matrix, as $\mathcal{L}=\{1,\dots,\nu\}$, and $P_2\in\mathbb{R}^{m\times m}$ is a full symmetric matrix. On the other hand, in \cite[Proposition~1]{armenio2019model} the following sufficient condition for $\delta$ISS of system \eqref{esn} in state-space is proposed in case $\zeta(\cdot)$ has activation functions with Lipschitz constant $L_p\leq1$.
\begin{proposition}[\!\!\cite{armenio2019model}]
	\label{deltaISSESN}
	If $\|W_x^*\|<1$, system \eqref{esn} in the state-space form in \cite{armenio2019model} is $\delta$ISS.
\end{proposition}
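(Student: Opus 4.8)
The plan is to construct a quadratic dissipation-form $\delta$ISS Lyapunov function for the reservoir state and then invoke Theorem~\ref{deltaISS NC}. Because the hypothesis is stated through the spectral norm $\|W_x^*\|$ rather than a weighted norm, the natural candidate is the unweighted $V(\chi_1,\chi_2)=\|\chi_1-\chi_2\|^2$, i.e. the instance $P=I_\nu$ of the quadratic form used in Theorem~\ref{th2}. First I would put \eqref{esn} in the state-space form of \cite{armenio2019model} by substituting the output \eqref{net_output} into \eqref{net_state}, collapsing the update into $\chi(k+1)=\zeta(W_x^*\chi(k)+W_uu(k)+W_yW_{out_2}u(k-1))$ with $W_x^*=W_x+W_yW_{out_1}$.

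Next I would form the incremental dynamics. With $v_i=W_x^*\chi_i+W_uu_i+W_yW_{out_2}u_i^{-}$, where $u_i^{-}$ denotes the one-step-delayed input, the reservoir increment is $\delta\chi^+=\zeta(v_1)-\zeta(v_2)$. Since $\zeta(\cdot)$ is applied element-wise with Lipschitz constant $L_p\le1$, I would bound $\|\delta\chi^+\|^2=\|\zeta(v_1)-\zeta(v_2)\|^2\le L_p^2\|v_1-v_2\|^2\le\|v_1-v_2\|^2$. Writing $v_1-v_2=W_x^*\delta\chi+b$ with $b=W_u\delta u+W_yW_{out_2}\delta u^{-}$ collecting the input contributions, Property~\ref{Prop1} with a free $\tau\ne0$ gives $\|v_1-v_2\|^2\le(1+\tau^2)\|W_x^*\|^2\|\delta\chi\|^2+(1+\tfrac{1}{\tau^2})\|b\|^2$, so that $V(\chi_1^+,\chi_2^+)-V(\chi_1,\chi_2)\le\big[(1+\tau^2)\|W_x^*\|^2-1\big]\|\delta\chi\|^2+(1+\tfrac{1}{\tau^2})\|b\|^2$.

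The key step is the choice of $\tau$. Since $\|W_x^*\|<1$ by hypothesis, the continuous map $\tau\mapsto(1+\tau^2)\|W_x^*\|^2$ tends to $\|W_x^*\|^2<1$ as $\tau\to0$, so a sufficiently small $\tau$ makes the bracketed coefficient strictly negative. This delivers the dissipation inequality \eqref{secondCond1} with $\xi(\|\delta\chi\|)=\big(1-(1+\tau^2)\|W_x^*\|^2\big)\|\delta\chi\|^2\in\mathcal{K}_\infty$ and a $\mathcal{K}_\infty$ function $\sigma$ dominating $(1+\tfrac{1}{\tau^2})\|b\|^2$; condition \eqref{firstCond1} is immediate with $\xi_1(s)=\xi_2(s)=s^2$ because $P=I_\nu$. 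Theorem~\ref{deltaISS NC} then yields $\delta$ISS.

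I expect the only real obstacle to be carrying the one-step input delay consistently with the stated hypothesis. Because the update at step $k$ depends on both $u(k)$ and $u(k-1)$, the quantity $b$ is not a function of a single input difference, so the dissipation template of Theorem~\ref{deltaISS NC} does not apply verbatim. It is tempting to remove the delay by augmenting the state with $z(k)=u(k-1)$ as in \eqref{Aesn}, but this is exactly what must be avoided: on the augmented system the $P=I$ condition becomes $\|\widetilde{A}\|<1$, which is strictly stronger than $\|W_x^*\|<1$. Hence I would keep $\chi$ as the sole state and absorb both $\delta u$ and $\delta u^{-}$ into a single $\mathcal{K}_\infty$ bound $\|b\|\le(\|W_u\|+\|W_yW_{out_2}\|)\,\|\vec u_1-\vec u_2\|_\infty$; equivalently, one may bypass the Lyapunov template altogether and unroll $\|\delta\chi(k)\|\le\|W_x^*\|\,\|\delta\chi(k-1)\|+\|b(k-1)\|$ as a geometric recursion, reading off $\beta$ and $\gamma$ for Definition~\ref{deltaISSdef} directly. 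The hypothesis $L_p\le1$ is the one ingredient that cannot be relaxed, as it is what lets the Lipschitz factor be dropped before the contraction $\|W_x^*\|<1$ is exploited.
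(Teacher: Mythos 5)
Your argument is correct, but note that the paper itself does not prove Proposition~\ref{deltaISSESN}: it is recalled verbatim from \cite{armenio2019model}, and the only related derivation in the paper is the proof of Proposition~\ref{Prop4}, which establishes the different (stronger) claim that $\|W_x^*\|<1$ implies the hypotheses of Theorem~\ref{th2}. Compared with that derivation, you work on the unaugmented reservoir state with $P=I_\nu$, treat the pair $(u(k),u(k-1))$ as an exogenous signal, and close the argument either by absorbing both input differences into a single $\|\vec u_1-\vec u_2\|_\infty$ bound or by unrolling the contraction $\|\delta\chi(k)\|\le\|W_x^*\|\,\|\delta\chi(k-1)\|+\|b(k-1)\|$ to read off $\beta$ and $\gamma$ for Definition~\ref{deltaISSdef} directly; the paper instead keeps the augmented state $x=[\chi^T,z^T]^T$ with $A$ as in \eqref{Aesn} and chooses the block-diagonal weight $P=\mathrm{diag}(I_\nu,\lambda_p I_m)$ with $\lambda_p$ large, using Property~\ref{Prop1} with small $\tau$ exactly as you do. Your route is more elementary and yields explicit $\mathcal{KL}$ and $\mathcal{K}_\infty$ comparison functions; the paper's route buys compatibility with the Lyapunov framework of Theorem~\ref{th2} (which is the point of Proposition~\ref{Prop4}). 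Two small points deserve care: (i) you are right that the dissipation template \eqref{secondCond1} does not apply verbatim because $b$ involves $u(k-1)$, and your geometric-recursion fallback is the clean way out, so I would make that the primary argument rather than a remark; (ii) your observation that the naive $P=I$ choice on the augmented system would demand $\|\widetilde A\|<1$, strictly stronger than $\|W_x^*\|<1$, is accurate and is precisely why the paper's Proposition~\ref{Prop4} needs the free scaling $\lambda_p$ on the second block.
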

Below we show that the assumptions of Theorem \ref{th2} are less conservative than the one of Proposition \ref{deltaISSESN}. 
\begin{proposition}
	\label{Prop4}
	Let $\zeta(\cdot)$ be a vector of sigmoid Lipschitz continuous functions with Lipschitz constant $L_p\leq1$. The set of systems \eqref{esn} satisfying the assumption of Proposition \ref{deltaISSESN} is a subset of the set of systems \eqref{esn} satisfying the assumptions of Theorem \ref{th2}.
\end{proposition}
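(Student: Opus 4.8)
The plan is to show that the hypothesis $\|W_x^*\|<1$ of Proposition~\ref{deltaISSESN} lets one explicitly \emph{construct} a matrix $P$ meeting every requirement of Theorem~\ref{th2} for the ESN reformulation \eqref{Aesn}, which by definition establishes the claimed set inclusion. Recall that for \eqref{esn} we have $\mathcal{L}=\{1,\dots,\nu\}$, so the weighting matrix is $W=\text{diag}(L_p I_\nu,I_m)$ and, from \eqref{Aesn},
\[
\widetilde{A}=WA=\begin{bmatrix} L_p W_x^* & L_p W_y W_{out_{2}}\\ 0_{m,\nu} & 0_{m,m}\end{bmatrix}.
\]
The structural constraint $p_{ij}=p_{ji}=0$ for $i\in\mathcal{L}$, $j\neq i$, forces $P=\text{diag}(P_1,P_2)$ with $P_1\in\mathbb{R}^{\nu\times\nu}$ diagonal and $P_2\in\mathbb{R}^{m\times m}$ an \emph{unconstrained} symmetric block. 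The task thus reduces to exhibiting $P_1,P_2\succ0$ of this form satisfying \eqref{lyaplin}.

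First I would pick the simplest admissible reservoir block, $P_1=I_\nu$, and expand $\widetilde{A}^TP\widetilde{A}-P$ in $2\times2$ block form using the upper-triangular structure of $\widetilde{A}$. Writing $M_{11}=L_p^2(W_x^*)^TW_x^*-I_\nu$ for the top-left block, $M_{12}=L_p^2(W_x^*)^TW_yW_{out_{2}}$ for the coupling block, and $M_{22}=L_p^2(W_yW_{out_{2}})^T(W_yW_{out_{2}})-P_2$ for the bottom-right block, the inequality \eqref{lyaplin} becomes $\bigl[\begin{smallmatrix}M_{11}&M_{12}\\M_{12}^T&M_{22}\end{smallmatrix}\bigr]\prec0$. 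The decisive observation is that $M_{11}\prec0$: since $\|W_x^*\|<1$ gives $(W_x^*)^TW_x^*\prec I_\nu$ and $L_p\le1$, one has $L_p^2(W_x^*)^TW_x^*\prec I_\nu$, i.e.\ $M_{11}\prec0$. This is precisely where the hypothesis of Proposition~\ref{deltaISSESN} enters.

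With $M_{11}\prec0$ secured, I would invoke the Schur-complement characterization of negative definiteness: the block matrix is negative definite if and only if $M_{11}\prec0$ and $M_{22}-M_{12}^TM_{11}^{-1}M_{12}\prec0$. Because $M_{11}^{-1}\prec0$, the correction $-M_{12}^TM_{11}^{-1}M_{12}$ is positive semidefinite, so the Schur complement equals $L_p^2(W_yW_{out_{2}})^T(W_yW_{out_{2}})-M_{12}^TM_{11}^{-1}M_{12}-P_2$, that is, a \emph{fixed} positive semidefinite symmetric matrix minus the free block $P_2$. Choosing $P_2=cI_m$ with $c$ exceeding the largest eigenvalue of that fixed matrix makes the Schur complement negative definite while keeping $P_2\succ0$. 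The resulting $P=\text{diag}(I_\nu,cI_m)$ is positive definite and has exactly the sparsity pattern demanded by Theorem~\ref{th2}, so every ESN with $\|W_x^*\|<1$ also satisfies its assumptions.

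The only apparent obstruction is the coupling block $W_yW_{out_{2}}$, which appears when the output equation \eqref{net_output} is folded into the reservoir update to form $W_x^*$; one might fear it blocks feasibility. The Schur-complement step shows it cannot: since the indices of $P_2$ lie outside $\mathcal{L}$, that block is entirely free and any finite coupling is absorbed by enlarging $P_2$. Hence the sole binding condition is contractivity of the reservoir block itself, namely $L_p^2(W_x^*)^TW_x^*\prec I_\nu$, which $\|W_x^*\|<1$ and $L_p\le1$ guarantee. I would therefore expect the block expansion bookkeeping, rather than any conceptual difficulty, to be the most error-prone part; the inclusion is moreover expected to be proper, since Theorem~\ref{th2} can additionally certify ESNs with $\|W_x^*\|\ge1$ by means of a non-scalar diagonal $P_1$.
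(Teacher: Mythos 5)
Your proof is correct, and it constructs exactly the same certificate as the paper: $P=\mathrm{diag}(I_\nu,cI_m)$ with $c$ large enough, with the $(1,1)$ block made negative by $\|W_x^*\|<1$ together with $L_p\le 1$, and the output-coupling block absorbed by inflating $P_2$, whose indices lie outside $\mathcal{L}$. The only substantive difference is the tool used to dispose of the cross term $M_{12}$: the paper bounds the quadratic form $v^T(\widetilde A^TP\widetilde A-P)v$ via Property~\ref{Prop1} (a Young-type inequality with a free parameter $\tau$, taking $\tau$ small so that $(1+\tau^2)W_x^{*T}W_x^*-I_\nu\prec0$ survives and the $(1+1/\tau^2)$-inflated coupling is then dominated by $P_2=\lambda_pI_m$), whereas you apply the Schur-complement characterization of block negative definiteness directly. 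Your route is arguably cleaner: it avoids the scalar parameter $\tau$ and the slightly delicate bookkeeping of where the $(1+\tau^2)$ and $(1+1/\tau^2)$ factors land, at the cost of requiring $M_{11}$ to be invertible (which it is, being negative definite). Your closing remark that the inclusion is proper is also consistent with the paper, which exhibits a numerical ESN with $\|W_x^*\|>1$ satisfying Theorem~\ref{th2}.
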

\begin{proof}
	See the Appendix.
\end{proof}

We also show an example in which the assumption of Proposition \ref{deltaISSESN} is not fulfilled whereas the assumptions of Theorem~\ref{th2}~are. Let $\nu=2$, $m=1$, $l=1$, $$W_x=\left[\begin{matrix}
	0.8257&-0.4711\\
	-1.0149&0.137
\end{matrix}\right]\,,\ \ \ W_y=\left[\begin{matrix}
	-0.2919\\
	0.3018
\end{matrix}\right]\,,$$ $W_{out_{1}}=\begin{bmatrix}0.3999&-0.93\end{bmatrix}$, and $W_{out_{2}}=-0.1768$. Then, $\|W_x^*\|=1.1413>1$. With $A$ defined as in \eqref{Aesn} and with the choice $$P=\left[\begin{matrix}
	2.1444&0&0\\
	0&0.7221&0\\
	0&0&1.0254\\
\end{matrix}\right]\,,$$ we have that the maximum eigenvalue of $A^TPA-P$ is equal to $-0.3197$, and thus $A^TPA-P\prec0$.

\subsection{Comparison with the stability condition in \cite{bonassi2021stability} for shallow NNARX models}

In case of shallow NNARXs, the assumptions of Theorem \ref{th2} require the existence of a symmetric positive definite matrix $P$ fulfilling \eqref{lyaplin} such that $p_{ij}=p_{ji}=0$, \mbox{$\forall i \in \mathcal{L}=\{n-\nu+1,\dots,n\}$} and $\forall j \in \{1,\dots,n\}$ with $j\neq i$.

In \cite[Theorem 8]{bonassi2021stability} a sufficient condition for $\delta$ISS of a deep (i.e., $M$-layered) NNARX is proposed. For comparison purposes, we recall here the condition for a 1-layer NNARX~\eqref{nnarx}, as this belongs to the class \eqref{nonlinclassall}, where $\zeta(\cdot)$ has activation functions with Lipschitz constant $L_p$. Note also that, in~\cite{bonassi2021stability}, a slightly different state-space formulation with respect to the one in this paper is considered.
\begin{proposition}[\!\!\cite{bonassi2021stability}]
	\label{deltaISSNNARX}
	If $\|W_0\|\|W_\phi\|<\frac{1}{L_p\sqrt{N}}$, system \eqref{nnarx} in the state-space form in \cite{bonassi2021stability} is $\delta$ISS.
\end{proposition}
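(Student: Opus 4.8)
The plan is to exhibit a dissipation-form $\delta$ISS-Lyapunov function for the state-space realization of the NNARX \eqref{nnarx} used in \cite{bonassi2021stability} and then apply Theorem \ref{deltaISS NC}. I would begin from two trajectories generated by inputs $\tilde{u}_1,\tilde{u}_2$ with regressors $\phi_1,\phi_2$, set $\delta y(k+1)=y_1(k+1)-y_2(k+1)$, $\delta\phi(k)=\phi_1(k)-\phi_2(k)$, $\delta\tilde{u}(k)=\tilde{u}_1(k)-\tilde{u}_2(k)$, and subtract the two copies of \eqref{nnarx}. Since under Assumption \ref{Ass1} the only nonlinearity $\zeta(\cdot)$ is globally Lipschitz with constant $L_p$, the constant terms $b,b_0$ cancel and I obtain $\|\delta y(k+1)\|\le L_p\|W_0\|\,\|W_\phi\,\delta\phi(k)+W_u\,\delta\tilde{u}(k)\|$; a triangle inequality then separates a state-dependent part $L_p\|W_0\|\|W_\phi\|\,\|\delta\phi(k)\|$ from an input-dependent part $L_p\|W_0\|\|W_u\|\,\|\delta\tilde{u}(k)\|$, and Property \ref{Prop1} lets me later square this sum while keeping the input contribution isolated in the $\sigma(\cdot)$ term of \eqref{secondCond1}.

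The crux is the structure of the regressor \eqref{regressor}: $\phi(k)$ stacks the $N$ most recent input/output blocks, so $\|\delta\phi(k)\|^2$ equals the sum of the squared norms of the $N$ lagged output differences plus the lagged input differences. Bounding the output window by its supremum gives $\|\delta\phi(k)\|\le\sqrt{N}\,\max_{0\le j\le N-1}\|\delta y(k-j)\|$ up to input terms, which is exactly where the factor $\sqrt{N}$ enters. Writing $\rho=L_p\|W_0\|\|W_\phi\|$, the hypothesis $\|W_0\|\|W_\phi\|<\tfrac{1}{L_p\sqrt{N}}$ is precisely $\rho\sqrt{N}<1$, so the map carrying the window of past output differences to the next output difference is a strict contraction once the inputs coincide.

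Finally I would turn this contraction into a genuine decrease. Because $\delta y(k+1)$ depends on the \emph{whole} window rather than on the single previous sample, a plain sum-of-squares Lyapunov function does not decrease in one step; this is the main obstacle. I expect to resolve it with a candidate $V$ given by a geometrically weighted sum of the squared lagged output (and input) differences stored in the state, $V=\sum_{j=0}^{N-1}c_j\|\delta y(k-j)\|^2+\dots$ with weights $c_j$ decaying fast enough that the shift of the window is compensated by the factor $\rho\sqrt{N}<1$; then \eqref{firstCond1} holds with quadratic $\xi_1,\xi_2$ by positivity of the weights, and combining the contraction bound with Property \ref{Prop1} yields \eqref{secondCond1} with $\xi(\cdot)$ quadratic and $\sigma(\cdot)$ collecting the $\|\delta\tilde{u}\|$ terms. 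An equivalent and perhaps cleaner route is an $N$-step argument: iterate the bound over one full window length so that the state is entirely refreshed, show the $N$-step gain is of order $(\rho\sqrt{N})^{N}<1$, and conclude $\delta$ISS from the resulting $\mathcal{KL}$ estimate. Either way Theorem \ref{deltaISS NC} gives the claim.
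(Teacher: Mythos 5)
The paper never proves this proposition: it is imported verbatim from \cite{bonassi2021stability} and used only as a benchmark for Theorem \ref{th2}, so there is no internal proof to compare against. Judged on its own terms, your plan is sound, and it closely mirrors the one place where the paper does something equivalent, namely the proof of Proposition \ref{Prop6}: there the authors exhibit a quadratic form $P$ whose diagonal blocks carry arithmetically graded weights $\alpha,2\alpha,\dots,(N-1)\alpha,N\alpha$ across the lag window (small on old entries, large on recent ones), which is exactly your ``weighted sum of squared lagged differences'' written as $\|\delta x\|_P^2$; the feasibility of those weights reduces to the chain of inequalities \eqref{cnd5p4}, and that is precisely where $\|W_0\|^2\|W_\phi\|^2L_p^2N<1$ is consumed. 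Your identification of the obstacle (the one-step update reads the whole window, so an unweighted sum of squares need not decrease) and of the cure (graded weights) is the right one. Three details you should still pin down. First, the weight condition is two-sided: writing $\rho=L_p\|W_0\|\|W_\phi\|$, the weights must drop by at least about $c_0(1+\tau^2)\rho^2$ per lag \emph{and} the last weight must still exceed $c_0(1+\tau^2)\rho^2$; stacking these $N$ constraints is exactly where $N\rho^2<1$ enters, so this feasibility check is the heart of the proof rather than a routine afterthought (arithmetic decay with step slightly above $c_0\rho^2$ is the cleanest choice; a geometric ratio works only if tuned to $\rho$ and $N$). Second, the lagged input differences $\delta\tilde u(k-j)$, $j\ge1$, are state variables, not inputs, so they need their own graded weights inside $V$, and only $\delta\tilde u(k)$ may be routed into the $\sigma(\cdot)$ term of \eqref{secondCond1}; your ``$+\dots$'' hides a block that must be treated symmetrically to the output block. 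Third, in the alternative $N$-step route the gain after one full window refresh is $\rho\sqrt N$, not $(\rho\sqrt N)^N$: each of the $N$ fresh entries is bounded by $\rho\sqrt N$ times the maximum of the \emph{original} window, so only one contraction factor accrues per $N$ steps. The conclusion is unchanged, but the rate you quote is wrong.
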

Below we show that the assumptions of Theorem \ref{th2} are less conservative than the one of Proposition \ref{deltaISSNNARX} for a shallow NNARX. 
\begin{proposition}\label{Prop6}
	Let $\zeta(\cdot)$ be a vector of nonlinear Lipschitz continuous functions with Lipschitz constant $L_p$. The set of systems \eqref{nnarx} satisfying the assumption of Proposition \ref{deltaISSNNARX} is a subset of the set of systems \eqref{nnarx} satisfying the assumptions of Theorem \ref{th2}.
\end{proposition}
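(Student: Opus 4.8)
The plan is to prove the stated inclusion constructively: for every system \eqref{nnarx} with $\|W_0\|\|W_\phi\|<\frac{1}{L_p\sqrt{N}}$, I would exhibit a matrix $P=P^T\succ0$ that has the sparsity pattern demanded by Theorem~\ref{th2} and satisfies \eqref{lyaplin}. Since here $\mathcal{L}=\{n-\nu+1,\dots,n\}$, it suffices to search among $P=\text{diag}(P_{\mathrm{lin}},P_\upsilon)$ with $P_{\mathrm{lin}}\in\mathbb{R}^{(n-\nu)\times(n-\nu)}$ and $P_\upsilon\in\mathbb{R}^{\nu\times\nu}$ \emph{diagonal}, because any such $P$ automatically fulfils $p_{ij}=p_{ji}=0$ for all $i\in\mathcal{L}$, $j\neq i$. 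First I would partition $\widetilde{A}=WA$ according to $x=[\,\widetilde{\phi}^T,\upsilon^T]^T$ as $\widetilde{A}=\begin{bmatrix}F&G\\H&J\end{bmatrix}$ and read off from \eqref{Annarx} that $F$ is the nilpotent tapped-delay-line shift acting on $\widetilde{\phi}$, that $G$ injects $\upsilon$ into the most recent output slot through $W_0$, that $H=L_p[\,W_{\phi_1}\ \ W_{\phi_2}]$ reads the whole register, and that $J=L_pW_{\phi_3}W_0$. With $P$ block-diagonal, \eqref{lyaplin} reduces to a $2\times2$ block inequality in $(F,G,H,J)$, which I would attack by separating the $\widetilde{\phi}$-contribution from the $\upsilon$-contribution.

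The core of the construction is the choice of $P_{\mathrm{lin}}$. I would take it \emph{diagonal}, with weights that are constant inside each delay-line slot and decay by a fixed factor $(1-\rho)$ from each slot to the next older one, and set $P_\upsilon=\epsilon I_\nu$. Two structural facts then drive the argument. First, because the weights decrease in the direction in which $F$ shifts content (and the oldest slot is discarded), $F$ is a strict contraction in the weighted norm, $\|F\delta\widetilde{\phi}\|_{P_{\mathrm{lin}}}^2\leq(1-\rho)\|\delta\widetilde{\phi}\|_{P_{\mathrm{lin}}}^2$. Second, $F\delta\widetilde{\phi}$ and $G\delta\upsilon$ occupy disjoint coordinate blocks, so with $P_{\mathrm{lin}}$ diagonal the cross term vanishes and $\|F\delta\widetilde{\phi}+G\delta\upsilon\|_{P_{\mathrm{lin}}}^2=\|F\delta\widetilde{\phi}\|_{P_{\mathrm{lin}}}^2+g_y\|W_0\delta\upsilon\|^2$, where $g_y$ is the (largest) weight of the output slot. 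This second fact is what lets me avoid splitting the shift against the injection and is essential for a sharp constant.

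Expanding $V(x_1^+,x_2^+)-V(x_1,x_2)$ as in the proof of Theorem~\ref{th2} and invoking Property~\ref{Prop1} only on the genuinely mixed term $H\delta\widetilde{\phi}+J\delta\upsilon$ (with an $O(1)$ parameter), I would split \eqref{lyaplin} into a $\delta\widetilde{\phi}$-part and a $\delta\upsilon$-part. The $\delta\widetilde{\phi}$-part is negative provided the feedback gain does not destroy the contraction, i.e.\ $\epsilon\lesssim \rho\,g_{\min}/(L_p\|W_\phi\|)^2$ with $g_{\min}$ the smallest (oldest-slot) weight; the $\delta\upsilon$-part is negative provided $\epsilon$ dominates the injection, i.e.\ $\epsilon\gtrsim g_{y}\|W_0\|^2$ (the $J$-term being negligible since $\|J\|^2\leq L_p^2\|W_\phi\|^2\|W_0\|^2<1/N$). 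A feasible window for $\epsilon$ therefore exists as soon as $(g_y/g_{\min})\,L_p^2\|W_0\|^2\|W_\phi\|^2\lesssim\rho$. Since $g_y/g_{\min}=(1-\rho)^{-\Theta(N)}$, I would finally choose $\rho=\Theta(1/N)$, which keeps this ratio bounded by a constant while making $\rho$ the binding small quantity; the window condition then collapses to $L_p\sqrt{N}\|W_0\|\|W_\phi\|<1$ (using $\|[\,W_{\phi_1}\ \ W_{\phi_2}]\|\leq\|W_\phi\|$), which is exactly the hypothesis of Proposition~\ref{deltaISSNNARX}.

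The step I expect to be the main obstacle is precisely the extraction of the $\sqrt{N}$ constant rather than a worse power of $N$. A uniform $P_{\mathrm{lin}}=\beta I$ fails because the shift has unit norm and the feedback then cannot be absorbed, while a careless application of Property~\ref{Prop1} to separate $F\delta\widetilde{\phi}$ from $G\delta\upsilon$ injects an extra factor $(1+1/\theta^2)$ that degrades the bound to order $N$. The delicate points are therefore: exploiting the orthogonality of the shift range and the injection slot to keep that separation exact; tuning the geometric ratio as $1+\Theta(1/N)$ so that $g_y/g_{\min}=O(1)$; and, to reach the constant $1$ rather than a constant multiple, carrying out a sharper accounting that uses the full-strength decrease of the discarded oldest slot together with $\|[\,W_{\phi_1}\ \ W_{\phi_2}\ \ W_{\phi_3}]\|=\|W_\phi\|$, while tracking the irregular slot sizes ($l$, $\widetilde{m}$, and the boundary slots of the register). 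Once the window is shown nonempty, verifying $P\succ0$ and the structural pattern is immediate.
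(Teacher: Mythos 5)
Your overall architecture matches the paper's: an explicit block-diagonal $P=\mathrm{diag}(P_{\mathrm{lin}},\epsilon I_\nu)$ with $P_{\mathrm{lin}}$ diagonal and graded along the tapped delay line, the observation that the shift and the $W_0$-injection have disjoint supports so no cross term arises, and a reduction to a feasibility window for the weights. The gap is in the weight profile you commit to. With geometric weights of ratio $(1-\rho)$, the worst-case per-step decrease of the shift (attained when $\delta\widetilde\phi$ is concentrated on the oldest slot, which $[W_{\phi_1}\ W_{\phi_2}]$ may read with full norm) is $\rho\,g_{\min}=\rho\,g_y(1-\rho)^{N-1}$, while the injection constraint forces $\epsilon\gtrsim g_y\|W_0\|^2$. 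Your own window condition therefore requires $L_p^2\|W_0\|^2\|W_\phi\|^2\leq\rho(1-\rho)^{N-1}$, and $\sup_{\rho\in(0,1)}\rho(1-\rho)^{N-1}=\frac1N\bigl(1-\frac1N\bigr)^{N-1}$, which is strictly below $\frac1N$ for every $N\geq2$ and tends to $\frac{1}{eN}$. So the geometric construction proves the inclusion only for systems satisfying a strictly stronger bound than $\|W_0\|\|W_\phi\|<\frac{1}{L_p\sqrt N}$; no amount of ``sharper accounting'' of the discarded slot rescues it, because the loss is intrinsic to the profile. The fix is to replace geometric decay by \emph{arithmetic} weights $\alpha,2\alpha,\dots,N\alpha$ (oldest to newest): then every surviving entry loses exactly $\alpha$ of weight and the discarded oldest entry loses at least $\alpha$, so the decrement is $\alpha\|\delta\widetilde\phi\|^2$ uniformly while the injection slot has weight $\alpha(N-1)$, giving exactly the chain $\|W_\phi\|^2L_p^2<\gamma/\beta<\alpha/\beta<\frac{1}{\|W_0\|^2N}$, feasible precisely under the hypothesis of Proposition~\ref{deltaISSNNARX}. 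This is what the paper does, with $\widetilde P_1=\alpha\,\mathrm{diag}(I,2I,\dots)$, $\widetilde P_2=\alpha(N-1)I_l$, $\widetilde P_3=\alpha NI_{\widetilde m}$, $P_D=\beta I_\nu$.

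A secondary issue is your treatment of $J=L_pW_{\phi_3}W_0$: splitting $H\delta\widetilde\phi+J\delta\upsilon$ via Property~\ref{Prop1} with an $O(1)$ parameter multiplies the $H$-term by $(1+\theta^2)$, which again degrades the constant unless $\theta$ is sent to zero in an $N$-dependent way. The paper avoids this entirely by factoring $A=A_\phi\widetilde W_0$ with $\widetilde W_0=\mathrm{diag}(I,I,W_0)$ and adding and subtracting $\gamma\widetilde W_0^T\widetilde W_0$, so that $W_\phi=[W_{\phi_1}\ W_{\phi_2}\ W_{\phi_3}]$ enters as a single block through the exact condition $\beta L_p^2W_\phi^TW_\phi\prec\gamma I$, with no free splitting parameter and hence no loss.
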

\begin{proof}
	See the Appendix.
\end{proof}

	
Here we show an example in which the assumption of Proposition~\ref{deltaISSNNARX} is not fulfilled whereas the assumptions of Theorem~\ref{th2} hold. Let $l=1$, $\widetilde{m}=1$, $N=2$, $\nu=1$, \mbox{$L_p=1$}, $W_\phi=\left[\begin{matrix}
	-0.2130&-0.8657&-1.0431&-0.2701
\end{matrix}\right]$ and $W_0=0.6293$. Then, \mbox{$\|W_0\|\|W_\phi\|=0.8801>0.7071=\frac{1}{\sqrt{N}}$}. With the choice $$P=\left[\begin{matrix}
	0.8278&0.0095&0.1847&0\\
	0.0095&1.2258&0.7531&0\\
	0.1847&0.7531&2.5870&0\\
	0&0&0&0.8723
\end{matrix}\right]\,,$$ we have that the maximum eigenvalue of $\widetilde{A}^TP\widetilde{A}-P$ is equal to $-0.239$ and, thus, $\widetilde{A}^TP\widetilde{A}-P\prec0$.

\subsection{Comparison with the stability condition in \cite{hu2002global} for a class of RNN systems}
The work in \cite{hu2002global} analyses the stability properties of a slightly different RNN class represented by \eqref{eq:system_china1}. In \cite[Corollary~1]{hu2002global} some sufficient conditions for global exponential stability are proposed. Among them, the following condition has a similar structure to the one of Theorem \ref{th2}, and it is thus compared. 

\begin{proposition}[\!\!\cite{hu2002global}]\label{prop:china}
	System \eqref{eq:system_china1} is globally exponentially stable for any input $s$ if there exists a matrix $P\!=\!P^T \!\succ\! 0$ such that
	\begin{align}
		\label{chinacond}
		\hat{M}^T P \hat{M} - P \prec 0\,,
	\end{align}
	where $\hat{M}=|E|\, +\, W|O||\hat{A}|$, and $W=\text{diag}(L_{p1},\dots,L_{pn})$.
\end{proposition}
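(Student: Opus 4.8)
The plan is to turn the nonlinear error dynamics into a scalar linear comparison recursion driven by the nonnegative matrix $\hat{M}$, whose Schur stability is exactly what the hypothesis \eqref{chinacond} certifies. First I would record that, since $P=P^T\succ0$ and $\hat{M}^TP\hat{M}-P\prec0$, the matrix $\hat{M}$ is Schur stable. This is the standard discrete-time Lyapunov fact: for any (possibly complex) eigenpair $(\mu,v)$ of $\hat{M}$, extending the real negative-definite form to Hermitian arguments gives $v^*(\hat{M}^TP\hat{M}-P)v=(|\mu|^2-1)\,v^*Pv<0$, and since $v^*Pv>0$ this forces $|\mu|<1$, hence $\rho(\hat{M})<1$. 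Thus $P$ is used only as a certificate that the linear map $v\mapsto\hat{M}v$ is contractive, not directly as a Lyapunov function for the nonlinear system.

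Next I would establish existence and uniqueness of an equilibrium $\bar{x}$ for the constant input $s$. The map $T(x)=Ex+O\,f(\hat{A}x+s)$ satisfies, using the element-wise Lipschitz property of $f$ and the triangle inequality applied row by row to $\hat{A}$, the entrywise bound $|T(x)-T(y)|\leq(|E|+W|O||\hat{A}|)\,|x-y|=\hat{M}\,|x-y|$, where $|\cdot|$ denotes the entrywise absolute value. Because $\rho(\hat{M})<1$, there is a monotone weighted sup-norm in which $\hat{M}$ is a strict contraction, so $T$ is a contraction and Banach's fixed-point theorem yields a unique $\bar{x}=T(\bar{x})$, depending on $s$.

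I would then introduce the error $\eta(k)=x(k)-\bar{x}$. Subtracting $\bar{x}=E\bar{x}+O\,f(\hat{A}\bar{x}+s)$ from \eqref{eq:system_china1} gives $\eta(k+1)=E\eta(k)+O\big(f(\hat{A}x(k)+s)-f(\hat{A}\bar{x}+s)\big)$. Taking absolute values componentwise and using $|f_i(\hat{A}x(k)+s)-f_i(\hat{A}\bar{x}+s)|\leq L_{pi}\,|(\hat{A}\eta(k))_i|$ together with $|(\hat{A}\eta(k))_i|\leq(|\hat{A}|\,|\eta(k)|)_i$, I obtain the key inequality $|\eta(k+1)|\leq\hat{M}\,|\eta(k)|$. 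Since $\hat{M}\geq0$ entrywise, iterating this monotone inequality yields $|\eta(k)|\leq\hat{M}^k|\eta(0)|$ componentwise. From $\rho(\hat{M})<1$ there exist $c>0$ and $\rho\in(0,1)$ with $\|\hat{M}^k\|\leq c\rho^k$, so $\|\eta(k)\|=\big\||\eta(k)|\big\|\leq c\rho^k\|\eta(0)\|$, which is precisely global exponential stability of $\bar{x}$ for the given $s$.

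The hard part will be the passage from the single quadratic certificate $P$ to the decay of the genuinely nonlinear error dynamics. A direct attempt with $V(\eta)=\eta^TP\eta$ fails, because $P$ need not be entrywise nonnegative, and therefore $|\eta(k+1)|\leq\hat{M}|\eta(k)|$ does not translate into a decrease of the quadratic form. The resolution is to decouple the two roles: let $P$ certify only $\rho(\hat{M})<1$ through \eqref{chinacond}, and transfer this bound to the nonlinear system through the entrywise comparison inequality, exploiting the nonnegativity of $\hat{M}$ and a monotone norm. Setting up this comparison correctly, and pairing it with the equilibrium-existence argument, is the delicate step; the Lipschitz/row-sum estimates and the final exponential bound are then routine.
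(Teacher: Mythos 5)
Proposition~\ref{prop:china} is quoted from \cite{hu2002global} and the paper gives no proof of it, so there is nothing internal to compare against; judged on its own, your argument is correct and complete. You correctly identify the one non-obvious point: $P$ serves only as a certificate that $\rho(\hat{M})<1$, and the actual decay estimate is obtained not from the quadratic form but from the entrywise comparison $|x(k+1)-\bar{x}|\leq \hat{M}\,|x(k)-\bar{x}|$, which can be iterated precisely because $\hat{M}$ is entrywise nonnegative. The supporting steps all check out: negative definiteness of the real symmetric matrix $\hat{M}^TP\hat{M}-P$ does extend to Hermitian arguments, giving $\rho(\hat{M})<1$; a nonnegative Schur matrix admits a monotone weighted sup-norm in which it is a strict contraction, which yields the unique equilibrium via Banach's theorem; and the monotonicity of the Euclidean norm on the nonnegative orthant justifies passing from the componentwise bound $|\eta(k)|\leq\hat{M}^k|\eta(0)|$ to $\|\eta(k)\|\leq c\rho^k\|\eta(0)\|$. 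This comparison-system route is the standard way such results are established in the cited literature, so your reconstruction is in the expected spirit rather than a genuinely different path.
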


For the sake of comparison, we now show that the assumptions of Theorem \ref{th2} are less conservative than the one of Proposition \ref{prop:china} for the class of RNN systems \eqref{eq:system_china1} with $E=0_{n,n}$ and $o_i = 1$ $\forall i \in \mathcal{L}$. Therefore, the following proposition is stated.

\begin{proposition}\label{prop:china2}
	The set of systems \eqref{eq:system_china1} satisfying the assumption of Proposition \ref{prop:china} is a subset of the set of systems \eqref{eq:system_china1} satisfying the assumptions of Theorem \ref{th2} if $E=0_{n,n}$ and $o_i = 1$ $\forall i \in \mathcal{L}$.
\end{proposition}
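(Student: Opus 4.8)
The plan is to reduce the claim to a diagonal Lyapunov construction for nonnegative matrices, exploiting an entrywise identity between the matrix $\widetilde{A}$ appearing in Theorem \ref{th2} and the matrix $\hat{M}$ appearing in Proposition \ref{prop:china}. First I would make the correspondence between \eqref{eq:system_china1} (with $E=0_{n,n}$ and $o_i=1$ for $i\in\mathcal{L}$) and \eqref{nonlinclass} explicit. Reading the dynamics row-wise, $x_i(k+1)=o_i f_i(\hat{A}_i x(k)+s_i)$; since $o_i=1$ on $\mathcal{L}$ and $f_i=id$ off $\mathcal{L}$, this matches $x(k+1)=f(Ax(k)+Bu(k))$ upon setting $A=O\hat{A}$, $B=O$ and $u=s$. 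Hence $\widetilde{A}=WA=WO\hat{A}$ has entries $\widetilde{A}_{ij}=L_{pi}o_i\hat{A}_{ij}$, while $\hat{M}=|E|+W|O||\hat{A}|=W|O||\hat{A}|$ has entries $\hat{M}_{ij}=L_{pi}|o_i||\hat{A}_{ij}|$. Therefore $|\widetilde{A}|=\hat{M}$ entrywise, and $\hat{M}$ is an entrywise nonnegative matrix.

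Second, I would note that a system \eqref{eq:system_china1} belongs to the set defined by Proposition \ref{prop:china} precisely when $\hat{M}^TP\hat{M}-P\prec0$ admits a solution $P=P^T\succ0$, which by the discrete-time Lyapunov theorem is equivalent to $\hat{M}$ being Schur stable, i.e.\ its spectral radius satisfies $\rho(\hat{M})<1$. The core step is then to produce a \emph{diagonal} $\bar{P}=\text{diag}(\bar{p}_1,\dots,\bar{p}_n)\succ0$ with $\hat{M}^T\bar{P}\hat{M}-\bar{P}\prec0$. This is where I expect the main obstacle: passing from a full Lyapunov matrix for $\hat{M}$ to a diagonal one. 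It rests on the fact that a nonnegative Schur-stable matrix is \emph{Schur diagonally stable}, i.e.\ $\rho(\hat{M})<1$ already guarantees a positive diagonal solution of the Stein inequality. I would either invoke this as a known result on diagonal stability, or establish it directly using positive vectors $u,v>0$ with $\hat{M}^Tu<u$ and $\hat{M}v<v$ (available since $\rho(\hat{M})<1$ makes $I-\hat{M}$ a nonsingular M-matrix), e.g.\ testing the candidate $\bar{P}=\text{diag}(u_i/v_i)$.

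Third, I would transfer this diagonal certificate from $\hat{M}$ to $\widetilde{A}$ by a monotonicity argument that crucially uses both the diagonality of $\bar{P}$ and $|\widetilde{A}|=\hat{M}$. Writing $|\xi|$ for the entrywise absolute value of a vector $\xi$, the triangle inequality gives $|(\widetilde{A}\xi)_i|\le\sum_j|\widetilde{A}_{ij}||\xi_j|=(\hat{M}|\xi|)_i$, so for every $\xi\neq0$
\begin{equation*}
	\xi^T\widetilde{A}^T\bar{P}\widetilde{A}\xi=\sum_i\bar{p}_i(\widetilde{A}\xi)_i^2\le\sum_i\bar{p}_i(\hat{M}|\xi|)_i^2=|\xi|^T\hat{M}^T\bar{P}\hat{M}|\xi|<|\xi|^T\bar{P}|\xi|=\xi^T\bar{P}\xi\,,
\end{equation*}
where the final equality holds precisely because $\bar{P}$ is diagonal. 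Thus $\widetilde{A}^T\bar{P}\widetilde{A}-\bar{P}\prec0$, and since $\bar{P}$ is diagonal it trivially fulfils $p_{ij}=p_{ji}=0$ for all $i\in\mathcal{L}$, $j\neq i$, so $\bar{P}$ certifies the hypotheses of Theorem \ref{th2}. This establishes that every system satisfying Proposition \ref{prop:china} (with $E=0_{n,n}$ and $o_i=1$ on $\mathcal{L}$) also satisfies Theorem \ref{th2}, i.e.\ the claimed inclusion. The correspondence and the monotonicity transfer are routine once $|\widetilde{A}|=\hat{M}$ is in hand; the genuine difficulty is concentrated in the diagonal-stability step.
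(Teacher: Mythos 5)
Your proof is correct and follows essentially the same route as the paper's: the paper likewise reduces the hypothesis of Proposition \ref{prop:china} to the existence of a \emph{diagonal} Lyapunov certificate for the nonnegative matrix $\hat{M}=W|A|$ (by invoking the diagonal-stability theorem for discrete-time positive linear systems, which is exactly the step you flag as the genuine difficulty), and then transfers that certificate to $\widetilde{A}$ through the same triangle-inequality monotonicity argument, the diagonality of $P$ automatically meeting the structural requirement of Theorem \ref{th2}. There are no substantive differences.
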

\begin{proof}
	See the Appendix.
\end{proof}
Here we show an example in which the assumption of Proposition \ref{prop:china} is not fulfilled whereas the ones of Theorem~\ref{th2}~hold. Let $n=2$, $E=0_{2,2}$, $O=I_2$, $W=I_2$, $\mathcal{L}=\{1,2\}$, $$\hat{A}=\left[\begin{matrix}
	0.4178&-0.8544\\0.8199&0.3573
\end{matrix}\right].$$ Hence, the spectral radius of $\hat{M}$ is equal to $1.225$ and so \eqref{chinacond} is never fulfilled. With the choice $$P=\left[\begin{matrix}
	1.2122&0\\
	0&1.2657
\end{matrix}\right]\,,$$ we have that the maximum eigenvalue of $\widetilde{A}^TP\widetilde{A}-P$ is equal to $-0.1135$ and, thus, $\widetilde{A}^TP\widetilde{A}-P\prec0$.
	
\section{$\delta$ISS of interconnected RNNs}
\label{sec:inter}
We now show how Theorem \ref{th2} also applies to the interconnection of systems in class \eqref{nonlinclassall}.
\subsection{Feedforward interconnection}
In this section we investigate the $\delta$ISS conditions of the series of $M_s$ systems lying in the class \eqref{nonlinclassall}. Specifically, each system is numbered in increasing order with respect to $i$ and defined by
\begin{subequations}
	\label{cascall}
	\begin{align}
		\label{cascx}
		x_i(k+1)&=f_{s_i}(A_ix_i(k)+B_iu_i(k))\,, \\
		\label{cascy}
		y_i(k)&=C_ix_i(k)+D_iu_i(k)\,,
	\end{align}
\end{subequations}
where $u_i(k)\in\mathbb{R}^{m_i}$, $y_i(k)\in\mathbb{R}^{l_i}$, and $x_i(k)\in\mathbb{R}^{n_i}$. Because of the series interconnection, it holds that $u_i(k)=y_{i-1}(k)$, for all \mbox{$i=2,...,M_s$}, and the following assumption is required to have dimensional consistency,
\begin{assum}
	\label{Ass2}
	We assume that the dimension of $u_i(k)$ is the same of $y_{i-1}(k)$, i.e., $m_i=l_{i-1}$, for all \mbox{$i=2,...,M_s$}. 
\end{assum}
We can state the following result.
\begin{proposition}
	\label{ffp}
	Let Assumption \ref{Ass2} hold. The series of $M_s$ systems in the class \eqref{nonlinclassall} lies in the class \eqref{nonlinclassall}.
\end{proposition}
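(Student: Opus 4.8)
The plan is to establish closure of the class \eqref{nonlinclassall} under series interconnection by explicitly constructing the aggregate system matrices. I would stack the individual states into $x(k)=[x_1(k)^T,\dots,x_{M_s}(k)^T]^T\in\mathbb{R}^{n}$ with $n=\sum_{i=1}^{M_s}n_i$, take the exogenous variable of the overall system to be $u(k)=u_1(k)$, and take its output to be $y(k)=y_{M_s}(k)$. The objective is then to exhibit matrices $A,B,C,D$ together with a vector function $f(\cdot)$ obeying Assumption \ref{Ass1} so that the cascade is written in the form \eqref{nonlinclassall}.

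The crux is to remove the internal signals $u_2,\dots,u_{M_s}$ by forward-propagating the interconnection constraints $u_i(k)=y_{i-1}(k)=C_{i-1}x_{i-1}(k)+D_{i-1}u_{i-1}(k)$. I would show by induction on $i$ that each internal input is a linear function of the aggregate state and the external input, namely $u_i(k)=G_ix(k)+H_iu(k)$, with coefficients generated by $G_1=0$, $H_1=I$, and $G_i=\bar{C}_{i-1}+D_{i-1}G_{i-1}$, $H_i=D_{i-1}H_{i-1}$ for $i\geq2$, where $\bar{C}_{i-1}$ denotes $C_{i-1}$ embedded in the block column associated with $x_{i-1}$ and zero elsewhere. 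Assumption \ref{Ass2}, i.e. $m_i=l_{i-1}$, is precisely what makes the products $D_{i-1}G_{i-1}$ and $D_{i-1}H_{i-1}$ dimensionally consistent, so this is the step that uses that hypothesis.

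Having expressed every $u_i$ linearly, I would substitute into each block update to obtain $x_i(k+1)=f_{s_i}(A_ix_i(k)+B_i(G_ix(k)+H_iu(k)))$, whose argument is linear in $x(k)$ and $u(k)$ with no constant term, so the $f(Ax+Bu)$ structure is preserved. Reading off coefficients, the $i$-th block row of $A$ carries $A_i$ on its diagonal block plus $B_iG_i$ on the strictly preceding blocks (yielding a block lower-triangular $A$), the $i$-th block row of $B$ is $B_iH_i$, and the aggregate nonlinearity is the stacking $f(\cdot)=[f_{s_1}(\cdot)^T,\dots,f_{s_{M_s}}(\cdot)^T]^T$ applied element-wise. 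For the output I would substitute $u_{M_s}=G_{M_s}x+H_{M_s}u$ into \eqref{cascy}, giving $C=\bar{C}_{M_s}+D_{M_s}G_{M_s}$ and $D=D_{M_s}H_{M_s}$.

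Finally, I would observe that the aggregate $f(\cdot)$ automatically satisfies Assumption \ref{Ass1}: each of its scalar components is inherited unchanged from one of the $f_{s_i}$, hence is either nonlinear globally Lipschitz with the same constant or the identity. The only genuine content is the inductive elimination of the internal signals; the remainder is block bookkeeping, and no obstacle arises precisely because the interconnection is feedforward, so no algebraic loop can occur and $A$ is triangular by construction.
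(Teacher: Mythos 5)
Your proof is correct and follows essentially the same route as the paper: both eliminate the internal signals $u_2,\dots,u_{M_s}$ by forward substitution of $u_i=y_{i-1}$, arriving at a block lower-triangular $A$ and the stacked nonlinearity $f(\cdot)=[f_{s_1}(\cdot)^T,\dots,f_{s_{M_s}}(\cdot)^T]^T$. The only cosmetic difference is that you package the propagated coefficients via the recursion $G_i=\bar{C}_{i-1}+D_{i-1}G_{i-1}$, $H_i=D_{i-1}H_{i-1}$, whereas the paper writes out the equivalent closed-form products $\prod_j D_{i-j}$ explicitly.
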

\begin{proof}
	Firstly, note that the input of the first subsystem is the input of the overall series interconnection, i.e., $u(k)=u_1(k)$, whereas the output of the last subsystem of the series is the overall output, i.e., $y(k)=y_{M_s}(k)$. Since $y_i(k)=u_{i-1}(k)$, for all $i=2,...,M_s$, due to the series interconnection and under Assumption \ref{Ass2}, the second subsystem can be written as
	\begin{align}
		\label{cascx22}
		\begin{split}
			x_2(k+1)&=f_{s_2}(A_2x_2(k)+B_2C_1x_1(k)+B_2D_1u(k)) \,,\\
			y_2(k)&=C_2 x_2(k)+D_2 C_1 x_1(k)+D_2 D_1 u(k)\,.
		\end{split}
	\end{align}
	Following the same reasoning, for $i=3,...,M_s$, it holds that
	\begin{subequations}
		\label{cascall2}
		\begin{align}
			\label{cascx2}
			x_i(k+1)&=f_{s_i}\big(B_i(\sum_{h=1}^{i-2}(\prod_{j=0}^{i-h-2}D_{i-1-j})C_hx_h(k))+B_iC_{i-1}x_{i-1}(k)+A_ix_i(k)+B_i(\prod_{j=0}^{i-2}D_{i-1-j})u(k))\,, \\
			\label{cascy2}
			y_i(k)&=\sum_{h=1}^{i-1}(\prod_{j=0}^{i-h-1}D_{i-j})C_hx_h(k)+C_ix_i(k)+(\prod_{j=0}^{i-1}D_{i-j})u(k)\,.
		\end{align}
	\end{subequations}
	From \eqref{cascall}-\eqref{cascx22}-\eqref{cascall2}, by introducing the extended state vector $x(k)=\begin{bmatrix}x_1(k)^T&\dots&x_{M_s}(k)^T\end{bmatrix}^T$, it is possible to construct the matrices \mbox{$A$, $B$, $C$, $D$} and the vector \mbox{$f(\cdot)=\begin{bmatrix}f_{s_1}(\cdot)^T&\dots&f_{s_{M_s}}(\cdot)^T\end{bmatrix}^T$} of the overall system in the form \eqref{nonlinclassall}. This concludes the proof.
\end{proof}

To guarantee the $\delta$ISS of the series of $M_s$ systems in the form \eqref{nonlinclassall}, one way is to write the overall series system as \eqref{nonlinclassall}, and then to impose the sufficient condition for $\delta$ISS in Theorem \ref{th2} to the overall system. Alternatively, we can impose the sufficient condition for $\delta$ISS in Theorem \ref{th2} to each subsystem. In \cite[Proposition 4.7]{angeli2002lyapunov}, a theoretical result proves that the series interconnection of two $\delta$ISS continuous-time systems is $\delta$ISS. The same property can be extended to discrete-time systems. In the following Proposition~\ref{cas} we show that, given a series of discrete-time systems each one satisfying the assumptions of Theorem \ref{th2} for $\delta$ISS, their series interconnection satisfies the same assumptions.
\begin{proposition}
	\label{cas}
	Let us consider a series of systems in the class~\eqref{nonlinclassall}, each one fulfilling the assumptions of Theorem~\ref{th2}. The series of these systems satisfies the assumptions of Theorem~\ref{th2}.
\end{proposition}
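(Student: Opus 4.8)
The plan is to exhibit, for the overall series system (which lies in class~\eqref{nonlinclassall} by Proposition~\ref{ffp}), a matrix $P$ of exactly the form demanded by Theorem~\ref{th2}, assembled from the certificates $P_1,\dots,P_{M_s}$ guaranteed for the individual subsystems. First I would record the structural fact implicit in the proof of Proposition~\ref{ffp}: since subsystem $i$ is driven only by the states of subsystems $1,\dots,i$, the overall matrix $A$ is block lower triangular with diagonal blocks $A_1,\dots,A_{M_s}$. Because the overall Lipschitz weighting $W=\text{diag}(W_1,\dots,W_{M_s})$ is block diagonal (each $W_i$ collecting the Lipschitz constants of $f_{s_i}$), the matrix $\widetilde{A}=WA$ is again block lower triangular with diagonal blocks $\widetilde{A}_i=W_iA_i$.

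I would then take the candidate $P=\text{diag}(\theta_1 P_1,\dots,\theta_{M_s}P_{M_s})$ with scalars $\theta_i>0$ to be fixed. The structural constraint of Theorem~\ref{th2} is inherited for free: the nonlinear index set $\mathcal{L}$ of the overall system is the disjoint union of the local sets $\mathcal{L}_i$, each block $\theta_i P_i$ already has vanishing off-diagonal entries along the rows and columns indexed by $\mathcal{L}_i$, and the block-diagonal form of $P$ forces every cross-block entry to be zero; symmetry and positive definiteness are clearly preserved under positive scaling. It thus remains only to select the $\theta_i$ so that $\widetilde{A}^T P\widetilde{A}-P\prec0$.

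I would establish the latter by induction on $M_s$, peeling off the most downstream subsystem. Writing $\widetilde{A}=\begin{bmatrix}\widetilde{A}_u & 0\\ G & \widetilde{A}_{M_s}\end{bmatrix}$, where $\widetilde{A}_u$ is the self-contained, block-lower-triangular dynamics of subsystems $1,\dots,M_s-1$ and $G$ is the coupling of their states into subsystem $M_s$, and taking $P=\text{diag}(\theta P_u,P_{M_s})$ with $P_u$ the certificate furnished by the inductive hypothesis, a direct block expansion yields
\begin{equation*}
\widetilde{A}^T P\widetilde{A}-P=\begin{bmatrix}-\theta Q_u + G^T P_{M_s} G & G^T P_{M_s}\widetilde{A}_{M_s}\\ \widetilde{A}_{M_s}^T P_{M_s} G & -Q_{M_s}\end{bmatrix},
\end{equation*}
where $Q_u=P_u-\widetilde{A}_u^T P_u\widetilde{A}_u\succ0$ and $Q_{M_s}=P_{M_s}-\widetilde{A}_{M_s}^T P_{M_s}\widetilde{A}_{M_s}\succ0$. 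Since the $(2,2)$ block is negative definite, a Schur-complement argument reduces negative definiteness of the whole matrix to $\theta Q_u\succ G^T\bigl(P_{M_s}+P_{M_s}\widetilde{A}_{M_s}Q_{M_s}^{-1}\widetilde{A}_{M_s}^T P_{M_s}\bigr)G$, whose right-hand side is a fixed positive semidefinite matrix; as $Q_u\succ0$, this holds for every sufficiently large $\theta$. The base case $M_s=1$ is immediate, as the single system already satisfies the assumptions of Theorem~\ref{th2}.

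The only place where genuine work is required — and hence the main obstacle — is controlling the cross terms produced by the triangular coupling block $G$ in $\widetilde{A}^T P\widetilde{A}$. The Schur complement makes precise the intuition that weighting the upstream subsystems progressively more heavily (taking $\theta$ large) dominates these terms, which is exactly why a common \emph{block-diagonal} certificate with the prescribed sparsity pattern continues to exist despite the interconnection.
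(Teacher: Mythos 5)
Your proof is correct and follows essentially the same route as the paper's: both exploit the block lower triangular structure of $\widetilde{A}$ induced by the series interconnection and assemble a block-diagonal certificate from the subsystem matrices with a large scalar weight on the upstream block (the paper's $\rho$ playing the role of your $\theta$). The only difference is cosmetic — the paper verifies the resulting matrix inequality for two systems via the quadratic-form splitting of Property~\ref{Prop1} (introducing an auxiliary $\tau$) and then iterates, whereas you use a Schur complement within an explicit induction, an equivalent way of dominating the same cross terms.
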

\begin{proof}
	See the Appendix.
\end{proof}
\subsection{Feedback interconnection}

\begin{figure}[t]
	\centering
	\includegraphics[width=0.5\columnwidth]{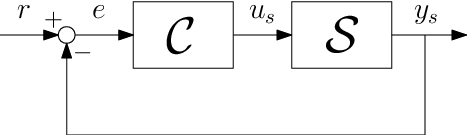}
	\caption{Feedback control scheme: $\mathcal{C}$ is the controller, $\mathcal{S}$ is the system to be controlled, $r$ is the reference signal, $e$ is the tracking error, $u_s$ is the manipulated variable, and $y_s$ is the output of the system.}
	\label{fig: CS}
\end{figure}

In this section we investigate the $\delta$ISS conditions of the feedback of two systems lying in the class \eqref{nonlinclassall}. The baseline feedback control scheme is depicted in Figure \ref{fig: CS}. We define the equations of the controller $\mathcal{C}$ as
\begin{subequations}
	\label{contall}
	\begin{align}
		\label{contx}
		x_c(k+1)&=f_{c}(A_cx_c(k)+B_cu_c(k))\,, \\
		\label{conty}
		y_c(k)&=C_cx_c(k)+D_cu_c(k)\,,
	\end{align}
\end{subequations}
where $u_c(k)\in\mathbb{R}^{m_c}$, $y_c(k)\in\mathbb{R}^{l_c}$, and $x_c(k)\in\mathbb{R}^{n_c}$. We consider the case in which the system is strictly proper to avoid algebraic loops. Thus, we define the equations of the controlled system $\mathcal{S}$ as
\begin{subequations}
	\label{sisall}
	\begin{align}
		\label{sisx}
		x_s(k+1)&=f_{s}(A_sx_s(k)+B_su_s(k))\,,\\
		\label{sisy}
		y_s(k)&=C_sx_s(k)\,,
	\end{align}
\end{subequations}
where $u_s(k)\in\mathbb{R}^{m_s}$, $y_s(k)\in\mathbb{R}^{l_s}$, and $x_s(k)\in\mathbb{R}^{n_s}$. The following assumption is required to have dimensional consistency.
\begin{assum}
	\label{Ass3}
	We assume that the dimension of $u_s(k)$ is the same of $y_c(k)$, and the one of $u_c(k)$ is the same of $y_s(k)$, i.e., $m_s=l_c$ and $m_c=l_s$. 
\end{assum}
We can state the following result. 
\begin{proposition}
	\label{fbp}
	Let Assumption \ref{Ass3} hold. The feedback interconnection in Figure \ref{fig: CS} of the systems \eqref{contall}-\eqref{sisall} in the class \eqref{nonlinclassall} lies in the class \eqref{nonlinclassall}. 
\end{proposition}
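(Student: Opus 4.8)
The plan is to proceed exactly as in the proof of Proposition~\ref{ffp}: exhibit explicit matrices $A$, $B$, $C$, $D$ together with a stacked activation $f(\cdot)$ that cast the closed loop into the form \eqref{nonlinclassall}, using the extended state $x(k)=\begin{bmatrix}x_c(k)^T & x_s(k)^T\end{bmatrix}^T$, the exogenous variable $u(k)=r(k)$, and the overall output $y(k)=y_s(k)$. First I would fix the interconnection signals dictated by Figure~\ref{fig: CS}: the controller is driven by the tracking error, so $u_c(k)=r(k)-y_s(k)$, while the plant is driven by the controller output, so $u_s(k)=y_c(k)$. Assumption~\ref{Ass3} guarantees that these two identities are dimensionally consistent.

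The key step is to resolve the instantaneous (algebraic) relations without incurring an algebraic loop, and here the strict properness of $\mathcal{S}$ is essential. Because $y_s(k)=C_sx_s(k)$ in \eqref{sisy} has no feedthrough from $u_s$, the error signal $u_c(k)=r(k)-C_sx_s(k)$ depends only on the state $x_s$ and on $r$. Substituting this into \eqref{conty} gives $u_s(k)=y_c(k)=C_cx_c(k)-D_cC_sx_s(k)+D_cr(k)$, again expressed purely in terms of the extended state and $r$. Thus every interconnection signal is a static function of $x(k)$ and $r(k)$, with no circular dependence remaining.

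Inserting these expressions into the state equations \eqref{contx} and \eqref{sisx} and collecting terms then yields the closed-loop dynamics in the required form, with
\begin{align*}
	A = \begin{bmatrix} A_c & -B_cC_s \\ B_sC_c & A_s-B_sD_cC_s \end{bmatrix}, \qquad B = \begin{bmatrix} B_c \\ B_sD_c \end{bmatrix},
\end{align*}
$C=\begin{bmatrix}0_{l_s,n_c} & C_s\end{bmatrix}$, $D=0_{l_s,m_c}$, and $f(\cdot)=\begin{bmatrix}f_c(\cdot)^T & f_s(\cdot)^T\end{bmatrix}^T$. Since $f(\cdot)$ is merely the stacking of the two subsystem activations applied element-wise, it is again a vector of scalar functions applied component-wise, so the feedback interconnection indeed lies in the class~\eqref{nonlinclassall}.

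I expect the only genuine obstacle to be the bookkeeping of the algebraic relations and the verification that no algebraic loop arises; this is precisely what the strict-properness hypothesis on $\mathcal{S}$ (the absence of a feedthrough term in \eqref{sisy}) is designed to rule out. Everything else is routine substitution, entirely analogous to the cascade construction carried out in the proof of Proposition~\ref{ffp}.
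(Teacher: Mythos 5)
Your proposal is correct and follows essentially the same route as the paper's proof: resolve the interconnection signals $u_c(k)=r(k)-C_sx_s(k)$ and $u_s(k)=C_cx_c(k)+D_cu_c(k)$ using the strict properness of $\mathcal{S}$, then stack the states and activations to obtain exactly the matrices $A$, $B$, $C$, $D$ that the paper exhibits (your $D=0_{l_s,m_c}$ coincides with the paper's $0_{l_s,l_s}$ since $m_c=l_s$ under Assumption \ref{Ass3}). No gaps.
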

\begin{proof}
	Firstly, note from Figure \ref{fig: CS} that $r(k)$ is overall input to the feedback interconnection, i.e., $u(k)=r(k)$, whereas $y_s(k)$ is the overall output, i.e., $y(k)=y_s(k)$. Under Assumption~\ref{Ass3}, since $u_c(k)=e(k)=r(k)-y_s(k)=r(k)-C_sx_s(k)$ due to the negative feedback interconnection, and \mbox{$u_s(k)=y_c(k)=C_cx_c(k)+D_cu_c(k)$}, we can write the equations of the overall closed-loop system in the formulation \eqref{nonlinclassall} through the following definitions:
	\begin{align*}
		A =\left[\begin{matrix}
			A_c&-B_c C_s\\
			B_s C_c&A_s-B_s D_c C_s
		\end{matrix}\right]\,, \ \ \ B =\left[\begin{matrix}
			B_c\\
			B_s D_c
		\end{matrix}\right]\,,
	\end{align*} 
	$C=\begin{bmatrix}0_{l_s,n_c}&C_s\end{bmatrix}$, $D=0_{l_s,l_s}$, $x(k)=\begin{bmatrix}x_c(k)^T&x_s(k)^T\end{bmatrix}^T$, and $f(\cdot)=\begin{bmatrix}f_{c}(\cdot)^T&f_{s}(\cdot)^T\end{bmatrix}^T$.
\end{proof}
It is possible to show that the previous result holds also for the case in which the controller is strictly proper and the system is not. We omit the proof for the sake of conciseness.

In view of Proposition~\ref{fbp} and Theorem~\ref{th2}, it is possible to analyse or enforce (through the tuning of the parameters of $\mathcal{C}$) the $\delta$ISS property to the feedback control scheme in Figure~\ref{fig: CS}, where both the controller and the system are in the class \eqref{nonlinclassall}.

\section{Controller design with $\delta$ISS guarantees}
\label{sec:lmi}
This section discusses the design of controllers that confer $\delta$ISS guarantees to the control system. In this paper, we will not focus on the performances of the control system, which will be a matter of future research. 

In general, the $\delta$ISS condition \eqref{lyaplin} in Theorem~\ref{th2} corresponds to a nonlinear constraint in control design, which can be handled by common nonlinear solvers. On the other hand, there are some particular cases in which it can be reformulated as a linear matrix inequality (LMI) constraint, as shown in the following section.

\subsection{LMI-based control design}
We consider a control system whose overall equations are in the class \eqref{nonlinclassall}. In this section we show that, in some particular cases, the matrix $A$ of the closed-loop system can be written as $A=F+GJ$, or as $A=F+JG$, where $F$ and $G$ are known matrices depending on the system to be controlled, and $J$ is a matrix to be tuned taking the role of the control gain. Therefore, the objective is to tune $J$ so that the closed-loop system enjoys the $\delta$ISS property. In this respect, the following results hold.
\begin{proposition}
	\label{lmicl}
	Let us consider a system with equations in the class \eqref{nonlinclassall}, where $A=F+GJ$. Let $\widetilde{F}=WF$ and \mbox{$\widetilde{G}=WG$}, where $W=\text{diag}(L_{p1},\dots,L_{pn})$ is the diagonal matrix defined in Section \ref{sec:th2}. If $\exists P=P^T$, having the structure required by Theorem \ref{th2}, and $\exists H$ such that
	\begin{align}
		\label{lmi1}
		\begin{bmatrix}
			P & (\widetilde{F} P+ \widetilde{G}H)^T \\
		\widetilde{F}P+\widetilde{G}H & P
		\end{bmatrix}\succ0\,,
	\end{align}
	then, if we set $J=HP^{-1}$, the system \eqref{nonlinclassall} is $\delta$ISS.
\end{proposition}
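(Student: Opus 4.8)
The plan is to reduce the LMI \eqref{lmi1} to the Lyapunov-type inequality \eqref{lyaplin} of Theorem \ref{th2}, applied with the \emph{inverse} matrix $P^{-1}$ as the stability certificate. First I would rewrite $\widetilde{A}$ in terms of the data of the proposition. Since $A=F+GJ$ and $W$ is the fixed diagonal matrix of Section \ref{sec:th2}, we have $\widetilde{A}=WA=WF+WGJ=\widetilde{F}+\widetilde{G}J$. Substituting the proposed gain $J=HP^{-1}$ and multiplying on the right by $P$ gives $\widetilde{A}P=\widetilde{F}P+\widetilde{G}JP=\widetilde{F}P+\widetilde{G}H$, so the off-diagonal block of \eqref{lmi1} is exactly $\widetilde{A}P$. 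Hence \eqref{lmi1} reads
\begin{equation*}
\begin{bmatrix} P & (\widetilde{A}P)^T \\ \widetilde{A}P & P\end{bmatrix}\succ0\,.
\end{equation*}

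From here I would proceed via the Schur complement. Positive definiteness of the block matrix forces its $(1,1)$ block $P\succ0$, which is one of the hypotheses of Theorem \ref{th2}. Taking the Schur complement with respect to the (invertible) $(2,2)$ block $P$ yields $P-(\widetilde{A}P)^TP^{-1}(\widetilde{A}P)=P-P\widetilde{A}^TP^{-1}\widetilde{A}P\succ0$. Applying the congruence transformation $X\mapsto P^{-1}XP^{-1}$, which preserves sign-definiteness since $P^{-1}$ is nonsingular and symmetric, I obtain $P^{-1}-\widetilde{A}^TP^{-1}\widetilde{A}\succ0$, that is, $\widetilde{A}^T(P^{-1})\widetilde{A}-(P^{-1})\prec0$. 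This is precisely inequality \eqref{lyaplin} of Theorem \ref{th2}, now with the certificate $P^{-1}$ in place of $P$.

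The final and most delicate step is to verify that $P^{-1}$ is an \emph{admissible} certificate for Theorem \ref{th2}, i.e., that it inherits the prescribed sparsity. By hypothesis $P$ is block diagonal with the structure $p_{ij}=p_{ji}=0$ for all $i\in\mathcal{L}$ and $j\neq i$; after a suitable permutation this means $P=\text{diag}(\Lambda,P_0)$, where $\Lambda$ is diagonal (the rows and columns indexed by $\mathcal{L}$) and $P_0=P_0^T\succ0$ collects the remaining indices. Since the inverse of a block diagonal matrix is the block diagonal matrix of the inverses and the inverse of a diagonal matrix is diagonal, $P^{-1}=\text{diag}(\Lambda^{-1},P_0^{-1})$ retains exactly the same zero pattern; in particular the $\mathcal{L}$-indexed rows and columns of $P^{-1}$ have only their diagonal entries nonzero. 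Moreover $P^{-1}=P^{-T}\succ0$. Thus $P^{-1}$ satisfies all the structural and definiteness requirements of Theorem \ref{th2}, and together with the Lyapunov inequality derived above this proves that the closed-loop system \eqref{nonlinclassall} is $\delta$ISS.

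I expect this structure-preservation argument to be the main point to get right: it is what makes the standard change of variable $H=JP$ (which linearizes the bilinear term $\widetilde{G}J$) compatible with the restricted class of Lyapunov matrices admitted by Theorem \ref{th2}. Had $P$ been merely positive definite rather than block diagonal with the $\mathcal{L}$-block diagonal, the inverse would not in general preserve the required sparsity and the reduction would fail.
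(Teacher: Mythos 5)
Your proof is correct and follows essentially the same route as the paper's: substitute $H=JP$ to identify the off-diagonal block with $\widetilde{A}P$, apply the Schur complement, perform the congruence by $P^{-1}$ to obtain $\widetilde{A}^TP^{-1}\widetilde{A}-P^{-1}\prec0$, and observe that $P^{-1}$ inherits the sparsity pattern required by Theorem \ref{th2}. Your explicit permutation argument for why inversion preserves the $\mathcal{L}$-indexed diagonal structure is a welcome elaboration of a step the paper states in a single sentence.
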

\begin{proof}
	Firstly, note that from \eqref{lmi1} it follows that $P\succ0$ and $P^{-1}\succ0$. Secondly, by resorting to the Schur complement, it holds that
	\begin{align*}
		P-(\widetilde{F} P+\widetilde{G} H)^TP^{-1}(\widetilde{F} P+\widetilde{G} H)\succ0\,,
	\end{align*}
	Since $H=JP$, then we can write
	\begin{align}
		P(\widetilde{F}+\widetilde{G}J)^TP^{-1}(\widetilde{F}+\widetilde{G}J)P-PP^{-1}P&\prec0\,, \notag\\
		P((\widetilde{F}+\widetilde{G}J)^TP^{-1}(\widetilde{F}+\widetilde{G}J)-P^{-1})P&\prec0\,, \notag\\
		\widetilde{A}^TP^{-1}\widetilde{A}-P^{-1}&\prec0\,, \label{lmi2}
	\end{align}
	where $\widetilde{A}=\widetilde{F}+\widetilde{G}J$. Note that $P^{-1}=P^{-T}\succ0$ is a matrix with the same structure of $P$. From \eqref{lmi2}, the assumptions of Theorem~\ref{th2} hold, concluding the proof.
\end{proof}
\begin{proposition}
	\label{lmicl2}
	Let us consider a system with equations in the class \eqref{nonlinclassall}, where $A=F+JG$. Let $\widetilde{F}=WF$, where $W=\text{diag}(L_{p1},\dots,L_{pn})$ is the diagonal matrix defined in Section \ref{sec:th2} and $L_{pi}>0$ $\forall i$. If $\exists P=P^T$, having the structure required by Theorem \ref{th2}, and $\exists H$ such that
	\begin{align}
		\label{lmi1b}
		\begin{bmatrix}
			P-\widetilde{F}^TP\widetilde{F}-G^TH^T\widetilde{F}-\widetilde{F}^THG & G^TH^T \\
			HG & P
		\end{bmatrix}\succ0\,,
	\end{align}
	then, if we set $J=W^{-1}P^{-1}H$, the system \eqref{nonlinclassall} is $\delta$ISS.
\end{proposition}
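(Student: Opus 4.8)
The plan is to reduce the statement to condition~\eqref{lyaplin} of Theorem~\ref{th2} applied to the weighted closed-loop matrix $\widetilde{A}=WA$, mirroring the proof of Proposition~\ref{lmicl} but accounting for the gain $J$ now sitting on the left of $G$. First I would use the standing hypothesis $L_{pi}>0$ for all $i$, which makes $W$ invertible, so that the prescribed gain $J=W^{-1}P^{-1}H$ gives $WJ=P^{-1}H$. Consequently
\begin{equation*}
\widetilde{A}=WA=W(F+JG)=\widetilde{F}+(WJ)G=\widetilde{F}+P^{-1}HG\,,
\end{equation*}
which expresses the weighted closed-loop matrix entirely through the data $\widetilde{F},G$ and the decision variables $P,H$.

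Next I would read off two facts from the LMI~\eqref{lmi1b}. Its $(2,2)$ block is $P$, so $P\succ0$ immediately; and since $P\succ0$, the Schur complement with respect to this block turns \eqref{lmi1b} into the equivalent inequality
\begin{equation*}
P-\widetilde{F}^TP\widetilde{F}-G^TH^T\widetilde{F}-\widetilde{F}^THG-G^TH^TP^{-1}HG\succ0\,.
\end{equation*}
The core computation is then to expand $\widetilde{A}^TP\widetilde{A}-P$ with $\widetilde{A}=\widetilde{F}+P^{-1}HG$ and to verify it is the negative of the left-hand side above. Using $P=P^T$ and $PP^{-1}=I$, the two cross terms collapse to $\widetilde{F}^THG$ and $G^TH^T\widetilde{F}$, and the purely quadratic term becomes $G^TH^TP^{-1}HG$, so that I expect to obtain
\begin{equation*}
\widetilde{A}^TP\widetilde{A}-P=-\big(P-\widetilde{F}^TP\widetilde{F}-G^TH^T\widetilde{F}-\widetilde{F}^THG-G^TH^TP^{-1}HG\big)\prec0\,.
\end{equation*}

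To close the argument I would invoke Theorem~\ref{th2}: since $P=P^T\succ0$ is assumed to carry the sparsity structure it requires (zero off-diagonal entries on the rows and columns indexed by $\mathcal{L}$) and satisfies $\widetilde{A}^TP\widetilde{A}-P\prec0$, which is exactly~\eqref{lyaplin}, the closed-loop system is $\delta$ISS. The one delicate point, and where this proof departs from Proposition~\ref{lmicl}, is the placement of the gain relative to $W$: because $J$ multiplies $G$ on the left, the factor $WJG$ keeps $W$ and $G$ separated by the unknown $J$, so one cannot pre-multiply $G$ by $W$ to form a fixed weighted matrix and must instead left-invert $W$ via $J=W^{-1}P^{-1}H$. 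This is precisely why $L_{pi}>0$ is assumed and why the Lyapunov certificate here is $P$ itself rather than $P^{-1}$. The remaining expansion is routine, but it is the main calculation to carry out with care so as to match the two expressions term by term.
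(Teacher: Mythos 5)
Your proposal is correct and follows essentially the same route as the paper: read $P\succ0$ off the $(2,2)$ block, apply the Schur complement to \eqref{lmi1b}, substitute $H=P\widetilde{J}$ with $\widetilde{J}=WJ=P^{-1}HG$'s coefficient (equivalently $\widetilde{A}=\widetilde{F}+P^{-1}HG$), and recognize the resulting inequality as \eqref{lyaplin} for Theorem~\ref{th2}. The term-by-term expansion you outline matches the paper's computation exactly, so no gap remains.
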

\begin{proof}
	Firstly, note that from \eqref{lmi1b} it follows that $P\succ0$. Secondly, by resorting to the Schur complement, it holds that
	\begin{align*}
		P-\widetilde{F}^TP\widetilde{F}-G^TH^T\widetilde{F}-\widetilde{F}^THG-G^TH^TP^{-1}HG\succ0\,,
	\end{align*}
	Since $H=P\widetilde{J}$, where $\widetilde{J}=WJ$, then we can write
	\begin{align}
		\widetilde{F}^TP\widetilde{F}+G^T\widetilde{J}^TP\widetilde{F}+\widetilde{F}^TP\widetilde{J}G+G^T\widetilde{J}^TP\widetilde{J}G-P&\prec0\,, \notag\\
		\widetilde{A}^TP\widetilde{A}-P&\prec0\,, \label{lmi2b}
	\end{align}
	where $\widetilde{A}=\widetilde{F}+\widetilde{J}G$. From \eqref{lmi2b}, the assumptions of Theorem~\ref{th2} hold, concluding the proof.
\end{proof}
Now, we will show some examples of control design problems which can be solved using the results in Propositions~\ref{lmicl} and~\ref{lmicl2}. Note that, if $J$ is a full matrix whose elements are the controller parameters, then $H$ is a full matrix as well. However, if $J$ is a block matrix containing some zero blocks, then further constraints on the structure of the matrices $H$ and $P$ must be considered, as we will see in some of the following examples.\medskip\\
\textbf{Example 1. Static linear state-feedback controller}

We consider the problem of designing a state-feedback gain matrix $K$ such that the control system in Figure \ref{fig: SFCS} enjoys $\delta$ISS. The equations of the system are defined as in \eqref{sisall}. In this example the equation of the controller is
\begin{equation}
		\label{contsf}
		u_s(k)=Kx_s(k)+u_0(k)\,,
\end{equation}
where $K\in\mathbb{R}^{{m_s} \times n_s}$, $u_0(k)\in\mathbb{R}^{m_s}$ is a suitable feedforward term possibly depending upon the reference signal, and the state $x_s(k)$ is measurable or can be estimated by a suitable observer. Note that the state is certainly known in case it depends only on current and past input and output samples, e.g., a shallow NNARX where $W_0$ and $b_0$ are a priori selected and $W_0$ is invertible. In \cite{armenio2019model}, in case of ESNs, a possible observer is proposed. More generally, some insights about the design of suitable observers for generic systems in the class~\eqref{nonlinclassall} are provided in Section \ref{obsd}.

\begin{figure}[t]
	\centering
	\includegraphics[width=0.4\columnwidth]{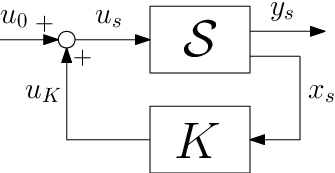}
	\caption{State-feedback control scheme: $K$ is the state-feedback gain matrix, $\mathcal{S}$ is the system to be controlled, $y_s$ is the output of the system, $x_s$ is the state of the system, $u_0$ is the feedforward term, $u_K=Kx_s$, and $u_s$ is the manipulated variable.}
	\label{fig: SFCS}
\end{figure}

Hence, the closed-loop system has equations in the class~\eqref{nonlinclassall}, where $A=A_s+B_sK$ has the structure required by Proposition~\ref{lmicl} with $F=A_s$, $G=B_s$, and $J=K$.\medskip\\
\newpage
\textbf{Example 2. Echo state dynamic output-feedback controller}

We consider the problem of designing an output-feedback controller for the control scheme in Figure \ref{fig: CS}, where the system lies in the class \eqref{sisall} and the controller is described by an ESN. We define the equations of the controller as
\begin{subequations}
	\label{esncall}
	\begin{align}
		\label{esncx}
		x_c(k+1) & = \zeta_c(W_{x_c}x_c(k)+W_{e}e(k)+W_{y_c}y_c(k)) \,,\\
		\label{esncy}
		y_c(k) & = W_{out_{1_c}}x_c(k)\,,
	\end{align}
\end{subequations}
where the direct dependence of the input in the output equation is omitted, i.e., $W_{out_{2_c}}=0_{l_c,m_c}$. By recalling that $u_s(k)=y_c(k)$ and $e(k) = r(k)-y_s(k)$, the equations of the overall closed-loop system are in the class \eqref{nonlinclassall}, where $x(k)=\begin{bmatrix}x_c(k)^T&x_s(k)^T\end{bmatrix}^T$, $f(\cdot)=\begin{bmatrix}\zeta_{c}(\cdot)^T&f_{s}(\cdot)^T\end{bmatrix}^T$, $C=\begin{bmatrix}0_{l_s,n_c}&C_s\end{bmatrix}$, $D=0_{l_s,l_s}$,  	
\begin{align*}
	A =\left[\begin{matrix}
		W_{x_c}+W_{y_c}W_{out_{1_c}}&-W_{e}C_s\\
		B_sW_{out_{1_c}}&A_s
	\end{matrix}\right]\,, \ \ B =\left[\begin{matrix}
		W_e\\
		0_{n_s,l_s}
	\end{matrix}\right]\,,
\end{align*}
$y(k)=y_s(k)$ is the overall output, and $u(k)=r(k)$ is the overall input. Since the system matrices are known and the matrices of the controller in the state equation are randomly generated, the only unknown matrix is $W_{out_{1_c}}$. Hence, it is possible to write $A=F+GJ$, where $J=\begin{bmatrix}W_{out_{1_c}}&0_{l_c,n_s}\end{bmatrix}$,
\begin{align*}
	F =\left[\begin{matrix}
		W_{x_c}&-W_{e}C_s\\
		0_{n_s,n_c}&A_s
	\end{matrix}\right]\,, \ \ G =\left[\begin{matrix}
		W_{y_c}\\
		B_s
	\end{matrix}\right]\,.
\end{align*} 
Therefore, it is possible to apply the result in Proposition \ref{lmicl}. However, in order to obtain a $J$ with the required structure, it is necessary to further constrain (i) the structure of the matrix $H$, i.e., $H=\begin{bmatrix}\widetilde{H}&0_{l_c,n_s}\end{bmatrix}$, where $\widetilde{H}\in\mathbb{R}^{l_c\times n_c}$ is a free variable, and (ii) the structure of the matrix $P$, i.e., $P=P^T=\text{diag}(P_1,P_2)$, where, in turn, $P_1\in\mathbb{R}^{n_c\times n_c}$ is diagonal, and $P_2\in\mathbb{R}^{n_s\times n_s}$ has the structure required by Theorem \ref{th2}.\medskip\\
\textbf{Example 3. Shallow NNARX dynamic output-feedback controller}

We consider the problem of designing an output-feedback controller for the control scheme in Figure \ref{fig: CS}, where the system lies in the class \eqref{sisall} and the controller is described by a shallow NNARX \eqref{nnarx}. We use the subscript $c$ to denote the matrices and dimensions of the controller. For simplicity, we set $b_c=0_{\nu_c,1}$ and $b_{0_c}=0_{l_c,1}$, which is reasonable in case normalized data are considered. We also assume that $W_{0_c}$ is a priori selected, whereas $W_{\phi_c}=\begin{bmatrix}W_{\phi_{1_c}}&W_{\phi_{2_c}}&W_{\phi_{3_c}}\end{bmatrix}$ and $W_{u_c}$ are the controller unknown matrices to be tuned. Note that the controller can be written in the state-space representation \eqref{contall}, where $x_c(k)$ is defined as in \eqref{statennarx}, $A_c$ as in \eqref{Annarx}, $B_c=\left[\begin{matrix}0_{\tau_c,m_c}^T&0_{l_c,m_c}^T&I_{m_c}&W_{u_c}^T\end{matrix}\right]^T$, $C_c=\begin{bmatrix}0_{l_c,n_c-\nu_c}&W_{0_c}\end{bmatrix}$, $D_c=0_{l_c, m_c}$, $f_c(\cdot)=\begin{bmatrix}id_{n_c-\nu_c}(\cdot)^T&\zeta_c(\cdot)^T\end{bmatrix}^T$, and $u_c(k)=e(k)=r(k)-y_s(k)$. By recalling that \mbox{$u_s(k)=y_c(k)$}, the equations of the overall closed-loop system are in the class~\eqref{nonlinclassall}, where $x(k)=\begin{bmatrix}x_s(k)^T&x_c(k)^T\end{bmatrix}^T$, \mbox{$f(\cdot)=\begin{bmatrix}f_{s}(\cdot)^T&f_{c}(\cdot)^T\end{bmatrix}^T$}, $C=\begin{bmatrix}C_s&0_{l_s,n_c}\end{bmatrix}$, $D=0_{l_s,l_s}$,
\begin{align*}
	A =\left[\begin{matrix}
		A_s&B_sC_c\\
		-B_cC_s&A_c
	\end{matrix}\right]\,, \ \ B =\left[\begin{matrix}
		0_{n_s,l_s}\\
		B_c
	\end{matrix}\right]\,,
\end{align*}
$y(k)=y_s(k)$ is the overall output, and $u(k)=r(k)$ is the overall input. Hence, it is possible to write $A=F+JG$, where
\begin{align*}
	F =\begin{bmatrix}
		A_s&B_sC_c\\
		\begin{bmatrix}0_{\tau_c,n_s}\\0_{l_c,n_s}\\-C_s\\0_{\nu_c,n_s}\end{bmatrix}&\begin{bmatrix}0_{\tau_c,l_c+m_c}&I_{\tau_c}&0_{\tau_c,\nu_c}\\0_{l_c,l_c+m_c}&0_{l_c,\tau_c}&W_{0_c}\\0_{l_s,l_c+m_c}&0_{l_s,\tau_c}&0_{l_s,\nu_c}\\0_{\nu_c,l_c+m_c}&0_{\nu_c,\tau_c}&0_{\nu_c,\nu_c}\end{bmatrix}
	\end{bmatrix}\,,\ \ \ G =\begin{bmatrix}
		-C_s&0_{l_s,n_c}\\0_{n_{j_2}-l_s,n_s}&\begin{bmatrix}
			I_{n_c-\nu_c}&0_{n_c-\nu_c,\nu_c}\\0_{l_c,n_c-\nu_c}&W_{0_c}
		\end{bmatrix}
	\end{bmatrix}\,,\ \ \ J=\begin{bmatrix}0_{n_{j_1},n_{j_2}}\\
	\begin{bmatrix}W_{u_c}&W_{\phi_c}\end{bmatrix}\end{bmatrix}\,,
\end{align*} 
$n_{j_1}=n_s+n_c-\nu_c$, and $n_{j_2}=(m_c+l_c)N_c+m_c$.
Therefore, it is possible to apply the result in Proposition \ref{lmicl2}. However, in order to obtain a $J$ with the required structure, it is necessary to further constrain (i) the structure of the matrix $H$, i.e., $H=\begin{bmatrix}0_{n_{j_1},n_{j_2}}^T&\widetilde{H}^T\end{bmatrix}^T$, where $\widetilde{H}\in\mathbb{R}^{\nu_c\times n_{j_2}}$ is a free variable, and (ii) the structure of the matrix $P$, i.e., $P=P^T=\text{diag}(P_1,P_2)$, where, in turn, $P_1\in\mathbb{R}^{n_{j_1}\times n_{j_1}}$ and $P_2\in\mathbb{R}^{\nu_c\times \nu_c}$ are  matrices with the structure required by Theorem \ref{th2}.

\subsection{Observer design}
\label{obsd}
In general, if the state is not measurable, the application of state-feedback control schemes (e.g., the one in Figure \ref{fig: SFCS}) requires the availability of a state estimate. For a system in class \eqref{nonlinclassall}, the following observer is proposed to provide a reliable estimate $\hat{x}$ of the state $x$, based on the input-output measures $u$ and $y$:
\begin{subequations}
	\label{sisallobs}
	\begin{align}
		\label{sisxobs}
		\hat{x}(k+1)&=f(A\hat{x}(k)+Bu(k)+L(y(k)-\hat{y}(k))) \\
		\label{sisyobs}
		\hat{y}(k)&=C\hat{x}(k)+Du(k)
	\end{align}
\end{subequations}
where $L$ is the observer gain to be designed according to the following result.
\begin{proposition}
	\label{lmiobs}
	Let us consider a system with equations in the class \eqref{nonlinclassall}. Let us define the diagonal matrix $$W=\text{diag}(L_{p1},\dots,L_{pn})$$ as specified in Section \ref{sec:th2}. If the observer \eqref{sisallobs} is employed, with $L$ such that
	\begin{align}
		\label{lmi1obs}
		\begin{bmatrix}
			\frac{1}{\lambda_{max}(W)^2}I_n & (A-LC)^T \\
			A-LC & I_n
		\end{bmatrix}\succ0\,,
	\end{align}
	then $\hat{x}(k)\to x(k)$ as $k\to+\infty$.
\end{proposition}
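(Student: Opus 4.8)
The plan is to study the dynamics of the estimation error $e(k)=x(k)-\hat{x}(k)$ and to show that it contracts geometrically to zero. First I would simplify the output injection term: since $\hat{y}(k)=C\hat{x}(k)+Du(k)$ from \eqref{sisyobs} and $y(k)=Cx(k)+Du(k)$ from \eqref{nonlinclassout}, the feedforward contribution $Du(k)$ cancels and $y(k)-\hat{y}(k)=Ce(k)$. Subtracting \eqref{sisxobs} from \eqref{nonlinclass} then gives
\begin{align*}
	e(k+1)=f(Ax(k)+Bu(k))-f(A\hat{x}(k)+Bu(k)+LCe(k))\,.
\end{align*}
Introducing $v_1=Ax(k)+Bu(k)$ and $v_2=A\hat{x}(k)+Bu(k)+LCe(k)$, a direct computation yields $v_1-v_2=(A-LC)e(k)$, so that $e(k+1)=f(v_1)-f(v_2)$ where the two arguments differ exactly by $(A-LC)e(k)$.

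Next I would exploit the element-wise structure of $f$. Under Assumption \ref{Ass1} each $f_i$ is either $L_{pi}$-Lipschitz or the identity (for which $L_{pi}=1$), hence $|e_i(k+1)|=|f_i(v_{1i})-f_i(v_{2i})|\leq L_{pi}|v_{1i}-v_{2i}|$, i.e. $|e(k+1)|\leq W|v_1-v_2|$ component-wise. Since $W$ is diagonal with nonnegative entries, taking 2-norms and using that absolute values preserve the Euclidean norm gives
\begin{align*}
	\|e(k+1)\|\leq\|W\|\,\|v_1-v_2\|=\lambda_{max}(W)\,\|(A-LC)e(k)\|\,.
\end{align*}

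Then I would convert the LMI \eqref{lmi1obs} into a spectral bound on $A-LC$. Applying the Schur complement with respect to the positive definite block $I_n$ yields $(A-LC)^T(A-LC)\prec\frac{1}{\lambda_{max}(W)^2}I_n$, which is equivalent to $\|A-LC\|<\frac{1}{\lambda_{max}(W)}$. Combining this with the previous estimate, I obtain $\|e(k+1)\|\leq\lambda_{max}(W)\,\|A-LC\|\,\|e(k)\|=c\,\|e(k)\|$ with $c=\lambda_{max}(W)\,\|A-LC\|<1$. Iterating, $\|e(k)\|\leq c^k\|e(0)\|\to0$ as $k\to+\infty$, so $\hat{x}(k)\to x(k)$, which is the claim.

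There is no deep obstacle here, but the most delicate step is the passage from the component-wise Lipschitz inequality to the 2-norm bound: I must use that $W$ is diagonal with nonnegative entries so that $\|W\|=\lambda_{max}(W)=\max_i L_{pi}$, and then verify that the resulting contraction constant $c=\lambda_{max}(W)\,\|A-LC\|$ is \emph{strictly} below one, which is precisely what the strict LMI \eqref{lmi1obs} guarantees through the Schur complement. This is also where the specific scaling $\frac{1}{\lambda_{max}(W)^2}$ in the $(1,1)$ block is essential, as it is chosen exactly to absorb the Lipschitz amplification introduced by $f$.
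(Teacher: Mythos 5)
Your proof is correct and follows essentially the same route as the paper: bound the error dynamics via the element-wise Lipschitz property to get $\|e(k+1)\|\leq\lambda_{max}(W)\|A-LC\|\,\|e(k)\|$, then use the Schur complement of \eqref{lmi1obs} to conclude that the contraction factor is strictly below one. Your explicit component-wise justification of the norm bound $\|e(k+1)\|\leq\lambda_{max}(W)\|v_1-v_2\|$ is a slightly more detailed rendering of a step the paper states directly, but the argument is the same.
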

\begin{proof}
	Firstly, note that the dynamics of the estimation error is defined as $e(k)=x(k)-\hat{x}(k)$. By jointly considering \eqref{nonlinclassall} and \eqref{sisallobs}, the 2-norm of the estimation error at time instant $k+1$ can be written as 
	\begin{align*}
		\|e(k+1)\|&=\|f(Ax(k)+Bu(k))-f(A\hat{x}(k)+Bu(k)+L(Cx(k)+Du(k)-(C\hat{x}(k)+Du(k)))\,\|.
	\end{align*}
	According to Assumption \ref{Ass1} and to the definition of $W$, we can write 	
	\begin{align*}
		\|e(k+1)\|&\leq\lambda_{max}(W)\|Ax(k)+Bu(k)-(A\hat{x}(k)+Bu(k)+L(Cx(k)-C\hat{x}(k))\|\,=\\&=\,\lambda_{max}(W)\|(A-LC)(x(k)-\hat{x}(k))\|\,\leq \,\lambda_{max}(W)\|A-LC\|\|e(k)\|\,.
	\end{align*}
	Thus, the condition 
	\begin{equation}
		\label{condobs}
		\lambda_{max}(W)\|A-LC\|<1\,,
	\end{equation}
	guarantees that the estimation error converges to $0$, i.e., $\|e(k)\|\to0$ and $\hat{x}(k)\to x(k)$ as $k\to+\infty$. Note that \eqref{condobs} is equivalent to the following condition $$\frac{1}{\lambda_{max}(W)^2}I_n-(A-LC)^TI_n(A-LC)\succ0\,,$$ which can be recast as \eqref{lmi1obs} in view of the Schur complement.
\end{proof}
The study of the convergence rate of the observer as well as the analysis of the case in which a non-measurable disturbance acts on the system state and/or output will be matter of future research. 

\subsection{Control schemes with zero steady-state error}

\begin{figure}[t]
	\centering
	\subfloat[]{\includegraphics[width=0.55\columnwidth]{./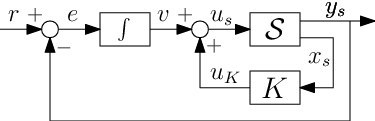}}
	\\[0.3cm]
	\subfloat[]{\includegraphics[width=0.55\columnwidth]{./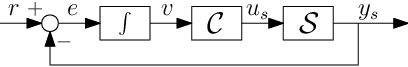}}
	\\[0.3cm]
	\subfloat[]{\includegraphics[width=0.55\columnwidth]{./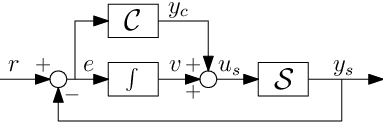}}
	\caption{Closed-loop control schemes with explicit integral action, where $\int$ is the discrete-time integrator, and $\mathcal{S}$ the system to be controlled. (a) Static state-feedback controller with gain $K$. (b) Integrator in series to the controller~$\mathcal{C}$. (c) Integrator in parallel to the controller $\mathcal{C}$.}
	\label{fig: CSINT}
\end{figure}


In this section the possibility to guarantee zero steady-state error in case of tracking of piecewise constant reference signals for systems in the class \eqref{nonlinclassall} is investigated. This can be guaranteed, e.g., using the control schemes in Figure \ref{fig: CSINT}, where the system $\mathcal{S}$ is in the class \eqref{sisall}, the controller $\mathcal{C}$ is in the class \eqref{contall}, and the block ``$\int$'' denotes a discrete-time integrator with equation
\begin{subequations}
	\label{intall}
	\begin{align}
		\label{intx}
		\eta(k+1) & = \eta(k)+e(k)\,, \\
		\label{inty}
		v(k) & = M(\eta(k)+e(k))\,,
	\end{align}
\end{subequations}
where $\eta(k)\in\mathbb{R}^{l_s}$ is the state of the integrator, and \mbox{$M\in\mathbb{R}^{l_s\times l_s}$} is its gain matrix. In the following proposition we will prove that all the control schemes depicted in Figure~\ref{fig: CSINT} lead to a common general model of type
\begin{subequations}
	\label{sisint}
	\begin{align}
		\label{sisintx1}
		\chi(k+1)&=f_\chi(A_\chi \chi(k)+A_\eta\eta(k)+B_\chi r(k))\,, \\
		\label{sisintx2}			
		\eta(k+1) & = -C_\chi \chi(k)+\eta(k)+r(k)\,, \\
		\label{sisinty}
		y_s(k) & = C_\chi \chi(k)\,,
	\end{align}
\end{subequations}
where $r(k)\in\mathbb{R}^{l_s}$ is the reference input, $y_s(k)\in\mathbb{R}^{l_s}$ is the output of the system, $\chi(k)\in\mathbb{R}^{n_\chi}$ is a vector of states, \mbox{$f_\chi(\cdot)\in\mathbb{R}^{n_\chi}$} is a vector of scalar functions, $A_\chi\in\mathbb{R}^{n_\chi\times n_\chi}$, $A_\eta\in\mathbb{R}^{n_\chi\times l_s}$, $B_\chi\in\mathbb{R}^{n_\chi\times l_s}$, and $C_\chi\in\mathbb{R}^{l_s\times n_\chi}$. We can state the following result.
\begin{proposition}
	\label{int1}
	Let the controller $\mathcal{C}$ be in the class \eqref{contall} and the system $\mathcal{S}$ be in the class \eqref{sisall}. The equations of the closed-loop systems in Figure \ref{fig: CSINT} lie in the class \eqref{sisint}. Moreover, the set of systems in the class \eqref{sisint} is a subset of the set of systems in the class \eqref{nonlinclassall}.
\end{proposition}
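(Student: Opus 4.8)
The plan is to prove the two assertions separately. For the first, I would treat the three control schemes of Figure \ref{fig: CSINT} one at a time, writing down the relevant interconnection equations and eliminating the internal signals until the closed loop is expressed in the form \eqref{sisint}. For the second, I would exhibit explicit matrices $A,B,C,D$ and a function $f(\cdot)$ showing that any system in the class \eqref{sisint} is a particular instance of \eqref{nonlinclassall}.

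For the first part, the key common observation is that in all three schemes the integrator \eqref{intall} is driven by the tracking error $e(k)=r(k)-y_s(k)=r(k)-C_sx_s(k)$, so its state obeys $\eta(k+1)=\eta(k)+e(k)=-C_sx_s(k)+\eta(k)+r(k)$. Identifying $C_\chi$ with the map extracting $y_s(k)=C_sx_s(k)$ from the aggregated state $\chi(k)$ immediately yields \eqref{sisintx2} and \eqref{sisinty}. It then remains, for each scheme, to collect the dynamic states into $\chi(k)$ --- namely $\chi=x_s$ in scheme (a), where $\mathcal{C}$ is the static gain $K$, and $\chi=\begin{bmatrix}x_c^T & x_s^T\end{bmatrix}^T$ in schemes (b) and (c), where $\mathcal{C}$ is dynamic --- and to substitute the interconnection relations $u_s(k)=y_c(k)$ (plus the integrator output $v(k)$ where it acts on $u_s$), $u_c(k)=v(k)$ or $u_c(k)=e(k)$ depending on whether the integrator is in series with or in parallel to $\mathcal{C}$, and $v(k)=M(\eta(k)+e(k))$ into \eqref{contall}--\eqref{sisall}. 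Grouping the resulting terms multiplying $\chi(k)$, $\eta(k)$ and $r(k)$ then produces the matrices $A_\chi$, $A_\eta$, $B_\chi$ and the stacked activation $f_\chi=\begin{bmatrix}f_c^T & f_s^T\end{bmatrix}^T$ (or $f_\chi=f_s$ in scheme (a)), casting each scheme in the form \eqref{sisintx1}.

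The step I expect to require the most care is verifying that no algebraic loop is introduced by these substitutions: both the integrator output $v(k)=M(\eta(k)+e(k))$ and the controller feedthrough $D_c$ depend on the \emph{current} error $e(k)$, and hence potentially on the current output. The crucial point is that the controlled system $\mathcal{S}$ in \eqref{sisall} is strictly proper, so $y_s(k)=C_sx_s(k)$ depends only on the state; consequently $e(k)$, $v(k)$ and $u_s(k)$ are all well-defined functions of $(\chi(k),\eta(k),r(k))$, and the substitutions are legitimate. This is precisely why the schemes in Figure \ref{fig: CSINT} admit the state-space form \eqref{sisint}.

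For the second part, I would observe that the $\eta$-dynamics \eqref{sisintx2} is affine in $(\chi,\eta,r)$, i.e.\ it uses the identity activation. Stacking $x(k)=\begin{bmatrix}\chi(k)^T & \eta(k)^T\end{bmatrix}^T$, $u(k)=r(k)$ and $y(k)=y_s(k)$, and setting
\begin{align*}
	A=\begin{bmatrix} A_\chi & A_\eta \\ -C_\chi & I_{l_s}\end{bmatrix},\ \ B=\begin{bmatrix} B_\chi \\ I_{l_s}\end{bmatrix},\ \ C=\begin{bmatrix} C_\chi & 0_{l_s,l_s}\end{bmatrix},\ \ D=0_{l_s,l_s},
\end{align*}
together with $f(\cdot)=\begin{bmatrix} f_\chi(\cdot)^T & id_{l_s}(\cdot)^T\end{bmatrix}^T$, reproduces \eqref{sisint} exactly and is of the form \eqref{nonlinclassall}. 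Since the identity activation is admissible under Assumption \ref{Ass1}, this would establish the claimed inclusion and conclude the proof.
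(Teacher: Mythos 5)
Your proposal is correct and follows essentially the same route as the paper: a case-by-case substitution of the interconnection relations for each of the three schemes to obtain $(A_\chi, A_\eta, B_\chi, C_\chi, f_\chi)$ with $\chi=x_s$ in scheme (a) and $\chi=\begin{bmatrix}x_c^T & x_s^T\end{bmatrix}^T$ in (b) and (c), followed by exactly the same stacking $x=\begin{bmatrix}\chi^T & \eta^T\end{bmatrix}^T$, $f=\begin{bmatrix}f_\chi^T & id_{l_s}^T\end{bmatrix}^T$ and the same $A$, $B$, $C$, $D$ to embed \eqref{sisint} into \eqref{nonlinclassall}. Your added remark on the absence of algebraic loops (since $\mathcal{S}$ is strictly proper) is a sound observation that the paper leaves implicit.
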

\begin{proof}
	From Figure \ref{fig: CSINT}(a), by jointly considering \eqref{sisall}, \eqref{intall}, $e(k)=r(k)-C_sx_s(k)$, and $u_s(k)=Kx_s(k)+v(k)$, we obtain that the equations of the control system are in the class \eqref{sisint}, where $\chi(k)=x_s(k)$, $f_\chi(\cdot)=f_s(\cdot)$, $A_\chi=A_s+B_sK-B_sMC_s$, $A_\eta=B_sM$, $B_\chi=A_\eta$, and $C_\chi=C_s$.
	
	From Figure \ref{fig: CSINT}(b), by jointly considering \eqref{contall}, \eqref{sisall}, \eqref{intall}, $e(k)=r(k)-C_sx_s(k)$, $u_c(k)=v(k)$, and $u_s(k)=y_c(k)$, we obtain that the equations of the control system are in the class \eqref{sisint}, where $\chi(k)=\begin{bmatrix}x_c(k)^T&x_s(k)^T\end{bmatrix}^T$, \mbox{$f_\chi(\cdot)=\begin{bmatrix}f_{c}(\cdot)^T&f_{s}(\cdot)^T\end{bmatrix}^T$}, $B_\chi=A_\eta$, $C_\chi=\begin{bmatrix}0_{l_s,n_c}&C_s\end{bmatrix}$,
	\begin{align*}
		A_\chi =\left[\begin{matrix}
			A_c&- B_c MC_s\\
			B_s C_c&A_s- B_s D_c MC_s
		\end{matrix}\right]\,,\ \text{and}\ A_\eta =\left[\begin{matrix}
		 B_c M\\
		 B_s D_c M
	\end{matrix}\right]\,.
	\end{align*} 
	
	From Figure \ref{fig: CSINT}(c), by jointly considering \eqref{contall}, \eqref{sisall}, \eqref{intall}, $e(k)=r(k)-C_sx_s(k)$, $u_c(k)=e(k)$, and $u_s(k)=v(k)+y_c(k)$, we obtain that the equations of the control system are in the class \eqref{sisint}, where $\chi(k)=\begin{bmatrix}x_c(k)^T&x_s(k)^T\end{bmatrix}^T$, 	$f_\chi(\cdot)=\begin{bmatrix}f_{c}(\cdot)^T&f_{s}(\cdot)^T\end{bmatrix}^T$, $C_\chi=\begin{bmatrix}0_{l_s,n_c}&C_s\end{bmatrix}$,
	\begin{align*}
		A_\chi =\left[\begin{matrix}
			A_c&-B_c C_s\\
			B_s C_c&A_s- B_s MC_s-B_s D_c C_s
		\end{matrix}\right]\,,\ A_\eta =\left[\begin{matrix}
			0_{n_c,l_s}\\
			B_s M
		\end{matrix}\right],
	\end{align*} 
	and $B_\chi=\begin{bmatrix}B_c^T&(B_sM+B_sD_c)^T\end{bmatrix}^T$.
	
	Finally, we can easily see that all the systems \eqref{sisint} are in the class \eqref{nonlinclassall}, where $x(k)=\begin{bmatrix}\chi(k)^T&\eta(k)^T\end{bmatrix}^T$, $y(k)=y_s(k)$, $u(k)=r(k)$, $f(\cdot)=\begin{bmatrix}f_\chi(\cdot)^T&id_{l_s}(\cdot)^T\end{bmatrix}^T$, $C=\begin{bmatrix}C_\chi&0_{l_s,l_s}\end{bmatrix}$,
	\begin{align*}
		A =\left[\begin{matrix}
			A_\chi&A_\eta\\
			-C_\chi&I_{l_s}
		\end{matrix}\right]\,,\ B =\left[\begin{matrix}
			B_\chi\\
			I_{l_s}
		\end{matrix}\right]\,,
	\end{align*} 
	and $D=0_{l_s,l_s}$. This concludes the proof.
\end{proof}

Given an equilibrium point, provided that the control schemes in Figure~\ref{fig: CSINT} are $\delta$ISS, the zero steady-state error is ensured by the explicit integral action, since such an equilibrium is globally asymptotically stable according to Definition~\ref{deltaISSdef}. However, there are some cases in which the $\delta$ISS property cannot be enforced to the control schemes in Figure~\ref{fig: CSINT}, e.g., if the output $y_s(k)$ of the system is bounded. This is the case of some RNN architectures where all the activation functions are bounded, e.g., the hyperbolic tangent or the sigmoid function. Some examples are ESNs where $W_{out_{2}}=0_{l_s,m_s}$ and $\zeta(\cdot)=\tanh(\cdot)$ \cite{d2022recurrent} or shallow NNARXs where $\zeta(\cdot)=\tanh(\cdot)$. In this regard, the following result holds. 
\begin{proposition}
	\label{int}
	Let us consider a control system with equations in the class \eqref{sisint}. Let at least one output of the system be bounded, i.e., $y_{si}(k)\geq y_{min}$ and/or $y_{si}(k)\leq y_{max}$ for all $k$, and for at least one $i=1,...,l_s$, where $y_{max},y_{min}\in\mathbb{R}$. Then, the control system \eqref{sisint} cannot enjoy the $\delta$ISS property. 
\end{proposition}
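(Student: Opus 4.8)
The plan is to argue by contradiction, exploiting the pure-integrator row \eqref{sisintx2}, which accumulates the tracking error. The starting observation is that \eqref{sisintx2} together with \eqref{sisinty} can be read as $\eta(k+1)-\eta(k)=r(k)-C_\chi\chi(k)=r(k)-y_s(k)$, so every increment of the integrator state equals the instantaneous tracking error. Without loss of generality I would treat the case in which the $i$-th output is bounded above, $y_{si}(k)\le y_{max}$ for all $k$; the case bounded below is symmetric, with the inequalities reversed and the reference chosen below $y_{min}$ instead.

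Suppose, for contradiction, that system \eqref{sisint} is $\delta$ISS, and fix the constant reference sequence $\vec{r}$ with $r_i(k)\equiv y_{max}+1$ and the remaining components set to zero, together with an arbitrary initial state $x_0$ (e.g. the origin). Let $\phi(k)=x(k,x_0,\vec r)$ be the resulting state trajectory of the associated system in class \eqref{nonlinclassall} (recall Proposition \ref{int1}). The core step is a time-shift comparison: since \eqref{sisint} is time-invariant and the input is constant, the trajectory started one step later coincides with the shift of $\phi$, that is, $x(k,\phi(1),\vec r)=\phi(k+1)$.

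Next I would apply Definition \ref{deltaISSdef} to the two trajectories $x(k,x_0,\vec r)$ and $x(k,\phi(1),\vec r)$, which share the identical input sequence $\vec r$. Since $\|\vec r-\vec r\|_\infty=0$ and $\gamma(0)=0$ (as $\gamma\in\mathcal{K}_\infty$), this gives
\begin{equation*}
	\|\phi(k)-\phi(k+1)\|\le\beta(\|x_0-\phi(1)\|,k)\,,
\end{equation*}
and the right-hand side tends to $0$ as $k\to\infty$ because $\beta\in\mathcal{KL}$. Restricting to the integrator block of $\phi$ and invoking the increment identity above, this forces $\|r(k)-y_s(k)\|\to0$, hence $y_{si}(k)\to r_i=y_{max}+1$. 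This contradicts $y_{si}(k)\le y_{max}$ for all $k$, which proves that \eqref{sisint} cannot be $\delta$ISS.

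The main obstacle, and the only non-mechanical idea, is the time-shift comparison: recognizing that feeding a $\delta$ISS system a constant reference and comparing the trajectory with its own one-step shift forces consecutive integrator states to coalesce, which through \eqref{sisintx2} is equivalent to asymptotic vanishing of the tracking error, i.e. exact set-point tracking (this is the rigorous counterpart of the GAS/zero-steady-state-error reasoning preceding the statement, and it conveniently avoids presupposing that an equilibrium exists). Once this forced tracking is established, the boundedness hypothesis on $y_{si}$ makes the chosen set-point unreachable and the contradiction is immediate. A minor point to handle carefully is that the output bound must be legitimately invoked along the specific trajectory $\phi$; this is warranted because the hypothesis asserts the bound for all $k$ on the system's trajectories, as is structurally the case when the activation functions are bounded.
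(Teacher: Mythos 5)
Your proof is correct, but it reaches the contradiction by a genuinely different route than the paper. The paper's proof is a divergence argument: it invokes the chain $\delta$ISS $\Rightarrow$ ISS $\Rightarrow$ bounded-input/bounded-state (citing \cite{bayer2013discrete} and \cite{sontag1996new}), picks a constant reference $r_i(k)=\bar r>y_{max}$, and observes from \eqref{sisintx2} that each integrator increment satisfies $\eta_i(k+1)-\eta_i(k)=\bar r-y_{si}(k)\geq \bar r-y_{max}>0$, so $\eta_i(k)\to+\infty$ and BIBS fails. You instead stay entirely inside Definition \ref{deltaISSdef}: comparing a trajectory under the constant reference with its own one-step shift (same input, so $\gamma(\cdot)$ contributes nothing) forces $\|\phi(k+1)-\phi(k)\|\to 0$, hence the integrator increment --- which is exactly the tracking error --- must vanish, so $y_{si}(k)\to y_{max}+1$, contradicting the bound. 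The two arguments exploit the same structural identity (integrator increment $=$ tracking error) from opposite directions: the paper shows the error is bounded away from zero so the state blows up; you show $\delta$ISS would force the error to zero, which the output bound forbids. The paper's version is shorter and more elementary but leans on two cited implications; yours is self-contained, avoids any appeal to ISS/BIBS, and correctly sidesteps the need for an equilibrium to exist (you only use vanishing increments, not convergence of $\phi(k)$ itself). Both share the implicit reading that the output bound holds along every trajectory, which you are right to flag; with that reading, your proof is complete.
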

\begin{proof}
	Recall that $\delta$ISS implies ISS (see \cite{bayer2013discrete}), and that ISS implies the bounded-input bounded-state property (see \cite{sontag1996new}). In case $y_{si}(k)\leq y_{max}$ for all $k$, if we take a constant \mbox{$r_i(k)=\bar{r}>y_{max}$} for all $k$, we have from \eqref{sisintx2} that the state $\eta_i(k)\to+\infty$ for $k\to+\infty$, since $y_{si}(k)=C_{\chi_i} \chi(k)\leq y_{max}$ for all $k$, where $C_{\chi_i}$ is the $i$-th row of $C_{\chi}$. It follows that the bounded-input bounded-state property does not hold, which is necessary for $\delta$ISS. The same result holds for the case in which $y_{si}(k)\geq y_{min}$ $\forall k$, since $\eta_i(k)\to-\infty$ for $k\to+\infty$ by setting $r_i(k)\!=\!\bar{r}\!<\!y_{min}$ $\forall k$. This concludes the proof.
\end{proof}
Hence, if we consider the system \eqref{sisint}, where $f_\chi(\cdot)$ is composed of bounded nonlinear globally Lipschitz continuous functions, we have that the assumptions of Theorem~\ref{th2} can never be fulfilled, as a consequence of the result in Proposition~\ref{int}. Nevertheless, if we consider a system structure where at least a state equation in \eqref{sisintx1} is linear, and the latter state directly affects the output in \eqref{sisinty}, then $y_s(k)$ may be unbounded, and the condition in Theorem~\ref{th2}, as well as $\delta$ISS, can be enforced. The following example corroborates the statement in Proposition~\ref{int} and our remark.\medskip\\
\textbf{Example: state-feedback control with explicit integrator}

We consider a SISO system in the class \eqref{sisall} with two states, where $f_s(\cdot)=\begin{bmatrix}\tanh(\cdot)&f_2(\cdot)\end{bmatrix}^T$. Let us consider a state-feedback control law and an explicit integral action \eqref{intall} as in Figure \ref{fig: CSINT}(a), i.e., $u_s(k)=Kx_s(k)+M(\eta(k)+e(k))$, where $e(k)=r(k)-C_sx_s(k)$. As stated in Proposition \ref{int1}, the closed-loop system is in the class \eqref{sisint}, and by extension also in the class \eqref{nonlinclassall}. The matrix $A$ of the closed-loop system~is
\begin{align*}
	A =\left[\begin{matrix}
		A_s+B_sK- B_sMC_s& B_sM\\
		-C_s&1
	\end{matrix}\right]\,,
\end{align*} 
and can be rewritten as $A=F+GJ$, where
\begin{align*}
	F =\left[\begin{matrix}
		A_s&0_{2,1}\\
		-C_s&1
	\end{matrix}\right]\,, \ \ G =\left[\begin{matrix}
		B_s\\
		0
	\end{matrix}\right]\,,
\end{align*} 
and $J=\begin{bmatrix}K-M C_s&M\end{bmatrix}$ takes the role of the control gain. Note that $\begin{bmatrix}K&M\end{bmatrix}=JE^{-1}$, where $E=\begin{bmatrix}
	I_{2}&0_{2,1}\\-C_s&1
\end{bmatrix}$. 

Let us choose $A_s\!=\!\begin{bmatrix}
	-0.4686&1.0984\\0&1.15
\end{bmatrix}$, \mbox{$B_s\!=\!\begin{bmatrix}
	0.7015\\-2.0518
\end{bmatrix}$}, $C_s=\begin{bmatrix}
	-0.3538&-0.8236
\end{bmatrix}$, and $f_2(\cdot)=id(\cdot)$. \vspace{1mm}

According to Theorem \ref{th2}, matrix $P$ must have the following structure: $P=\text{diag}(p_1,P_2)$, where $p_1\in\mathbb{R}$ and $P_2\in\mathbb{R}^{2\times 2}$. A feasible solution is returned by solving \eqref{lmi1} (where $W=I_3$) with YALMIP and MOSEK~\cite{lofberg2004yalmip,mosek}, ensuring that the closed-loop system enjoys the $\delta$ISS property according to Proposition \ref{lmicl}. The following matrices and parameters are obtained: $P=\begin{bmatrix}
	2.1317&0&0\\0&0.588&-0.5283\\0&-0.5283&1.3569
\end{bmatrix}$, $H=\begin{bmatrix}
	0.1694&0.2875&-0.1088
\end{bmatrix}$, $K=\begin{bmatrix}
	0.0196&0.5016
\end{bmatrix}$, and $M=0.1694$. 

\begin{figure}[t]
	\centering
	\includegraphics[width=0.7\columnwidth]{./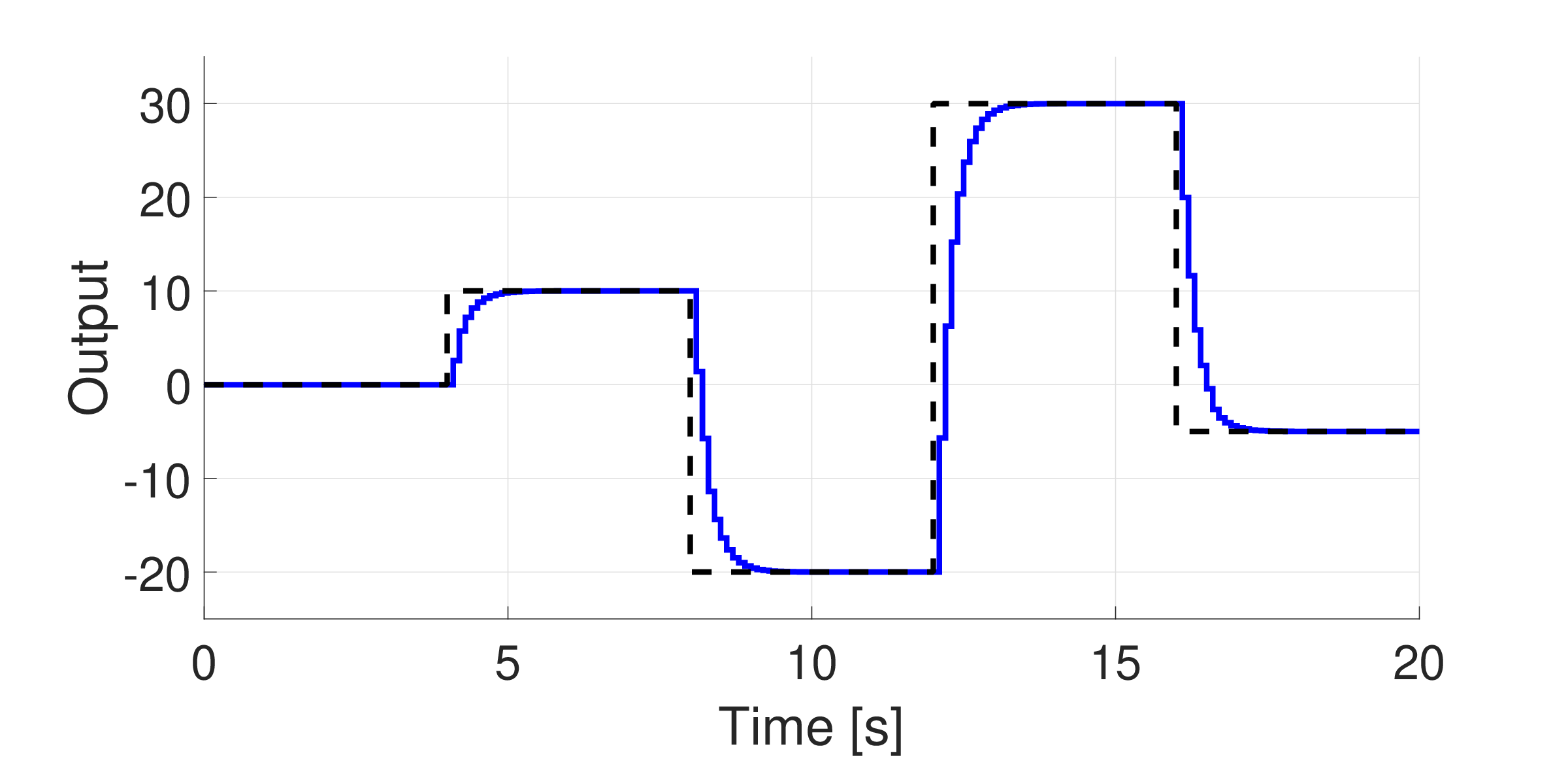}
	\caption{Output trajectory of the closed-loop discrete-time system with state-feedback and explicit integral action. Black dashed line: reference signal trajectory; blue line: output signal trajectory.}
	\label{outint}
\end{figure}
\medskip
In Figure \ref{outint} the reference tracking results of the closed-loop system are depicted, where we can see that the closed-loop system enjoys $\delta$ISS and its equilibria are asymptotically stable even if the open-loop system displays unstable dynamics, as we can see from the second (linear) state equation of the open-loop system. Furthermore, due to the explicit integral action, a zero steady-state error is also achieved. 

On the other hand, if we take $f_2(\cdot)=\tanh(\cdot)$, matrix $P$ is constrained to have the following structure, on the basis of Theorem \ref{th2}: $P=\text{diag}(p_1,p_2,p_3)$, where $p_1,p_2,p_3\in\mathbb{R}$. An unfeasible solution is returned by solving \eqref{lmi1} (where \mbox{$W=I_3$}), corroborating the statement in Proposition \ref{int}. 

If the control schemes in Figure \ref{fig: CSINT} cannot be used to achieve zero steady-state error while ensuring $\delta$ISS for the closed-loop system, a possible solution to improve the static performance could be the use of a $\delta$ISS feedforward compensator in the class \eqref{nonlinclassall}. Since it is dynamic, this compensator can be used to enhance both static and dynamic performances. Moreover, provided that the closed-loop system is $\delta$ISS, the addition of a $\delta$ISS feedforward compensator preserves the $\delta$ISS of the overall control system. This fact follows straightforwardly from the result in Proposition \ref{cas}, since such a component is placed in series to the closed-loop system. Future research will address methods for the design of feedforward compensators, as well as the analysis of the dynamic performances of the control system. 

\section{Simulation results}
\label{sec:sim}
In order to validate the theoretical results in the previous sections we propose here a simulation example. The case study consists of the control of the following ESN-based nonlinear SISO system with $n_s=8$ states
\begin{subequations}
	\label{esnsim}
	\begin{align}
		x_s(k+1) & \!=\! \zeta_s(W_{x_s}x_s(k)+W_{u_s}u_s(k)+W_{y_s}y_s(k)) \,,
		\label{net_statesim}\\
		y_s(k) & \!=\! W_{out_{1_s}}x_s(k)\,,
		\label{net_outputsim}
	\end{align}
\end{subequations}
where the direct dependence of the input in the output equation is absent, i.e., $W_{out_{2_s}}=0$, and $\zeta_{s_i}(\cdot)=id(\cdot)$ for $i=1,\dots,5$, whereas $\zeta_{s_i}(\cdot)=\tanh(\cdot)$ for $i=6,7,8$. The structure of the state equations, both nonlinear and linear, is such that the output can take unbounded values, paving the way to the possible inclusion of an explicit integral action (according to Proposition \ref{int}), and to more freedom in the control design (e.g., see Theorem \ref{th2}). 

The model \eqref{esnsim} is identified using a noiseless dataset containing $70000$ normalized input-output data collected with a sampling time $T_s=25\ s$ from a simulated \textit{pH} neutralization process (see \cite{armenio2019model, armenio2019echo, hall1989modelling} for a detailed description of the system model). The \textit{pH} process is a nonlinear SISO dynamical system where the input is the alkaline flowrate, and the output is the \textit{pH} concentration. The training input data consist of a multilevel pseudo-random signal (MPRS) \cite{armenio2019echo}, whose amplitude is in the range $[12, 16]\ mL/s$. The training is carried out according to the ``ESN training algorithm'' (see \cite{armenio2019model} for an accurate description). Basically, $W_{x_s}$, $W_{u_s}$, and $W_{y_s}$ are randomly generated, whereas $W_{out_{1_s}}$ is obtained by solving a least squares problem based on the available dataset, where the initial $500$ data points are discarded to accommodate the effect of the initial transient. To test the identification performance, the following fitting index is calculated over a validation dataset composed of $30000$ new normalized input-output data
\begin{equation}
	\label{fit}
	FIT_\% = 100\cdot\left(1-\frac{\|\vec{y}-\vec{y}_s\|}{\|\vec{y}-\bar{y}\|}\right)\in(-\infty,100]\,,
\end{equation}
where $\vec{y}$ is the real system output sequence, $\vec{y}_s$ is the output sequence obtained with \eqref{esnsim} and $\bar{y}$ is a vector with all the elements equal to the mean value of the real output sequence $\vec{y}$. A satisfactory fitting $FIT_\%=84.1452\%$ is achieved. 

\begin{figure}[t]
	\centering
	\includegraphics[width=0.6\columnwidth]{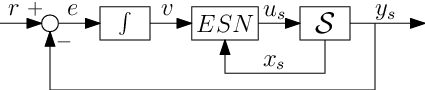}
	\caption{Control scheme with explicit integral action and ESN in series: $\int$ is the discrete-time integrator, $ESN$ the echo state dynamic controller, and $\mathcal{S}$ the system to be controlled.}
	\label{fig: SIMCS}
\end{figure}

The control objective in this example is the achievement of perfect asymptotic tracking of constant reference signals. To this aim, the control architecture in Figure \ref{fig: SIMCS} is taken into account, where an explicit integral action is also embedded. In particular, $\mathcal{S}$ is the system \eqref{esnsim} to be controlled, whose state is assumed measurable, ``$\int$'' is defined as in \eqref{intall} by setting $M=1$, and ``$ESN$'' is a non-strictly proper ESN controller with $n_c=5$ states, whose input vector contains the integrator output $v(k)$ jointly to the system state vector $x_s(k)$. Overall, the controller equations are the following
\begin{subequations}
	\label{consim}
	\begin{align}
		x_c(k+1) &= \zeta_c(W_{x_c}x_c(k)+W_{u_c}u_c(k)+W_{y_c}y_c(k))\,,
		\label{consimesn}\\
		\eta(k+1) &= \eta(k)+e(k)\,,
		\label{consimint}\\
		y_c(k) &= W_{out_{1_c}}x_c(k)+W_{out_{2_c}}u_c(k)\,,
		\label{consimout}
	\end{align}
\end{subequations}
where $\zeta_{c_i}(\cdot)=\tanh(\cdot)$ for all $i=1,\dots,5$, $u_c(k)=\begin{bmatrix}v(k)&x_s(k)^T\end{bmatrix}^T$, and $v(k)=\eta(k)+e(k)$. Let us define $W_{u_c}=\begin{bmatrix}W_{u_{cv}}&W_{u_{cx}}\end{bmatrix}$, and $W_{out_{2_c}}=\begin{bmatrix}W_{out_{2_{cv}}}&W_{out_{2_{cx}}}\end{bmatrix}$, where $W_{u_{cv}}\in\mathbb{R}^{n_c}$, $W_{u_{cx}}\in\mathbb{R}^{n_c\times n_s}$, $W_{out_{2_{cv}}}\in\mathbb{R}$, and $W_{out_{2_{cx}}}\in\mathbb{R}^{1\times n_s}$. Also, let us introduce $x(k)=\begin{bmatrix}x_c(k)^T&\eta(k)&x_s(k)^T\end{bmatrix}^T$, and $f(\cdot)=\begin{bmatrix}\zeta_{c}(\cdot)^T&id(\cdot)&\zeta_{s}(\cdot)^T\end{bmatrix}^T$. By recalling that $u_s(k)=y_c(k)$ and $e(k) = r(k)-y_s(k)$, we can write the closed-loop system equations in the class \eqref{nonlinclassall}, where the matrix $A$ is defined as
\begin{align*}
	A =\left[\begin{matrix}
		W_{x_c}+W_{y_c}W_{out_{1_c}}&W_{u_{cv}}+W_{y_c}W_{out_{2_{cv}}}&A_{13}\\
		0_{1,n_c}&1&-W_{out_{1_s}}\\
		W_{u_s}W_{out_{1_c}}&W_{u_s}W_{out_{2_{cv}}}&A_{33}
	\end{matrix}\right]\,,
\end{align*} 
where $A_{13}=W_{u_{cx}}-W_{u_{cv}}W_{out_{1_s}}+W_{y_c}(W_{out_{2_{cx}}}-W_{out_{2_{cv}}}W_{out_{1_s}})$, and $A_{33}=W_{x_s}+W_{y_s}W_{out_{1_s}}+W_{u_s}(W_{out_{2_{cx}}}-W_{out_{2_{cv}}}W_{out_{1_s}})$. Since $W_{x_c}$, $W_{u_c}$, and $W_{y_c}$ are known randomly generated matrices, we can rewrite $A$ as $F+GJ$, where
\begin{align*}
	F =\left[\begin{matrix}
		W_{x_c}&W_{u_{cv}}&W_{u_{cx}}-W_{u_{cv}}W_{out_{1_s}}\\
		0_{1,n_c}&1&-W_{out_{1_s}}\\
		0_{n_s,n_c}&0_{n_s,1}&W_{x_s}+W_{y_s}W_{out_{1_s}}		
	\end{matrix}\right]\,, \ \ G =\left[\begin{matrix}
		W_{y_c}\\
		0\\		
		W_{u_s}
	\end{matrix}\right]\,,
\end{align*} 
and $J\!=\!\begin{bmatrix}W_{out_{1_c}}&W_{out_{2_{cv}}}&W_{out_{2_{cx}}}\!-\!W_{out_{2_{cv}}}\!W_{out_{1_s}}\end{bmatrix}$ takes the role of the control gain. Note that the unknown controller parameters can be computed as $\begin{bmatrix}W_{out_{1_c}}&W_{out_{2_{c}}}\end{bmatrix}=JE^{-1}$, where $$E=\begin{bmatrix}
	I_{n_c}&0_{n_c,1}&0_{n_c,n_s}\\
	0_{1,n_c}&1&-W_{out_{1_s}}\\
	0_{n_s,n_c}&0_{n_s,1}&I_{n_s}
\end{bmatrix}\,.$$
According to Proposition \ref{lmicl}, the matrix $P$ must have the following block diagonal structure: $P=\text{diag}(P_{D_1},P_F,P_{D_2})$, where $P_{D_1}\in\mathbb{R}^{5\times 5}$ and $P_{D_2}\in\mathbb{R}^{3\times 3}$ are diagonal matrices, whereas $P_F\in\mathbb{R}^{6\times 6}$ is a full symmetric matrix. A feasible solution is returned by solving \eqref{lmi1}, where $W=I_{n_c+n_s+1}$, with YALMIP and MOSEK~\cite{lofberg2004yalmip,mosek}, ensuring that the closed-loop system enjoys the $\delta$ISS property due to Proposition \ref{lmicl}.

\begin{figure}[t]
	\centering
	\includegraphics[width=0.7\columnwidth]{./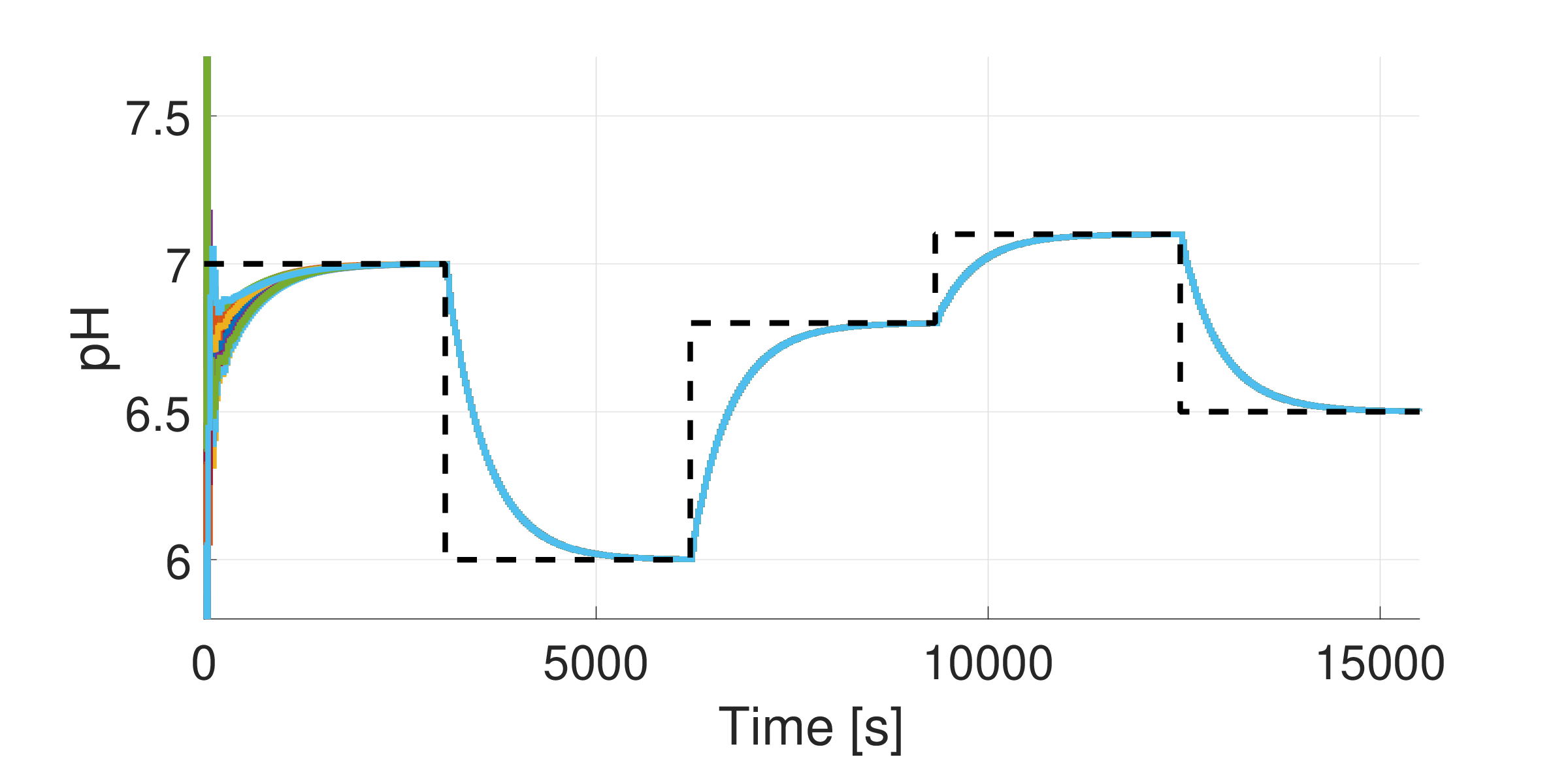}
	\caption{Denormalized output trajectories of the closed-loop discrete-time system starting from different initial conditions. Black dashed line: reference trajectory; colored lines: output trajectories for $20$ different initial conditions.}
	\label{outsim}
\end{figure} 

In Figure \ref{outsim} the reference tracking results of the closed-loop system starting from $20$ different random initial conditions are depicted, where we can see that the equilibria are asymptotically stable and the output trajectories converge to each other in view of the $\delta$ISS property. Moreover, due to the explicit integral action, a zero steady-state error is achieved. 

The previously tuned controller is also tested on the \textit{pH} process physics-based simulator. To this aim, some remarks are due. Firstly, in the control scheme, a denormalization of the control variable $u_s$ and a normalization of the output $y_s$ are performed upstream and downstream of the process, respectively. A normalization of the reference signal is also carried out. The same normalization parameters applied in the identification are employed. 

The second issue to be considered concerns the fact that the state measurements of the system \eqref{esnsim}, necessary for the state-feedback, are not available in this second case. Hence, a suitable observer is required. As suggested in Section \ref{obsd}, the following observer is tuned:
\begin{subequations}
	\label{esnallobs}
	\begin{align}
		\hat{x}_s(k+1)&=\zeta_s((W_{x_s}+W_{y_s}W_{out_{1_s}})\hat{x}_s(k)+W_{u_s}u_s(k)+L(y_s(k)-\hat{y}_s(k)))\,, \\
		\label{esnyobs}
		\hat{y}_s(k)&=W_{out_{1_s}}\hat{x}_s(k)\,,
	\end{align}
\end{subequations}
where the observer gain $L$ is designed using the result in Proposition \ref{lmiobs}. 

\begin{figure}[t]
	\centering
	\includegraphics[width=0.7\columnwidth]{./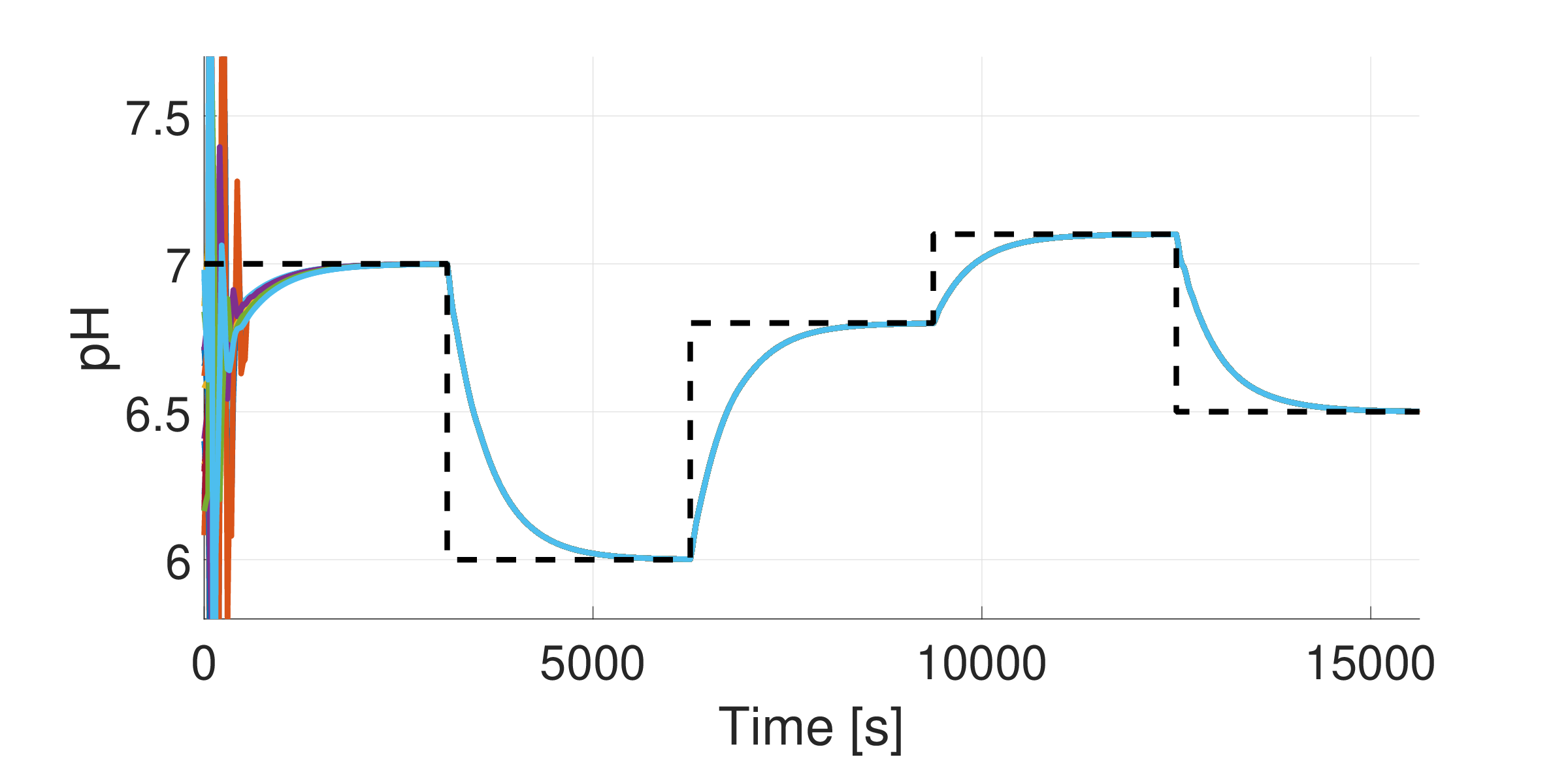}
	\caption{Output trajectories of the closed-loop system using the simulated \textit{pH} process and starting from different initial conditions. Black dashed line: reference trajectory; colored lines: output trajectories for $20$ different initial conditions.}
	\label{outpH}
\end{figure} 

In Figure \ref{outpH} the reference tracking results of the closed-loop system using the simulated \textit{pH} process and starting from $20$ different random initial conditions are represented, where we can see that the equilibria are asymptotically stable and, due to the explicit integral action, a zero steady-state error is also achieved.

\section{Conclusions}
\label{sec:conclusions}
In this paper, a $\delta$ISS condition for a generic class of RNNs has been proposed. The reduced conservativeness of this condition with respect to other conditions in the literature has been proven. Moreover, the use of this condition for control design purposes has been widely investigated. Simulation results have corroborated the effectiveness of the theoretical results.\\
Future work will first tackle a number of issues remained open in this work. Firstly, a possible extension to alternative conditions to the one in Theorem \ref{th2} in order to include other classes of RNNs (e.g., multi-layers NNARX or long short-term memory networks) will be addressed. Secondly, the development of a (possibly data-based) cost function which takes into account the desired dynamic performances of the control system will be investigated. Then, we will address the design of a suitable $\delta$ISS feedforward compensator to be used to achieve static precision in case an explicit integral action cannot be embedded in the control scheme. Furthermore, the convergence rate of the observer together with the analysis of the case in which a non-measurable disturbance acts on the system will be studied. Finally, the application of the theoretical results in this paper to an experimental apparatus will be carried out. 

\section*{Appendix}
\renewcommand*{\proofname}{Proof of Proposition \ref{Prop4}}
\begin{proof}
	Firstly, note that the condition $\|W_x^*\|<1$ is equivalent to the condition 
	\begin{align}
		\label{condid}
		{W_x^*}^TI_nW_x^*-I_n\prec0.
	\end{align} 
	Also note that, since  $L_{p}\leq1$, we can set $W=I_n$. Then, for ESNs, Theorem \ref{th2} states that system \eqref{esn} in the state-space form \eqref{nonlinclassall} is $\delta$ISS if $\exists P=P^T=\text{diag}(P_1,P_2)\succ0$, where $P_1\in\mathbb{R}^{\nu\times \nu}$ is diagonal and $P_2\in\mathbb{R}^{m\times m}$ is full, such that
	\begin{align*}
		\begin{split}
			A^TPA-P=\left[\begin{matrix}
				W_x^{*T}P_1W_x^*-P_1&W_x^{*T}P_1W_yW_{out_{2}}\\
				W_{out_{2}}^TW_y^TP_1W_x^*&W_{out_{2}}^TW_y^TP_1W_yW_{out_{2}}-P_2
			\end{matrix}\right]\prec0\,,
		\end{split}
	\end{align*}
	where $A$ is defined as in \eqref{Aesn}.\\
	We define a generic vector $v=\begin{bmatrix}v_1^T&v_2^T\end{bmatrix}^T$, where $v_1\in\mathbb{R}^{\nu}$ and $v_2\in\mathbb{R}^{m}$. Therefore, we want to prove that $v^T(A^TPA-P)v<0$ for any $v\neq0_{n,1}$. Hence, $v^T(A^TPA-P)v=v_1^T(W_x^{*T}P_1W_x^*-P_1)v_1 +2v_2^TW_{out_{2}}^TW_y^TP_1W_x^*v_1+v_2^T(W_{out_{2}}^TW_y^TP_1W_yW_{out_{2}}-P_2)v_2$. 
	Furthermore, in view of Property \ref{Prop1}, for any $\tau\neq0$,
	\begin{align*}
		v^T(A^TPA-P)v\leq v_1^T((1+\tau^2)W_x^{*T}P_1W_x^*-P_1)v_1+v_2^T((1+\frac{1}{\tau^2})W_{out_{2}}^TW_y^TP_1W_yW_{out_{2}}-P_2)v_2\,.
	\end{align*}
	Now, by setting $P_1=I_n$, it follows from \eqref{condid} that $(1+\tau^2)W_x^{*T}P_1W_x^*-P_1\prec0$ for a sufficiently small $\tau$. Moreover, note that $A_\tau^*=(1+\frac{1}{\tau^2})W_{out_{2}}^TW_y^TP_1W_yW_{out_{2}}\preceq\lambda_{max}(A_\tau^*)I_{m}$, where $\lambda_{max}(A_\tau^*)\geq0$. By setting $P_2=\lambda_pI_m$, with $\lambda_p>\lambda_{max}(A_\tau^*)$, we have that $(1+\frac{1}{\tau^2})W_{out_{2}}^TW_y^TP_1W_yW_{out_{2}}-P_2\prec0$. This implies that $v^T(A^TPA-P)v<0$ for any $v\neq0_{n,1}$, and the statement follows.
\end{proof}
\renewcommand*{\proofname}{Proof of Proposition \ref{Prop6}}
\begin{proof}
	The objective is to prove that the fulfilment of the assumption of Proposition \ref{deltaISSNNARX} implies the fulfilment of the assumptions of Theorem \ref{th2}. The latter, for a 1-layer NNARX, require the existence of a matrix $P=P^T\succ0$ such that
	$$\widetilde{A}^TP\widetilde{A}-P\prec0\,,$$
	where $\widetilde{A}=WA$, $A=\left[\begin{matrix}
		0_{\tau,l+\widetilde{m}}&I_{\tau}&0_{\tau,\nu}\\
		0_{l,l+\widetilde{m}}&0_{l,\tau}&W_0\\
		0_{\widetilde{m},l+\widetilde{m}}&0_{\widetilde{m},\tau}&0_{\widetilde{m},\nu}\\
		W_{\phi_1}&W_{\phi_2}&W_{\phi_3}W_0
	\end{matrix}\right]$, $W=\left[\begin{matrix}
			I_{n-\nu}&0_{n-\nu,\nu}\\
			0_{\nu,n-\nu}&L_pI_\nu\\	
		\end{matrix}\right],\  
		P=\left[\begin{matrix}
			\widetilde{P}&0_{n-\nu,\nu}\\
			0_{\nu,n-\nu}&P_D\\	
		\end{matrix}\right],$ $\widetilde{P}\in\mathbb{R}^{(n-\nu)\times (n-\nu)}$, and $P_D\in\mathbb{R}^{\nu\times \nu}$ is a diagonal matrix. For notational clarity, we recall that $n-\nu=\tau+l+\widetilde{m}$. Moreover, we can write $A=A_\phi\widetilde{W}_0$, by defining $$A_\phi=\left[\begin{matrix}
		0_{\tau,l+\widetilde{m}}&I_{\tau}&0_{\tau,l}\\
		0_{l,l+\widetilde{m}}&0_{l,\tau}&I_l\\
		0_{\widetilde{m},l+\widetilde{m}}&0_{\widetilde{m},\tau}&0_{\widetilde{m},l}\\
		W_{\phi_1}&W_{\phi_2}&W_{\phi_3}
	\end{matrix}\right]\,,$$
	and 
	$$\widetilde{W}_0=\left[\begin{matrix}
		I_{l+\widetilde{m}}&0_{l+\widetilde{m},\tau}&0_{l+\widetilde{m},\nu}\\
		0_{\tau,l+\widetilde{m}}&I_{\tau}&0_{\tau,\nu}\\
		0_{l,l+\widetilde{m}}&0_{l,\tau}&W_0
	\end{matrix}\right]\,.$$ Furthermore, we can define $\widetilde{A}_\phi=WA_\phi=\left[\begin{matrix}
		Q\\
		L_pW_\phi
	\end{matrix}\right]$, where $W_\phi=\begin{bmatrix}W_{\phi_1}&W_{\phi_2}&W_{\phi_3}\end{bmatrix}\,,$ and $$Q=\left[\begin{matrix}
		0_{\tau,l+\widetilde{m}}&I_{\tau}&0_{\tau,l}\\
		0_{l,l+\widetilde{m}}&0_{l,\tau}&I_l\\
		0_{\widetilde{m},l+\widetilde{m}}&0_{\widetilde{m},\tau}&0_{\widetilde{m},l}
	\end{matrix}\right]\,.$$ 
	\\
	Therefore, the assumptions of Theorem \ref{th2} can be rewritten as follows
	\begin{align}
	\widetilde{W}_0^T\widetilde{A}_\phi^TP\widetilde{A}_\phi\widetilde{W}_0-P&\prec0\,,\notag\\
	\widetilde{W}_0^T(Q^T\widetilde{P}Q+L_p^2W_\phi^TP_DW_\phi)\widetilde{W}_0-P&\prec0\,,\notag\\
	\label{eqp4}
	\widetilde{W}_0^TQ^T\widetilde{P}Q\widetilde{W}_0-P+L_p^2\widetilde{W}_0^TW_\phi^TP_DW_\phi\widetilde{W}_0&\prec0\,.
	\end{align}
	Now, let us choose a block diagonal $\widetilde{P}$, i.e., $\widetilde{P}=\widetilde{P}^T=\text{diag}(\widetilde{P}_1,\widetilde{P}_2,\widetilde{P}_3)$, where $\widetilde{P}_1\in\mathbb{R}^{\tau\times \tau}$, $\widetilde{P}_2\in\mathbb{R}^{l\times l}$, and \mbox{$\widetilde{P}_3\in\mathbb{R}^{\widetilde{m}\times \widetilde{m}}$}. With this choice, we can compute
	\begin{align*}
		\widetilde{W}_0^TQ^T\widetilde{P}Q\widetilde{W}_0=\widetilde{W}_0^T\left[\begin{matrix}
			0_{l+\widetilde{m},l+\widetilde{m}}&0_{l+\widetilde{m},\tau}&0_{l+\widetilde{m},l}\\
			0_{\tau,l+\widetilde{m}}&\widetilde{P}_1&0_{\tau,l}\\
			0_{l,l+\widetilde{m}}&0_{l,\tau}&\widetilde{P}_2
		\end{matrix}\right]\widetilde{W}_0=\left[\begin{matrix}
			\widetilde{P}_0&0_{n-\nu,\nu}\\
			0_{\nu,n-\nu}&W_0^T\widetilde{P}_2W_0
		\end{matrix}\right]\,,
	\end{align*}
	where $\widetilde{P}_0=\left[\begin{matrix}
		0_{l+\widetilde{m},l+\widetilde{m}}&0_{l+\widetilde{m},\tau}\\
		0_{\tau,l+\widetilde{m}}&\widetilde{P}_1\\
	\end{matrix}\right]$. Thus, we have that
	\begin{align*}
		\widetilde{W}_0^TQ^T\widetilde{P}Q\widetilde{W}_0-P=\left[\begin{matrix}
			\widetilde{P}_0-\widetilde{P}&0_{n-\nu,\nu}\\
			0_{\nu,n-\nu}&W_0^T\widetilde{P}_2W_0-P_D
		\end{matrix}\right]\,.
	\end{align*}
	Then, let us choose $$\widetilde{P}_1=\alpha\cdot\text{diag}(I_{l+\widetilde{m}},2I_{l+\widetilde{m}},\dots,(N-2)I_{l+\widetilde{m}},(N-1)I_{\widetilde{m}})\,,$$ $\widetilde{P}_2=\alpha(N-1)I_{l}$, $\widetilde{P}_3=\alpha NI_{\widetilde{m}}$, for any scalar $\alpha>0$, and $P_D=\beta I_{\nu}$, for any scalar $\beta>0$. Hence, according to the previous choices, we can rewrite \eqref{eqp4} as
	\begin{align*}
		\left[\begin{matrix}
			-\alpha I_{n-\nu}&0_{n-\nu,\nu}\\
			0_{\nu,n-\nu}&\alpha(N-1)W_0^TW_0-\beta I_{\nu}
		\end{matrix}\right]+\beta L_p^2\widetilde{W}_0^TW_\phi^TW_\phi\widetilde{W}_0\prec0\,.
	\end{align*}
	By adding and subtracting $\gamma\widetilde{W}_0^T\widetilde{W}_0$, for any scalar $\gamma>0$, the condition \eqref{eqp4} becomes
	\begin{align}
		\label{eq2p4}
		\left[\begin{matrix}
			-\alpha I_{n-\nu}&0_{n-\nu,\nu}\\
			0_{\nu,n-\nu}&\alpha(N-1)W_0^TW_0-\beta I_{\nu}
		\end{matrix}\right]+\gamma\widetilde{W}_0^T\widetilde{W}_0+\widetilde{W}_0^T(\beta L_p^2W_\phi^TW_\phi-\gamma I_{(l+\widetilde{m})N})\widetilde{W}_0\prec0\,.
	\end{align}
	Now, note that if both
	\begin{align}
		\label{eq3p4}
		\left[\begin{matrix}
			(\gamma-\alpha) I_{n-\nu}&0_{n-\nu,\nu}\\
			0_{\nu,n-\nu}&(\alpha(N-1)+\gamma)W_0^TW_0-\beta I_{\nu}
		\end{matrix}\right]&\prec0\\
		\label{eq4p4}
		\beta L_p^2W_\phi^TW_\phi-\gamma I_{(l+\widetilde{m})N}&\prec0
	\end{align}
	are fulfilled, then \eqref{eq2p4} certainly holds. Therefore, we want to prove that there exist $\alpha>0$, $\beta>0$, and $\gamma>0$ such that \eqref{eq3p4} and \eqref{eq4p4} hold provided that the assumption of Proposition \ref{deltaISSNNARX} is fulfilled. Firstly, note that \eqref{eq3p4} holds if and only if 
	\begin{align}
		\label{cnd1p4}
		\gamma&<\alpha\,,\\
		\label{cnd2p4}
		\|W_0\|^2=\lambda_{max}(W_0^TW_0)&<\frac{\beta}{\alpha(N-1)+\gamma}\,.
	\end{align}
	Secondly, \eqref{eq4p4} holds if and only if 
	\begin{align}
		\label{cnd3p4}
		\|W_\phi\|^2=\lambda_{max}(W_\phi^TW_\phi)&<\frac{\gamma}{\beta L_p^2}\,.
	\end{align}
	Furthermore, if \eqref{cnd1p4} holds, we have that \eqref{cnd2p4} is certainly fulfilled if 
	\begin{align}
		\label{cnd4p4}
		\|W_0\|^2&<\frac{\beta}{\alpha N}
	\end{align}
	holds. Hence, if conditions \eqref{cnd1p4}, \eqref{cnd3p4}, and \eqref{cnd4p4} are satisfied, we have that \eqref{eq2p4} holds, implying the fulfilment of the assumptions of Theorem \ref{th2}. Finally, to have \eqref{cnd1p4}, \eqref{cnd3p4}, and \eqref{cnd4p4} fulfilled, we have to prove the existence of positive $\alpha$, $\beta$, and $\gamma$ such that 
	\begin{align}
		\label{cnd5p4}
		\|W_\phi\|^2L_p^2<\frac{\gamma}{\beta}<\frac{\alpha}{\beta}<\frac{1}{\|W_0\|^2N}\,.
	\end{align}
	In particular, in view of the assumption of Proposition \ref{deltaISSNNARX}, i.e., $\|W_0\|\|W_\phi\|<\frac{1}{L_p\sqrt{N}}$, it follows that $\|W_\phi\|^2L_p^2<\frac{1}{\|W_0\|^2N}$. Thus, it is always possible to find positive real numbers $\alpha$, $\beta$, and $\gamma$ such that \eqref{cnd5p4} holds thanks to the completeness axiom of real numbers. This concludes the proof.
\end{proof}
\renewcommand*{\proofname}{Proof of Proposition \ref{prop:china2}}
\begin{proof}
	Firstly, if $E=0_{n,n}$ and $o_i = 1$ $\forall i \in \mathcal{L}$, we have that $\hat{M}=W|A|$, where $A=O\hat{A}$. Accordingly, the assumption of Proposition \ref{prop:china} requires the existence of a matrix $P=P^T \succ 0$ such that
	\begin{align}\label{eq:cond_china2}
		|A|^T \,W^T P\, W\,|A| \,- \,P \prec 0\,.
	\end{align}
	This condition represents also the stability condition for
	\begin{align}\label{eq:pos_system}
		\tilde{x}(k+1) = W\,|A| \,\tilde{x}(k)\,,
	\end{align}
	which is a discrete-time linear positive system. Hence, as stated in \cite[Theorem 15]{farina2000positive}, system \eqref{eq:pos_system} is asymptotically stable if and only if there exists a diagonal $P\succ 0$ fulfilling condition \eqref{eq:cond_china2}.
	
	Therefore, let us consider a matrix $P=\text{diag}(p_1,\dots,p_n) \in \mathbb{R}^{n \times n}$, with $p_i>0$. For any generic vector $v\in \mathbb{R}^n$, $v\neq0_{n,1}$, condition \eqref{eq:cond_china2} implies that 
	\begin{align*}
		v^T(|A|^T \,W^T P\, W\,|A| \,- \,P)\,v < 0\,,
	\end{align*}
	which becomes 
	\begin{align*}
		(i)\qquad\sum_{j=1}^n p_j L_{pj}^2 \bigg  (\sum_{i=1}^n |a_{ji}| v_i\bigg)^2 \;-\; \sum_{j=1}^n p_j v_j^2 <0\,.
	\end{align*}
	On the other hand, following the same reasoning, for all \mbox{$\tilde{v}\in\mathbb{R}^n$}, $\tilde{v}\neq0_{n,1}$, condition \eqref{lyaplin} in Theorem \ref{th2} with a diagonal $P$ is equivalent to
	\begin{align*}
		(ii)\qquad	\sum_{j=1}^n p_j L_{pj}^2 \bigg  (\sum_{i=1}^n a_{ji} \tilde{v}_i\bigg)^2 \;-\; \sum_{j=1}^n p_j \tilde{v}_j^2 <0\,.
	\end{align*}
	
	We want to show now that, if $(i)$ holds for any $v\in \mathbb{R}^n$, $v\neq0_{n,1}$, it follows that $(ii)$ holds for any $\tilde{v}\in\mathbb{R}^n$, $\tilde{v}\neq0_{n,1}$.
	
	Let us consider a generic $\tilde{v}\in \mathbb{R}^n$, $\tilde{v}\neq0_{n,1}$. It follows that $v=|\tilde{v}|$ satisfies $(i)$ by assumption. Also, note that
	\begin{align*}
		\bigg|\sum_{i=1}^n a_{ji}\, \tilde{v}_i\bigg| \leq \sum_{i=1}^n \Big|a_{ji}\, \tilde{v}_i \Big| = \sum_{i=1}^n |a_{ji}|\, | \tilde{v}_i|\, = \sum_{i=1}^n |a_{ji}|\, v_i \,,
	\end{align*}
	which implies that
	\begin{align}\label{eq:proof_china_1}
		\sum_{j=1}^n p_j L_{pj}^2 \bigg  (\sum_{i=1}^n a_{ji}\, \tilde{v}_i\bigg)^2 \leq 	\;\,\sum_{j=1}^n p_j L_{pj}^2 \bigg  (\sum_{i=1}^n |a_{ji}|\, v_i\bigg)^2 \,.
	\end{align}
	Moreover, note that  
	\begin{align}\label{eq:proof_china_2}
		\sum_{j=1}^n p_j \tilde{v}_j^2\, = \sum_{j=1}^n p_j |\tilde{v}_j|^2\, =  \sum_{j=1}^n p_j v_j^2 \,.
	\end{align}
	Equations \eqref{eq:proof_china_1} and \eqref{eq:proof_china_2} imply that, since $v=|\tilde{v}|$ verifies $(i)$, then $\tilde{v}$ will satisfy $(ii)$. This means that $(i)\implies(ii)$, which concludes the proof.
\end{proof}
\renewcommand*{\proofname}{Proof of Proposition \ref{cas}}
\begin{proof}
	The proof is carried out with reference to the series of two systems \eqref{nonlinclassall}, since the generalization to a generic number of systems follows straightforwardly by iterating the procedure. 
	
	Let us consider two systems in the class \eqref{nonlinclassall}, using the subscripts $1$ (system upstream) and $2$ (system downstream) to denote the corresponding matrices, respectively. According to Proposition \ref{ffp}, the series of the two systems is in the class~\eqref{nonlinclassall}, where $$A\!=\!\left[\begin{matrix}
		A_1&0_{n_1,n_2}\\
		B_2C_1&A_2
	\end{matrix}\right]\,,\ \ x(k)\!=\!\begin{bmatrix}x_1(k)\\x_2(k)\end{bmatrix}\,,\ \ f(\cdot)\!=\!\begin{bmatrix}f_{s_1}(\cdot)\\f_{s_2}(\cdot)\end{bmatrix},$$ for the overall series system. Accordingly, we have that 
	\begin{align*}
		\widetilde{A}=WA=\left[\begin{matrix}
			W_1&0_{n_1,n_2}\\
			0_{n_2,n_1}&W_2
		\end{matrix}\right]\left[\begin{matrix}
			A_1&0_{n_1,n_2}\\
			B_2C_1&A_2
		\end{matrix}\right]=\left[\begin{matrix}
			W_1A_1&0_{n_1,n_2}\\
			W_2B_2C_1&W_2A_2
		\end{matrix}\right]=\left[\begin{matrix}
			\widetilde{A}_1&0_{n_1,n_2}\\
			\widetilde{A}_{21}&\widetilde{A}_2
		\end{matrix}\right]\,.
	\end{align*}
	
	By assumption, the two systems fulfil the assumptions of Theorem \ref{th2}, i.e., $\exists P_1=P_1^T\succ0$ and $\exists P_2=P_2^T\succ0$ with the required structures such that \begin{align}
		\widetilde{A}_1^TP_1\widetilde{A}_1-P_1&\prec0\,, \label{cnd1}\\
		\widetilde{A}_2^TP_2\widetilde{A}_2-P_2&\prec0\,. \label{cnd2}
	\end{align}
	From \eqref{cnd2}, for any scalar $\rho>0$ , we have that $$\widetilde{A}_1^T\rho P_1\widetilde{A}_1-\rho P_1\prec0\,.$$
	
	Now, we want to prove that $\exists P=P^T\succ0$ with the required structure such that $\widetilde{A}^TP\widetilde{A}-P\prec0$. Firstly, we take $$P=P^T=\left[\begin{matrix}
		\rho P_1&0_{n_1,n_2}\\
		0_{n_2,n_1}&P_2
	\end{matrix}\right]\succ0\,,$$ which has the structure required by Theorem \ref{th2} on the basis of system nonlinearities. Then,
	\begin{align*}
		\begin{split}
			\widetilde{A}^TP\widetilde{A}-P=\left[\begin{matrix}
				\widetilde{A}_1^T\rho P_1\widetilde{A}_1+\widetilde{A}_{21}^TP_2\widetilde{A}_{21}-\rho P_1&\widetilde{A}_{21}^TP_2\widetilde{A}_2\\
				\widetilde{A}_2^TP_2\widetilde{A}_{21}&\widetilde{A}_2^TP_2\widetilde{A}_2-P_2
			\end{matrix}\right]\,.
		\end{split}
	\end{align*}
	We define a generic vector $v=\begin{bmatrix}v_1^T&v_2^T\end{bmatrix}^T$, where \mbox{$v_1\in\mathbb{R}^{n_1}$} and $v_2\in\mathbb{R}^{n_2}$. Therefore, we want to prove that $v^T(\widetilde{A}^TP\widetilde{A}-P)v<0$ for any $v\neq0_{n_1+n_2,1}$. Hence,
	\begin{align*}
		v^T(\widetilde{A}^TP\widetilde{A}-P)v=v_1^T(\widetilde{A}_1^T\rho P_1\widetilde{A}_1+\widetilde{A}_{21}^TP_2\widetilde{A}_{21}-\rho P_1)v_1+v_2^T(\widetilde{A}_2^TP_2\widetilde{A}_2-P_2)v_2+2v_2^T\widetilde{A}_2^TP_2\widetilde{A}_{21}v_1\,.
	\end{align*}
	In view of Property \ref{Prop1}, for any $\tau\neq0$, we have that
	\begin{align*}
		v^T(\widetilde{A}^TP\widetilde{A}-P)v\leq v_1^T(\widetilde{A}_1^T\rho P_1\widetilde{A}_1-\rho P_1+(1+\frac{1}{\tau^2})\widetilde{A}_{21}^TP_2\widetilde{A}_{21})v_1+v_2^T((1+\tau^2)\widetilde{A}_2^TP_2\widetilde{A}_2-P_2)v_2\,.
	\end{align*}
	Note that $(1+\tau^2)\widetilde{A}_2^TP_2\widetilde{A}_2-P_2\prec0$ in view of~\eqref{cnd2} for a sufficiently small $\tau$. Moreover, $A_\tau^*=(1+\frac{1}{\tau^2})\widetilde{A}_{21}^TP_2\widetilde{A}_{21}\preceq\lambda_{max}(A_\tau^*)I_{n_1}$, where $\lambda_{max}(A_\tau^*)\geq0$. Furthermore, $\widetilde{A}_1^T\rho P_1\widetilde{A}_1-\rho P_1\preceq \rho\lambda_{max}(A_1^*)I_{n_1}$, where $A_1^*=\widetilde{A}_1^TP_1\widetilde{A}_1-P_1$ and $\lambda_{max}(A_1^*)<0$ from \eqref{cnd1}. Hence,
	\begin{align*}
		v^T(\widetilde{A}^TP\widetilde{A}-P)v\leq v_1^T(\lambda_{max}(A_\tau^*)+\rho\lambda_{max}(A_1^*))I_{n_1}v_1+v_2^T((1+\tau^2)\widetilde{A}_2^TP_2\widetilde{A}_2-P_2)v_2<0
	\end{align*}
	for any $v\neq0_{n_1+n_2,1}$, since $\lambda_{max}(A_\tau^*)+\rho\lambda_{max}(A_1^*)<0$ for a sufficiently large $\rho$. This concludes the proof.
\end{proof}

%
%
%
%
%

\bibliographystyle{ieeetr}
\bibliography{Bibliografia.bib}

\end{document}